\documentclass{article}

\PassOptionsToPackage{numbers,sort&compress,square}{natbib}
 \usepackage[preprint]{neurips_2026}

\usepackage[utf8]{inputenc} 
\usepackage[T1]{fontenc}    
\usepackage{hyperref}       
\usepackage{url}            
\usepackage{booktabs}       
\usepackage{amsfonts}       
\usepackage{nicefrac}       
\usepackage{microtype}      
\usepackage{xcolor}         

\usepackage{graphicx} 
\usepackage{xcolor}
\usepackage{amsmath}
\usepackage{booktabs}
\usepackage{amsfonts}
\usepackage{algorithm}
\usepackage{algpseudocode}
\usepackage{float}  
\usepackage{placeins}

\usepackage{amsmath,amsfonts,amsthm,bm}
\usepackage{dsfont}
\usepackage{color}
\usepackage{mathtools}

\newcommand{\captiona}{{\em (a)}}
\newcommand{\captionb}{{\em (b)}}
\newcommand{\captionc}{{\em (c)}}
\newcommand{\captiond}{{\em (d)}}

\def\eqref#1{equation~\ref{#1}}

\def\ceil#1{\lceil #1 \rceil}

\def\1{\bm{1}}

\def\eps{{\epsilon}}

\def\mA{{\bm{A}}}

\DeclareMathAlphabet{\mathsfit}{\encodingdefault}{\sfdefault}{m}{sl}
\SetMathAlphabet{\mathsfit}{bold}{\encodingdefault}{\sfdefault}{bx}{n}

\newcommand{\E}{\mathbb{E}}

\newcommand{\R}{\mathbb{R}}

\DeclareMathOperator*{\argmax}{arg\,max}
\DeclareMathOperator*{\argmin}{arg\,min}

\let\hat\widehat

\def\given{{\,|\,}}

\newcommand{\bb}{\bm{b}}

\newcommand{\cB}{\mathcal{B}}

\newcommand{\cI}{\mathcal{I}}

\newcommand{\rbr}[1]{\left(#1\right)}

\newcommand{\cbr}[1]{\left\{#1\right\}}

\newcommand{\abr}[1]{\left|#1\right|}

\newtheorem{theorem}{Theorem}

\newtheorem{lemma}{Lemma}
\newtheorem{proposition}{Proposition}

\newtheorem{remark}{Remark}

\newtheorem{example}{Example}

\newtheorem{assumption}{Assumption}

\usepackage{hyperref}       
\usepackage{cleveref}       
\usepackage{crossreftools}
\Crefname{assumption}{Assumption}{Assumptions}   

\usepackage{amsmath,amssymb}

\usepackage{multirow}
\usepackage{graphicx}
\usepackage{spverbatim}

\algtext*{EndIf}
\algtext*{EndWhile}
\algtext*{EndFor}
\algtext*{EndLoop}
\algtext*{EndProcedure}
\algtext*{EndFunction}

\newcommand{\hq}{\hat{q}}
\newcommand{\dprior}{\delta_{\mu_0}}
\newcommand{\upbias}{\overline{\alpha}}
\newcommand{\underbias}{\underline{\alpha}}
\newcommand{\up}{P_{\mathrm{info}}}
\newcommand{\vertex}{v}
\newcommand{\movv}{\mathrm{MOV}}
\newcommand{\dist}{\operatorname{dist}}
\newcommand{\rank}{\operatorname{rank}}
\newcommand{\ones}{\mathbf{1}}
\newcommand{\pos}[1]{\left(#1\right)_{+}}

\newcommand{\states}{\Omega}
\newcommand{\actions}{A}
\newcommand{\signals}{S}
\newcommand{\simplex}[1]{\Delta(#1)}
\newcommand{\prior}{\mu_0}

\newcommand{\bias}{\alpha}
\newcommand{\tbias}{\alpha^{\star}}
\newcommand{\mech}{\pi}          
\newcommand{\postdist}{\rho}     
\newcommand{\scheme}{\tau}       
\newcommand{\post}{\nu}
\newcommand{\dpost}{\hat{\nu}}
\newcommand{\bestresp}[2]{a^\star\!\left(#1;#2\right)}
\newcommand{\areg}[2]{R_{#1}^{#2}}

\newcommand{\defaulta}{a_0}
\newcommand{\Prob}{\mathbb{P}}

\newcommand{\OPT}{\operatorname{OPT}}
\newcommand{\Reg}{\operatorname{Reg}}
\newcommand{\qth}{q_{\mathrm{th}}} 

\newcommand{\ivalue}[3]{V_{#1}^{#2}\!\left(#3\right)}
\newcommand{\iopt}[2]{\OPT_{#1}\!\left(#2\right)}
\newcommand{\ireg}[4]{\Reg_{#2}^{#1}\!\left(#3;#4\right)}

\newcommand{\Cbs}{C_{\mathrm{BS}}}
\newcommand{\Cse}{C_{\mathrm{SE}}}
\newcommand{\Clb}{c_{\mathrm{LB}}}
\newcommand{\Cgse}{C_{\mathrm{GSE}}}
\newcommand{\Cloc}{C_{\mathrm{loc}}}
\newcommand{\Cphase}{C_{\mathrm{phase}}}
\newcommand{\Crep}{C_{\mathrm{rep}}}
\newcommand{\Kdist}{K_{\mathrm{dist}}}
\newcommand{\Ksafe}{K_{\mathrm{safe}}}
\newcommand{\Kprobe}{K_{\mathrm{probe}}}
\newcommand{\Lnu}{L_{\nu}}
\newcommand{\pmin}{p_{\min}}

\newcommand{\Gmax}{G_{\max}}
\newcommand{\Dumax}{\Delta U_{\max}}
\newcommand{\cinf}{c_{\mathrm{info}}}

\title{Learning to Persuade a Biased Receiver}

\author{
  Yuqi Pan \\
  Harvard University \\
  \texttt{yuqipan@g.harvard.edu}
  \And
  Sadie Zhao \\
  Harvard University \\
  \texttt{sadie\_zhao@g.harvard.edu}
  \AND
  Milind Tambe \\
  Harvard University \\
  \texttt{milind\_tambe@harvard.edu}
  \And
  Yiling Chen \\
  Harvard University \\
  \texttt{yiling@seas.harvard.edu}
}

\begin{document}

\maketitle

\begin{abstract}
  We study a repeated information design setting in which the receiver, who is also the decision-maker, updates beliefs in a systematically biased way. More specifically, a distorted posterior in our model can be written as a convex combination of the prior and the Bayesian posterior, governed by a fixed but unknown parameter. Over repeated interactions, the sender chooses persuasive signaling schemes, observes only the receiver’s realized actions, and seeks to minimize regret relative to a full-information oracle that knows the receiver’s biased updating rule.
 We propose a safe exploration algorithm for learning the receiver’s bias while maintaining high persuasion value. The algorithm exploits the asymmetric cost of probing: conservative probes incur only local loss, whereas overly aggressive probes may lose the persuasive opportunity entirely.  For general finite state and action spaces and arbitrary bounded utilities, our method achieves $O(\log\log T)$ regret. A matching $\Omega(\log\log T)$ lower bound shows that this rate is optimal. We further discuss the influence on receiver welfare, as well as extensions to jointly unknown prior and bias, and contextual settings with time-varying priors and utilities.
\end{abstract}
\section{Introduction}
\label{sec:intro}
Information designers, such as AI platforms, LLM-based assistants, and automated warning systems, often observe utility-relevant state information and decide how to present it to a human decision-maker (receiver). 
Effective communication depends not only on how informative the signal is, but also on how the receiver updates her belief in response to algorithmic evidence. 
Classical Bayesian persuasion models the receiver as updating by Bayes' rule \cite{kamenica2011bayesian}. Yet when algorithmic systems advise human receivers, they may underweight algorithmic advice, consistent with algorithm aversion phenomenon \cite{dietvorst2015algorithm,mahmud2022influences}: humans may anchor on prior judgments or treat algorithmic explanations as only partially credible. Thus the same message can induce different actions across receivers, not because the evidence differs, but because receivers differ in how strongly they incorporate it \cite{dasgupta2020theory,bailey2023meta,ba2025over}.


As a running example, consider an online marketplace that communicates fraud-risk information to merchants through risk scores or warning labels. After seeing the message, a merchant decides whether to ship the order or request additional verification. The platform and the merchant both care about fraud, but their objectives are not identical: the platform also values marketplace-wide integrity and buyer trust, while each merchant trades off verification costs against fraud risk. Moreover, merchants may react differently to the same warning: one may treat it as strong evidence of fraud, while another may discount the warning and stay close to her prior assessment. If the platform knew a merchant's updating strength, it could calibrate messages accordingly.


Motivated by such settings, we study a repeated information-design problem with an unknown biased receiver.
In each of \(T\) rounds, the sender picks a signaling scheme that maps the state i.i.d. sampled from the prior to a randomized signal. 
After observing the realized signal, the receiver forms a biased posterior belief and takes a best-response action accordingly.
Here the receiver's biased updating rule is fixed across rounds but unknown to the sender, capturing repeated interactions with the same receiver or with a stable receiver type.
The sender observes this action and uses the history of past signals and actions to update future signaling schemes, whose goal is to achieve high long-run utility.


To quantify biased belief updating, we adopt a standard linear model \cite{epstein2010non,hagmann2017persuasion,tang2021bayesian, de2022non,chen2024bias}. If a signal induces a Bayesian posterior $\post$ from prior $\prior$, a receiver with bias level $\tbias\in[0,1]$ instead acts on a distorted belief $\tbias\post+(1-\tbias)\prior$.
The case $\tbias=1$ corresponds to Bayesian updating, while $\tbias=0$ means that the receiver completely ignores the signal and keeps the prior. 
Our setting is related to, but conceptually distinct from, existing work on persuasion with biased receivers \cite{de2022non,feng2024rationality,chen2024bias,kobayashi2025dynamic}.
Most closely, \citet{chen2024bias} study the same linear belief distortion, but their objective is diagnostic, i.e., deciding whether the receiver's bias exceeds a threshold. 
While in our setting, there exists a new exploration--exploitation tradeoff: a diagnostic signal may reveal \(\tbias\) but perform poorly for current utility, while a utility-maximizing signal may reveal little about how the receiver processes evidence. 


Moreover, feedback is indirect and discretized: an observed action is informative only when different candidate bias levels would induce different best responses under the realized posterior. The central challenge is therefore to learn the unknown updating strength from coarse action feedback without sacrificing too much persuasion utility along the way.

\paragraph{Our results.}
We first isolate the core structure in a binary-state, binary-action setting. There, the full-information optimal scheme is pinned down by the posterior cutoff at which the biased receiver switches actions. A standard binary search over this cutoff yields \(O(\log T)\) regret. Our main insight to obtain lower regret is that probing losses are asymmetric: a conservative probe that stays on the safe side still induces the desired action and loses only a local utility gap, whereas an aggressive probe may lose the persuasion opportunity entirely. 
This mirrors online posted-price auctions \citep{kleinberg2003value}. 
Exploiting this asymmetry, our Safe Exploration algorithm approaches the unknown cutoff using increasingly fine probes from the safe side, achieves \(O(\log\log T)\) regret.
We further prove a matching $\Omega(\log\log T)$ lower bound, which also implies an $\Omega(\log\log T)$ lower bound for the general setting. 
We then extend the safe-exploration idea to general finite state and action spaces with arbitrary bounded utilities. In this setting, there is no single posterior cutoff: each action corresponds to a polyhedral region in the Bayesian-posterior simplex, and these regions move with \(\tbias\). 
We construct interval-safe schemes whose support remains incentive-compatible throughout the current bias uncertainty interval, and perturb only binding movable constraints to create local one-dimensional probes while preserving Bayes plausibility. The resulting General Safe Exploration algorithm runs in polynomial time and achieves \(O(\log\log T)\) regret.
Finally, we discuss several extensions concerning receiver welfare, jointly unknown prior and bias, and contextual environments with time-varying priors and utilities.

\paragraph{Related work.}
Our work is closely related to the literature on non-Bayesian belief updating
\cite{edwards1968conservatism,bar1980base,grether1980bayes,benjamin2019errors,
ba2025over,taber2006motivated,cooney2019learning,cranford2020toward,tappin2020bayesian}, Bayesian persuasion and information design
\cite{kamenica2011bayesian,rayo2010optimal,gentzkow2016rothschild,bergemann2016information,dughmi2016algorithmic,dughmi2017algorithmic}, persuasion with biased receivers
\cite{de2022non,tang2021bayesian,feng2024rationality,chen2024bias,kobayashi2025dynamic},
and learning in information design
\cite{castiglioni2020online,castiglioni2021multi,feng2022online,harris2023algorithmic,
zu2021learning,Li_Lin_2025,wu2022sequential,bacchiocchi2024markov}.  Due to space limits, we defer a detailed discussion to Appendix~\ref{app:related-work}.

\section{Problem Setup}
\label{sec:model}
\paragraph{Primitives.}
Let $\states$ and $\actions$ be finite state and action spaces, and
$\signals$ a signal space. The common prior is
$\prior \in \simplex{\states}$. 
Sender and receiver utility functions
\(u_S,u_R:\actions\times\states\to\R\) are publicly known.
We write $\cI =(\states,\actions,\prior,u_S,u_R)$
for the known problem instance. The receiver's true bias
\(\tbias\in(0,1]\) is fixed but unknown to the sender.

A signaling scheme is a map \(\mech:\states\to\simplex{\signals}\).
The realized state \(\omega\in\states\) is observed by the sender but not the receiver.
The sender then draws a signal \(s\in\signals\) following \(\mech(\cdot\mid \omega)\) and sends it to the receiver. 
A standard Bayesian receiver forms the posterior belief \(\post_s\in\simplex{\states}\) from the signal $s$, where
\(
 \post_s(\omega)=
 \Prob[\omega| s]
 =
 \nicefrac{\prior(\omega)\mech(s\mid \omega)}
 {\sum_{\omega'\in\states}\prior(\omega')\mech(s\mid \omega')} .
\)
Equivalently, a signaling scheme can be viewed as a distribution over posteriors $\postdist \in \simplex{\simplex{\states}}$. 
We now introduce the well-known splitting lemma.
\begin{lemma}[Splitting Lemma \cite{kamenica2011bayesian}]
\label{lemma:splitting}
A distribution of posteriors $\postdist$ is Bayes-plausible (i.e., can be induced by some signaling scheme $\mech$) iff the expected posterior equals the prior:
$
\E_{\post \sim \postdist}[\post] = \prior.
$
\end{lemma}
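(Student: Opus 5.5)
Both directions are short computations, so I will prove the two implications separately, using the Bayesian posterior formula $\post_s(\omega)=\prior(\omega)\mech(s\mid\omega)/\Prob[s]$ from the setup. I will first handle a finite signal space and then extend to general $\postdist$.

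\emph{Necessity.} Fix a scheme $\mech$ and let $\Prob[s]=\sum_{\omega'}\prior(\omega')\mech(s\mid\omega')$ be the marginal probability of signal $s$. The induced distribution of posteriors $\postdist$ places mass $\Prob[s]$ on $\post_s$, aggregating signals that induce the same posterior. For each $\omega$,
\[
\E_{\post\sim\postdist}[\post(\omega)]=\sum_{s}\Prob[s]\,\post_s(\omega)=\sum_s \prior(\omega)\mech(s\mid\omega)=\prior(\omega),
\]
because $\mech(\cdot\mid\omega)$ is a probability distribution. Signals with $\Prob[s]=0$ carry no mass and can be ignored.

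\emph{Sufficiency.} Given $\postdist$ with $\E_{\post\sim\postdist}[\post]=\prior$, I will use the support of $\postdist$ as the signal space $\signals$ and define
\[
\mech(\post\mid\omega)=\frac{\postdist(\post)\,\post(\omega)}{\prior(\omega)}\quad\text{for all }\omega\text{ with }\prior(\omega)>0.
\]
States with $\prior(\omega)=0$ never occur, so $\mech(\cdot\mid\omega)$ can be set arbitrarily there. This $\mech$ is nonnegative, and Bayes plausibility gives $\sum_\post \mech(\post\mid\omega)=\E_\postdist[\post(\omega)]/\prior(\omega)=1$, so each $\mech(\cdot\mid\omega)$ is a valid distribution. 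Next I check that $\mech$ induces $\postdist$. The marginal of signal $\post$ is $\sum_\omega \postdist(\post)\post(\omega)=\postdist(\post)$, and the Bayesian posterior after seeing $\post$ is $\prior(\omega)\mech(\post\mid\omega)/\postdist(\post)=\post(\omega)$. Hence $\mech$ induces exactly $\postdist$.

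The only delicate step is the infinite-support case in sufficiency, where sums must be replaced by integrals. There I would define $\mech(d\post\mid\omega)=\post(\omega)\,\postdist(d\post)/\prior(\omega)$ as a measure and verify the same two identities by integration. This works because $\post(\omega)/\prior(\omega)$ is a density that integrates to one with respect to $\postdist$. Since all schemes used later in the paper have finite support, the finite version is enough for the paper's purposes.
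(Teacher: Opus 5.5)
Your proof is correct and is the standard Kamenica--Gentzkow construction $\mech(\post\mid\omega)=\postdist(\post)\post(\omega)/\prior(\omega)$; the paper does not reproduce it and simply cites the result (elsewhere it also assumes $\prior$ has full support without loss of generality). One step is left implicit: at states with $\prior(\omega)=0$, your marginal and posterior identities need $\post(\omega)=0$ for every support point, and this follows from $\E_{\postdist}[\post(\omega)]=\prior(\omega)=0$ together with $\post(\omega)\ge 0$.
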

Therefore, the sender equivalently chooses a Bayes-plausible distribution \(\postdist\in\simplex{\simplex{\states}}\) satisfying
\(\E_{\post\sim\postdist}[\post]=\prior\). 
We adopt this posterior-based view throughout the paper.

\paragraph{Biased belief updating.}
We consider an $\tbias$-biased receiver, where bias level \(\tbias\in(0,1]\) is fixed across rounds but unknown to the sender.
We adopt the widely-used linear distortion model \cite{epstein2010non,hagmann2017persuasion,tang2021bayesian, de2022non,chen2024bias}, which captures partial incorporation of signal evidence. Mathematically, given Bayesian posterior $\post$ and bias level $\bias \in [0,1]$, the receiver produces a distorted posterior $\dpost=(1-\bias)\prior + \bias \post$, and chooses the corresponding best action:
$$
\bestresp{\post}{\bias}
\in
\argmax_{a \in \actions}
\sum_{\omega \in \states} \left((1-\bias)\prior(\omega) + \bias \post(\omega)\right)u_R(a,\omega),
$$
with a fixed tie-breaking rule satisfying mild conditions, specified in \Cref{sec:general}.
Throughout the paper, we reserve $\post$ for Bayesian posteriors and $\dpost$ for distorted posteriors. Note that although the receiver optimizes with respect to the distorted posterior $\dpost$, the sender can only design the Bayesian posterior $\post$, so we always discuss the action region within the Bayesian posterior space.

\paragraph{Timing.}
The interaction proceeds for $T$ rounds.
Let \(h_0=\emptyset\) and $h_t = \{(\postdist_\tau,\omega_\tau,\post_\tau,a_\tau)\}_{\tau=1}^t$
denote the sender's history up to the end of round $t$, consisting of previous signaling schemes, realized states, induced posteriors, and receiver's adopted actions.
Let $\Pi$ denote the sender's adaptive policy, which is a mapping from histories to Bayes-plausible distributions over Bayesian posteriors. $\Pi$ may depend on $\cI$ and $T$, but not the unknown $\tbias$.
In $t = 1,2,\dots,T$, the interaction proceeds as follows:
\begin{enumerate}
    \item 
    Based on \(h_{t-1}\), the sender publicly commits to a Bayes-plausible distribution (equivalently, a signaling scheme) over posteriors
    $\postdist_t = \Pi(h_{t-1})$.
    \item 
    A state is drawn according to the prior:
    $
    \omega_t \sim \prior.
    $
    The sender observes $\omega_t$.

    \item 
    Given $\omega_t$, the sender draws a signal from the chosen signaling scheme and sends it to the receiver.
    The realized signal induces a posterior belief $\post_t \in \simplex{\states}$.

    \item 
    After observing the signal, the receiver forms the distorted posterior
    $(1-\tbias)\prior + \tbias \post_t
    $
    and chooses an action
    $
    a_t = \bestresp{\post_t}{\tbias}.
    $

    \item 
    The sender and receiver obtain utility
    $
    u_S(a_t,\omega_t)
    $
    and
    $
    u_R(a_t,\omega_t).
    $
    The sender observes the receiver's action $a_t$, and the history is updated to $h_t$.
\end{enumerate}

\paragraph{Objective and regret.}
For a fixed instance $\cI$ and bias $\bias$, define the sender's
reduced-form expected utility at Bayesian posterior $\post$ as
\[
\ivalue{\cI}{\bias}{\post}
=
\sum_{\omega\in\states}
\post(\omega)
u_S\!\left(\bestresp{\post}{\bias},\omega\right).
\]
The full-information one-round benchmark knows \(\tbias\) and plays an optimal Bayes-plausible posterior distribution in every round, which is defined as:
\[
\iopt{\cI}{\tbias}
=
\sup_{\postdist\in\simplex{\simplex{\states}}:
\E_{\post\sim\postdist}[\post]=\prior}
\E_{\post\sim\postdist}
\left[
\ivalue{\cI}{\tbias}{\post}
\right]
\]
Let \(\Prob_{t,\cI,\tbias}^{\Pi}\) denote the distribution of history 
\(h_t\) under \(\Pi,\cI,\tbias\). The expected regret is
\[
 \ireg{\cI}{T}{\Pi}{\tbias}
 =
 T\iopt{\cI}{\tbias}
 -
 \sum_{t=1}^T
 \E_{h_{t-1}\sim \Prob_{t-1,\cI,\tbias}^{\Pi}}
 \left[
 \E_{\post_t\sim\Pi(h_{t-1})}
 \left[
 \ivalue{\cI}{\tbias}{\post_t}
 \right]
 \right].
\]

\section{Warm-up: Safe Exploration via a One-Dimensional Threshold}\label{sec:binary}
\begin{figure}[h]
    \centering

    \begin{minipage}[t]{0.485\linewidth}
        \centering
        \includegraphics[width=\linewidth]{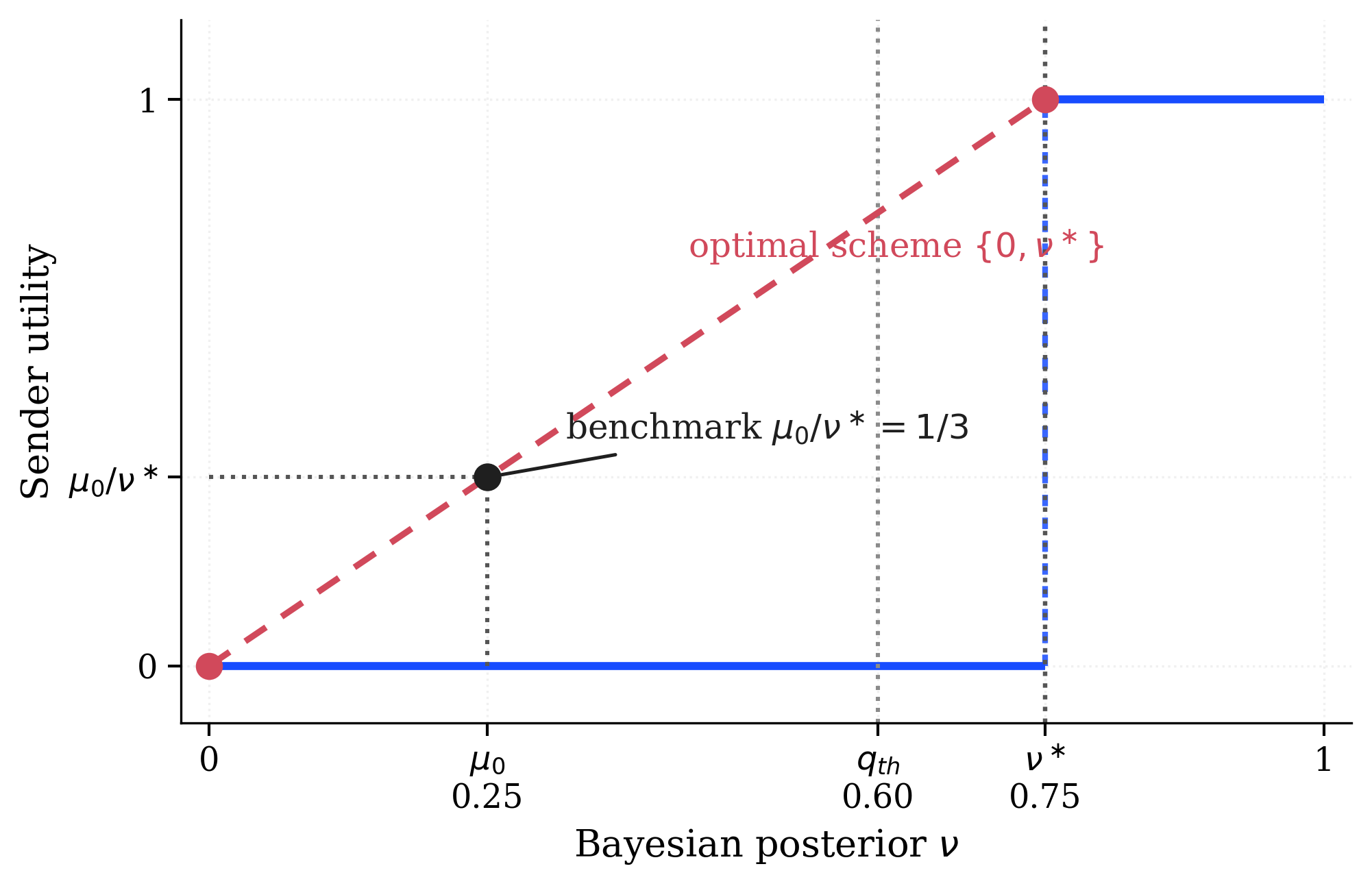}
        \\[-0.4ex]
        {\footnotesize \captiona\ Full-information benchmark}
    \end{minipage}
    \hfill
    \begin{minipage}[t]{0.485\linewidth}
        \centering
        \includegraphics[width=\linewidth]{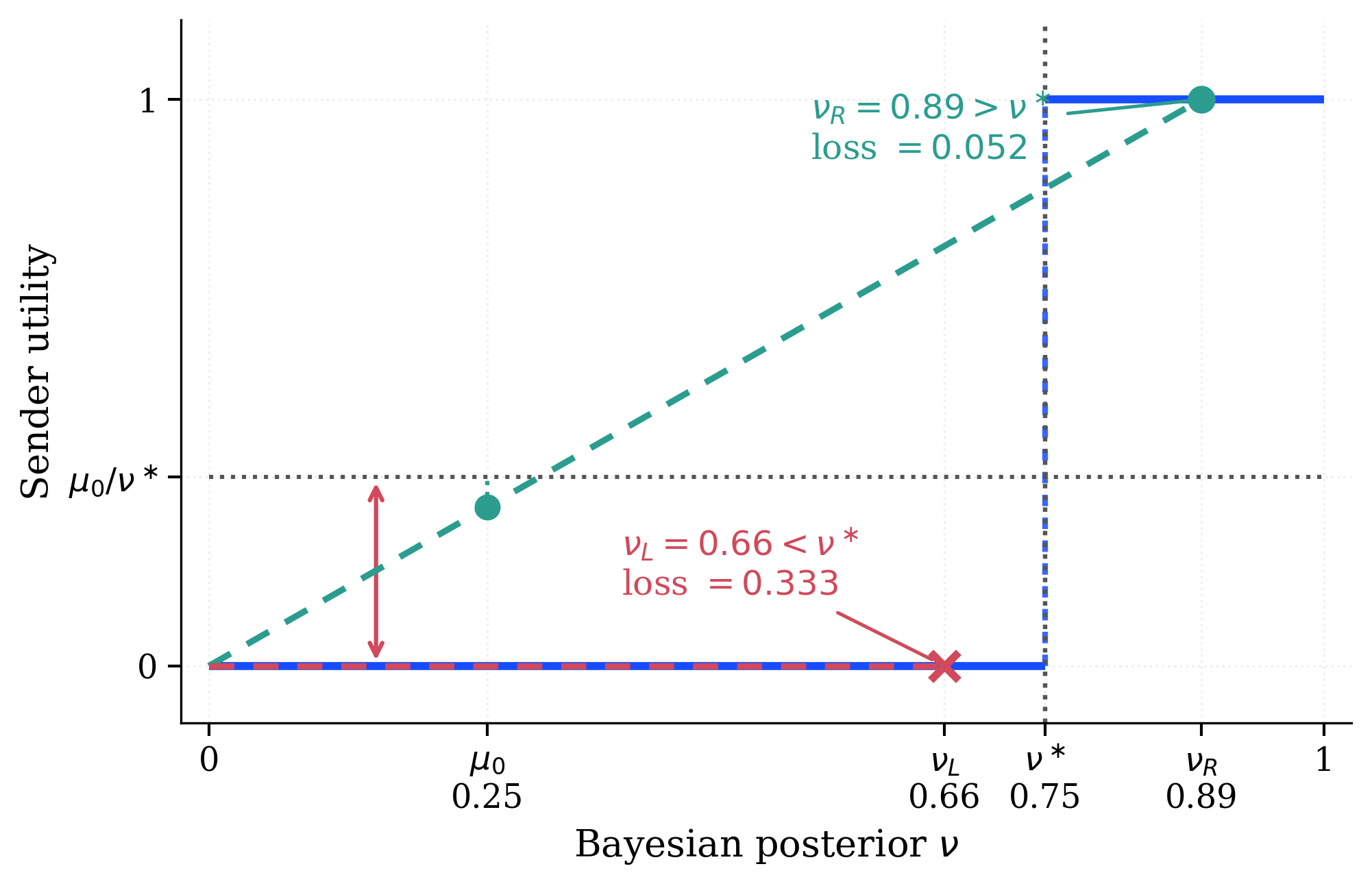}
        \\[-0.4ex]
        {\footnotesize \captionb\ Asymmetric probe loss}
    \end{minipage}

    \vspace{0.8ex}

    \begin{minipage}[t]{0.485\linewidth}
        \centering
        \includegraphics[width=\linewidth]
         {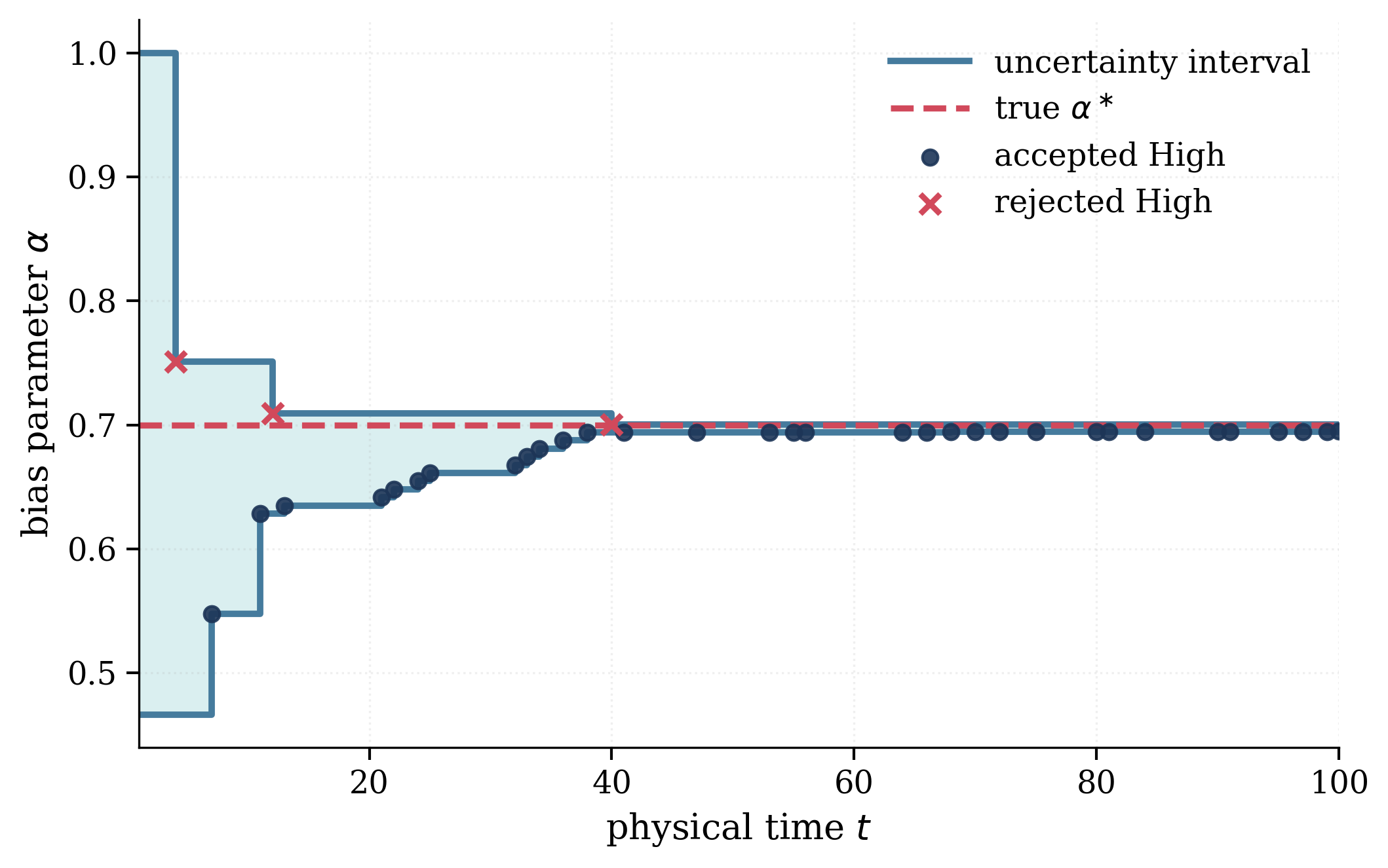}
        \\[-0.4ex]
        {\footnotesize \captionc\  Safe-side probing sample path}
    \end{minipage}
    \hfill
    \begin{minipage}[t]{0.485\linewidth}
        \centering
        \includegraphics[width=\linewidth]
        {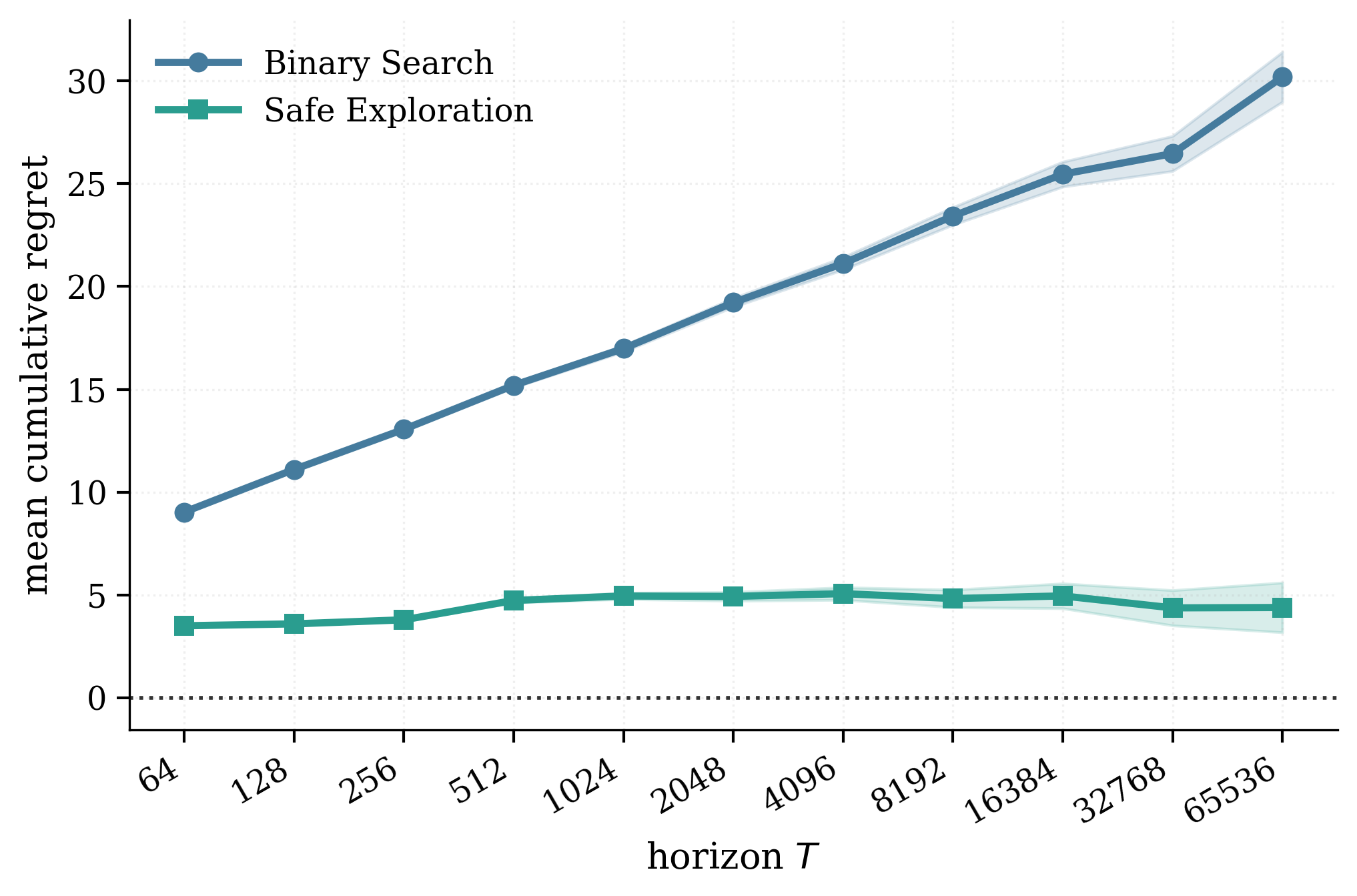}
        \\[-0.4ex]
        {\footnotesize \captiond\ $O(\log T)$ regret vs.\ $O(\log\log T)$ regret}
    \end{minipage}

    \caption{
    Binary instance
    $B=(\prior,\qth)=(0.25,0.60)$ and true bias $\tbias=0.70$.
    }
    \label{fig:binary-overview}
    \vspace{-1ex}
\end{figure}
We begin with a binary specialization that isolates the the main learning structure. Let $\states=\actions=\{0,1\}$. With a slight abuse of notation, write $\prior\in(0,1)$ for $\mu_0(1)$ and $\post\in[0,1]$ for $\post(1)$.
Moreover, we consider the canonical sender utility
\(u_S(a,\omega)=\mathds{1}\{a=1\}\), so the sender only cares about inducing action $1$.
On the receiver side, action $0$ is set as optimal at the prior without loss of generality, otherwise the sender can induce action $1$ without learning and regret is zero.

Since the receiver's utility difference between actions $1$ and $0$ is affine in the distorted belief, the action follows a cutoff rule. 
Let $\qth\in(\prior,1)$ denote the corresponding cutoff: action $1$ is chosen iff
\(
(1-\bias)\prior+\bias\post \ge \qth .
\)
Equivalently, this induces the Bayesian-posterior cutoff:
$
\post_{\cB}(\bias)
=
\prior+(\qth-\prior)/\bias$. Define $\areg{1}{\bias}=[\post_{\cB}(\bias),1]$ and $\areg{0}{\bias}=[0,\post_{\cB}(\bias)]$ as the  Bayesian posterior regions that weakly induce action $1$ and $0$ respectively.
For true bias level $\tbias$, we write $\post^\star = \post_{\cB}(\tbias)$.
\Cref{fig:binary-overview}(a) illustrates this structure, where the blue step function is the sender's reduced-form utility function over Bayesian posterior. We denote the binary instance by \(\cB=(\prior,\qth)\).

If $\tbias<\alpha_{\min}(\cB):=(\qth-\prior)/(1-\prior)$, then
$\areg{1}{\tbias}=\emptyset$ and persuasion is infeasible, so the regret is trivially zero. 
We focus on $\tbias\in[\alpha_{\min}(\cB),1]$, for any $\bias$, define corresponding two-point signaling scheme $\scheme(\bias)$
as the Bayes-plausible distribution over the two posteriors $\{0, \post_{\cB}(\bias)\}$ that places mass $1 - \prior / \post_{\cB}(\bias)$ on $0$ and $\prior / \post_{\cB}(\bias)$ on $\post_{\cB}(\bias)$.
The full-information optimum is attained by $\scheme(\tbias)$, with value \(\prior/\post_{\cB}(\tbias)\).
Intuitively, any posterior strictly above \(\post^\star\) is wasteful because it induces the same action while using more positive posterior mass in the Bayes-plausibility constraint \(\mathbb{E}[\post]=\prior\). Any posterior below \(\post^\star\) is also wasteful because it produces no utility while still using posterior mass. Thus the optimal scheme puts persuasive mass at the lowest useful posterior \(\post^\star\) and offsets it with the lowest posterior \(0\), which is the red line in \Cref{fig:binary-overview}(a). Learning the optimal scheme is therefore reduced to learning the unknown threshold \(\post^\star\).



%

\paragraph{Binary search (BS).}
Since \(\post_{\cB}(\alpha)\) is monotone in \(\alpha\), we equivalently maintain
an uncertainty interval for the unknown bias \(\alpha^\star\).
Let $J=[\underbias,\upbias]\in\tbias$ denote the current uncertain bias interval. 
Given \(J\), the natural safe choice is obtained by
restricting attention to posteriors that induce action \(1\) for every \(\alpha\in J\).
Define \emph{interval-safe region} $\areg{1}{J}=\bigcap_{\bias\in J}\areg{1}{\bias}$, i.e., the region of Bayesian posteriors that can robustly induce action $1$ within $J$.
Since \(\post_\cB(\bias)\) is decreasing in \(\bias\), $\areg{1}{J}=\areg{1}{\underbias}=[\post_\cB(\underbias),1]$.
Hence scheme \(\scheme(\underbias)\) is a safe choice: its high posterior robustly induces ideal action \(1\) for every \(\bias\in J\), with regret $\prior/\post_{\cB}(\tbias)-\prior/\post_{\cB}(\underbias)$, which vanishes as \(\underbias \uparrow \tbias\).

To achieve low regret, we shrink \(J\) by linking interval updates to realized actions. If the sender plays \(\scheme(m)\), then upon realizing the high posterior \(\post_{\cB}(m)\), the receiver chooses action \(1\) iff
$
\post_{\cB}(m)\ge \post_{\cB}(\tbias)
\Leftrightarrow
m\le \tbias,
$
Thus the sender can update \(J\) to either \([m,\upbias]\) or \([\underbias,m]\).
A natural algorithm is binary search: probe \(m=(\underbias+\upbias)/2\) by playing \(\scheme(m)\) until \(\post_{\cB}(m)\) is realized. Each informative realization halves \(J\). After \(O(\log T)\) such realizations, \(|J|=O(1/T)\), so the sender commits to the safe scheme \(\scheme(\underbias)\) thereafter. The full algorithm is in \Cref{app:binary-baseline} and guarantees \(O(\log T)\) regret.


\begin{proposition}\label{prop:binary-bsearch}
For every binary instance \(\cB=(\prior,\qth)\) and every bias level \(\tbias\in(0,1]\), $\ireg{T}{\cB}{\Pi^{\mathrm{BS}}}{\tbias}=O(\log T)$ holds with a hidden constant \(\Cbs(\cB)\).
\end{proposition}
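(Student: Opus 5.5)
We split the horizon into a search phase and a commitment phase and bound the regret of each separately. First we dispose of the trivial cases. If \(\tbias<\alpha_{\min}(\cB)\), persuasion is infeasible and every scheme has value \(0=\iopt{\cB}{\tbias}\), so regret is zero. Otherwise we initialize \(J=[\alpha_{\min}(\cB),1]\). The key invariant is that \(\tbias\in J\) always holds. This follows because, when \(\scheme(m)\) is played and the high posterior \(\post_{\cB}(m)\) is realized, the observed action is \(1\) iff \(m\le\tbias\). So the update \(J\leftarrow[m,\upbias]\) or \(J\leftarrow[\underbias,m]\) never excludes \(\tbias\); ties are harmless since the inequality is weak.

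\textbf{Search phase.} Each probe round plays \(\scheme(m)\) with \(m\) the midpoint of \(J\). The per-round regret is at most \(\iopt{\cB}{\tbias}\le 1\). A round is informative exactly when the high posterior is realized, which has probability \(\prior/\post_{\cB}(m)\ge\prior\). The number of rounds until each informative realization is therefore geometric with mean at most \(1/\prior\). We run \(K=\lceil\log_2 T\rceil\) halvings, so \(|J|\le 1/T\) afterward. By linearity of expectation (Wald), the expected number of search rounds is at most \(K/\prior\), giving search-phase regret \(O(\log T/\prior)\). If the horizon ends during the search phase, this only helps.

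\textbf{Commitment phase.} We play \(\scheme(\underbias)\) with \(\tbias-\underbias\le 1/T\). By the invariant, \(\underbias\le\tbias\), so \(\post_{\cB}(\underbias)\ge\post^\star\) and the high posterior induces action \(1\). The per-round regret is then
\[
\frac{\prior}{\post_{\cB}(\tbias)}-\frac{\prior}{\post_{\cB}(\underbias)}.
\]
To bound this difference, we control the Lipschitz constant of \(g(\bias)=\prior/\post_{\cB}(\bias)=\prior\bias/(\prior\bias+\qth-\prior)\) on \([\alpha_{\min},1]\). A direct computation gives
\[
g'(\bias)=\frac{\prior(\qth-\prior)}{(\prior\bias+\qth-\prior)^2}\le\frac{\prior}{\qth-\prior}.
\]
Hence the per-round regret is at most \(\prior/((\qth-\prior)T)\), and the total over at most \(T\) rounds is \(O(1)\).

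Combining the two phases gives \(\Cbs(\cB)=O(1/\prior+\prior/(\qth-\prior))\), up to constants.

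The main obstacle is the randomness in the number of rounds needed to obtain informative realizations, together with its interaction with the finite horizon. I would handle it by the Wald/geometric argument above, which only uses the uniform lower bound \(\prior\) on the success probability. The remaining subtlety is checking that each update preserves the invariant under the tie-breaking rule. This holds because the cutoff condition \((1-\bias)\prior+\bias\post\ge\qth\) is a weak inequality, consistent with \(\areg{1}{\bias}\) being closed.
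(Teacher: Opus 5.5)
Your proof is correct and uses the same search/commit decomposition as the paper, including the Lipschitz argument for the commitment phase. Your constant $\prior/(\qth-\prior)$ is cruder than the paper's $L_W(\cB)=\prior(1-\prior)^2/(\qth-\prior)$, which comes from using $\bias\ge\alpha_{\min}(\cB)$ in the denominator, but either suffices.

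The real difference is in the search phase. You bound every search round by regret $1$ and every waiting time by $1/\prior$, which gives $\lceil\log_2 T\rceil/\prior$. The paper instead computes each epoch's regret exactly as $\E[N_k]\,\prior/\post^\star-\mathbf{1}\{m_k\le\tbias\}$ and splits by the side of the probe:
\begin{itemize}
\item a conservative epoch ($m_k\le\tbias$) costs only $(\post_{\cB}(m_k)-\post^\star)/\post^\star$, which is proportional to the current interval length and so sums to a constant as the interval halves;
\item an aggressive epoch ($m_k>\tbias$) costs $\post_{\cB}(m_k)/\post^\star\le 1$.
\end{itemize}
This gives a bound of the form (number of halvings) $+\,O(1)$, with a $\prior$-free coefficient on the logarithm. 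It also shows that the $\log T$ term comes entirely from aggressive probes, which is exactly the asymmetry Safe Exploration later exploits. Your accounting is simpler and fully adequate for the $O(\log T)$ claim, but it hides that structural insight.

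One minor point: $\Pi^{\mathrm{BS}}$ as specified uses $M=\lceil 2\log_2 T\rceil$ informative realizations rather than $\lceil\log_2 T\rceil$. Your argument goes through verbatim with $M$ in place of $K$.
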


Throughout the paper, all asymptotic bounds are pointwise in the fixed learning instance. More precisely, the constant is fixed once \(I=(\states,\actions,\prior,u_S,u_R)\) and \(\tbias\) are fixed, and it never depends on the horizon \(T\). The constants in \Cref{sec:binary,sec:general} are collected in \Cref{app:binary-constants,app:general-constants} respectively.

\paragraph{Safe exploration (SE).}
Binary search is suboptimal because it ignores the asymmetric losses on the two sides of a probe, see \Cref{fig:binary-overview}(b). 
For a candidate $m\in J$ with probing scheme $\scheme(m)$, if $m\le \tbias$, the regret is only the local gap
$\prior/\post_{\cB}(\tbias)-\prior/\post_{\cB}(m)$, which vanishes as $m\uparrow\tbias$. 
If $m>\tbias$, however, the high posterior $\post_{\cB}(m)<\post^*$ no longer induces action $1$, causing regret $\prior/\post_{\cB}(\tbias)$.
Thus, a conservative probe loses only a local utility gap, whereas an overly aggressive probe may lose the entire persuasive opportunity.
This mirrors online posted-price auction \cite{kleinberg2003value}, where a seller repeatedly posts a price to a buyer with unknown valuation. There, posting below the buyer's value only loses the price gap, whereas posting above it loses the sale entirely. 

Our \emph{Safe Exploration} algorithm, \Cref{alg:astc}, exploits this asymmetry by probing from the safe side with a small step. In each phase, starting from \(J=[\underbias,\upbias]\), it scans
\(m=\underbias+\epsilon,\underbias+2\epsilon,\ldots\)
from left to right and plays \(\scheme(m)\), where \(\epsilon\) is initialized to \(1/2\) and squared after each interval update. The scan stops at the first informative realization for which the high posterior induces action \(0\), and updates \(J\) to the adjacent bracket \([m_{\mathrm{prev}},m]\). If no switch occurs before the next probe exceeds \(\upbias\), it updates \(J\) to \([m_{\mathrm{prev}},\upbias]\).
\Cref{fig:binary-overview}(c) illustrates the safe-side nature of the algorithm: each phase scans upward from the conservative endpoint, accumulating many low-loss safe probes near $\tbias$ before at most one aggressive probe closes the phase. 
Since the new interval length is at most the current step size and the next step size is squared, \(J\) shrinks doubly exponentially, so \(O(\log\log T)\) phases suffice to reach length \(1/T\). \Cref{alg:astc} guarantees \(O(\log\log T)\) regret in total. The simulation in \Cref{fig:binary-overview}(d) also matches the \(O(\log T)\) versus \(O(\log\log T)\) comparison.

Geometrically, each probe enlarges the interval-safe region \(\areg{1}{J}\). Starting from the safe scheme \(\scheme(\underbias)\), whose high posterior lies on the boundary, replacing \(\post(\underbias)\) with \(\post(m)\) pushes this boundary outward. As \(J\) shrinks, the interval-safe region expands toward the true action region \(\areg{1}{\tbias}\).

\begin{proposition}\label{prop:binary-regret-new}
For every binary instance \(\cB=(\prior,\qth)\) and every bias level \(\tbias\in(0,1]\), $\ireg{T}{\cB}{\Pi^{\mathrm{SE}}}{\tbias}=O(\log\log T)$ holds with a hidden constant \(\Cse(\cB)\).

\begin{algorithm}[h]
\caption{Safe Exploration (SE)}
\label{alg:astc}
\begin{algorithmic}[1]
\Require Horizon \(T\), binary instance $\cB=(\prior,\qth)$
\State $\underbias \gets \bias_{\min}(\cB)$, $\upbias\gets 1$, $\epsilon\gets 1/2$, $t\gets 1$

\Statex \textbf{Stage 1: Safe Exploration}
\While{\(\upbias-\underbias > T^{-1}\) and \(t\le T\)}
    \State  \(\; m_{\mathrm{prev}} \gets \underbias\), \(m \gets \underbias+\eps\)
    \While{\(m\le \upbias\) and \(t\le T\)}
        \State commit to and play \(\scheme(m)\)
        \If{the realized posterior is not \(0\)}
            \If{the receiver chooses action \(1\)}
                \State \(m_{\mathrm{prev}} \gets m\), \(\; m \gets m+\eps\)
            \Else
                \State \((\underbias,\upbias) \gets (m_{\mathrm{prev}},m)\), \(\;\eps \gets \eps^2\), \(\;t \gets t+1\)
                \State \textbf{break}
            \EndIf
        \EndIf
         \State \(t \gets t+1\)
    \EndWhile
    \If{\(m > \upbias\)}
        \State \((\underbias,\upbias) \gets (m_{\mathrm{prev}},\upbias)\), \(\;\eps \gets \eps^2\)
    \EndIf
\EndWhile

\Statex \textbf{Stage 2: Commitment}
\State Commit to \(\scheme(\underbias)\) for all remaining rounds.
\end{algorithmic}
\end{algorithm}

\end{proposition}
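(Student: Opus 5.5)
The plan is to bound the regret phase by phase, after a few preliminary observations. First, if \(\tbias<\alpha_{\min}(\cB)\) the regret is zero, so we assume \(\tbias\in[\alpha_{\min}(\cB),1]\).

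\textbf{Invariant and Lipschitz bound.} By induction over phases, \(\tbias\in J=[\underbias,\upbias]\) always holds. This follows from the equivalence \(\post_{\cB}(m)\ge\post^\star\Leftrightarrow m\le\tbias\): an action-\(1\) response certifies \(m\le\tbias\), and an action-\(0\) response certifies \(m>\tbias\). The weak tie-breaking toward action \(1\) sits at the cutoff, so the endpoints are handled consistently. Next, the map \(\bias\mapsto \prior/\post_{\cB}(\bias)\) is increasing and Lipschitz on \([\alpha_{\min}(\cB),1]\), with a constant \(L(\cB)\). Indeed \(\prior/\post_{\cB}(\bias)=\prior\bias/(\prior\bias+\qth-\prior)\), whose derivative is bounded by \(\prior(\qth-\prior)/(\qth-\prior)^2\). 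Hence a safe probe \(m\le\tbias\) costs at most \(L(\tbias-m)\) per round, and an aggressive probe \(m>\tbias\) costs at most \(1\).

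\textbf{Per-phase accounting.} Let \(\epsilon_k=2^{-2^{k}}\) denote the step in phase \(k\) (so \(\epsilon_0=1/2\)), and let \(J_k\) be the interval at the start of phase \(k\). We claim \(|J_{k+1}|\le\epsilon_k\): the new bracket is either \([m_{\mathrm{prev}},m]\), of length exactly \(\epsilon_k\), or \([m_{\mathrm{prev}},\upbias]\) with \(m_{\mathrm{prev}}+\epsilon_k>\upbias\). Also \(|J_0|\le1\).

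Within phase \(k\) the scan visits the probes \(m_j=\underbias+j\epsilon_k\) with \(m_j\le\upbias\), each held until the high posterior is realized. Each round with probe \(m\) realizes the high posterior with probability \(\prior/\post_{\cB}(m)\ge \prior\) (since \(\post_{\cB}\le1\)), so the expected number of rounds per probe is at most \(1/\prior\).

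\emph{Safe probes.} The safe probes satisfy \(m_j\le\tbias\le\underbias+|J_k|\), so there are at most \(|J_k|/\epsilon_k+1\) of them. Their losses are bounded by \(L(\tbias-m_j)\), and \(\tbias-m_j\) runs over distinct values spaced by \(\epsilon_k\) below \(|J_k|\). The sum over safe probes is therefore at most
\[
\frac{L}{\prior}\sum_{j\ge0}\bigl(|J_k|-j\epsilon_k\bigr)_+\le \frac{L}{\prior}\Bigl(\frac{|J_k|^2}{2\epsilon_k}+|J_k|\Bigr).
\]
For \(k\ge1\), using \(|J_k|\le\epsilon_{k-1}\) and \(\epsilon_{k-1}^2=\epsilon_k\), this is \(O(1)\). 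For \(k=0\) it is also \(O(1)\).

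\emph{Aggressive probes.} There is at most one aggressive probe per phase, since the phase ends at the first informative action-\(0\) response. Its cost is at most \(1/\prior\) in expectation.

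Each phase therefore costs \(O(1)\) expected regret, with a constant depending only on \(\cB\).

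\textbf{Number of phases and commitment.} Stage 1 ends once \(|J|\le 1/T\). Since \(|J_{k}|\le 2^{-2^{k-1}}\), this happens after \(K\le\lceil\log_2\log_2 T\rceil+2\) phases, giving \(O(\log\log T)\) regret for Stage 1. In Stage 2 we commit to \(\scheme(\underbias)\) with \(\tbias-\underbias\le 1/T\), so the per-round loss is at most \(L/T\) and the total over at most \(T\) rounds is at most \(L\). Summing the two stages gives \(\Cse(\cB)\log\log T\).

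\textbf{Main obstacle.} The delicate step is the safe-probe sum. A naive bound, counting \(|J_k|/\epsilon_k\) probes each costing up to \(L|J_k|\), gives \(L\epsilon_{k-1}^2/\epsilon_k=O(1)\) only because \(\epsilon\) is squared exactly and \(|J_k|\le\epsilon_{k-1}\). So I would verify carefully that the breaking rule always produces a bracket of length at most the current step. I would also check that realizations of posterior \(0\) (which give no information but cost nothing beyond the scheme's expected gap) are correctly charged through the \(1/\prior\) expected-duration argument via Wald's identity.
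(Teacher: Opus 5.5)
Your proposal is correct and follows essentially the same route as the paper. Both arguments use phase-by-phase accounting in which the safe probes' losses sum to $O(|J_k|^2/\epsilon_k+|J_k|)$ and each phase has at most one aggressive probe. Both then use $|J_{k+1}|\le\epsilon_k$ with squared steps to get $O(\log\log T)$ phases, plus an $O(1)$ commitment cost.

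The one small difference is how you bound each probe's epoch. You multiply the Lipschitz constant of $\prior/\post_{\cB}(\cdot)$ by the crude expected length $1/\prior$. The paper instead computes the epoch regret exactly as $\post_{\cB}(m)/\post^\star-1$ for a safe probe and $\post_{\cB}(m)/\post^\star\le 1$ for an aggressive one. This only changes the $\cB$-dependent constant, not the rate.
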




\paragraph{Lower bound.}
We now prove a matching \(\Omega(\log\log T)\) lower bound, showing that \Cref{alg:astc} is asymptotically optimal. The proof is inspired by \citet{kleinberg2003value}, but must handle a new difficulty: information about the unknown threshold arrives only through informative signal realizations, while regret is incurred in every physical round. We apply Yao's principle \cite{yao1977probabilistic} to reduce the minimax lower bound to deterministic sender strategies under a carefully chosen bias distribution, and then use martingale tools to translate the bound back to horizon \(T\). This yields the following theorem, which also gives an \(\Omega(\log\log T)\) lower bound for the general case in \Cref{sec:general}.

\begin{theorem}[lower bound]
\label{thm:lowerbound}
For any binary instance $\cB=(\prior,\qth)$ and any
randomized sender strategy $\Pi$, $\sup_{\tbias\in[0,1]}
\ireg{\cB}{T}{\Pi}{\tbias}=\Omega(\log\log T)$ holds with a hidden constant $\Clb(\cB)$.
\end{theorem}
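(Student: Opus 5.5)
We plan to adapt the Kleinberg--Leighton lower bound for posted prices to the binary persuasion setting, working through Yao's principle. Fix $\cB=(\prior,\qth)$. We reparametrize by the Bayesian cutoff $\post^\star=\post_{\cB}(\tbias)\in[\qth,1]$ for $\tbias\in[\alpha_{\min}(\cB),1]$; this map is a smooth bijection with derivative bounded away from $0$ and $\infty$. The per-round optimum is $\prior/\post^\star$.

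The first step is a reduction of the sender's schemes. Any Bayes-plausible $\postdist$ earns $\Prob_{\postdist}[\post\ge \post^\star]$. The only information an action reveals is whether the realized posterior $\post$ lies above or below $\post^\star$, and that information arrives only when $\post\ge\qth$ is realized. We will show that a sender who plays $\postdist$ in a round is weakly dominated, in both regret and information, by a sender who plays a two-point scheme $\scheme$ on $\{0,p\}$ for a suitable threshold $p$ and then receives the comparison outcome $\mathds{1}\{p\ge\post^\star\}$ with probability $\prior/p$. To make this domination work, we let the dominating sender also learn the comparisons at every support point of $\postdist$ for free. A more careful alternative is to charge per-round regret against the mass placed below $\post^\star$ and the slack placed above it.

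The key regret accounting per round is as follows. A probe at $p<\post^\star$ with positive mass costs at least a constant $c(\cB)>0$, because the persuasive mass is wasted. A safe probe at $p\ge\post^\star$ costs $\prior/\post^\star-\prior/p\ge c'(p-\post^\star)$. Mixtures cost at least the corresponding weighted combination, so regret per round lower-bounds the Kleinberg--Leighton cost of a posted price. Informative realizations occur with probability at most $1$ per round, while regret is charged every round. Having fewer informative events can only help the lower bound, but we must make sure that a sender who "waits" also pays. We intend to handle this with the optional stopping theorem, applied to the martingale counting informative realizations minus their conditional probabilities. This converts a bound stated in terms of informative queries into a bound in physical rounds, with a constant loss depending on $\prior$.

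The core construction follows Kleinberg--Leighton. Draw $\post^\star$ from a prior $D$ on $[\qth,1]$ built hierarchically over $k\asymp\log\log T$ scales, with interval lengths $\ell_{j+1}=\ell_j^2$ roughly (equivalently, a density proportional to $1/(1-x)$-type weights on nested dyadic structure, as in their proof). Then show that any deterministic sender must, at each scale $j$, either (i) incur an aggressive-probe loss $\Omega(1)$ with constant probability, or (ii) stay conservative by $\Omega(\ell_j)$ for roughly $\ell_j^{-1}\cdot\ell_{j}$-many rounds, giving $\Omega(1)$ expected regret at that scale. Summing over the $k$ scales gives $\Omega(\log\log T)$. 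Yao's principle then yields $\sup_{\tbias}\ireg{\cB}{T}{\Pi}{\tbias}\ge \E_{D}[\cdot]=\Omega(\log\log T)$ for randomized $\Pi$. The main obstacle is step (ii) together with the martingale translation. Kleinberg--Leighton's argument uses exactly one observation per round, whereas here observations arrive randomly with probability $\prior/p\ge\prior$. We must verify that the per-scale $\Omega(1)$ bound survives this random thinning and the freedom to mix thresholds within a round. We would first try a coupling in which every round reveals the comparison, which only strengthens the sender, and hope the cost reduction above makes the Kleinberg--Leighton bound apply verbatim.
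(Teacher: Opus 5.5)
\paragraph{The gap: reducing general schemes to two-point schemes.} Your reduction from arbitrary Bayes-plausible schemes to two-point schemes fails in two ways.

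First, no single $\scheme(p)$ weakly dominates a general $\postdist$ for every $\post^\star$. Take the equal mixture of $\scheme(p_1)$ and $\scheme(p_2)$ with $\qth\le p_1<p_2$. Matching its regret on $\post^\star\in(p_1,p_2]$ forces $p\ge p_2$. Matching it on $\post^\star\le p_1$ forces $1/p\ge\frac12(1/p_1+1/p_2)$, i.e.\ $p<p_2$.

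Second, and fatally, your patch of giving the sender the comparison $\ones\{\post\ge\post^\star\}$ at \emph{every} support point of $\postdist$ for free trivializes the problem. In round one the sender can put mass $\epsilon\le\prior/(2T)$ on each of $T$ equally spaced posteriors in $[\qth,1]$ and the rest on $\{0,1\}$. This costs only $O(1)$ regret, yet reveals $\post^\star$ to precision $(1-\qth)/T$. Committing to $\scheme$ at the smallest accepted point then gives $O(1)$ total regret. So no $\Omega(\log\log T)$ bound can hold in your relaxed game. What the relaxation throws away is exactly what makes the real game hard: a tiny-mass atom is almost never realized, so its comparison is almost never observed.

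\paragraph{The fix.} Keep one comparison per round. For the atoms $\post_j\ge\qth$ of $\postdist$, set $\lambda_j=\postdist(\post_j)\post_j/\prior$; Bayes plausibility gives $\sum_j\lambda_j\le1$. Play the mixture $\sum_j\lambda_j\scheme(\post_j)$ and put the leftover weight on $\scheme(\bias_{\min}(\cB))$, whose high posterior is $1$. This mixture's expected utility is pointwise at least that of $\postdist$.

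Now sample a single threshold $\post_j$ with probability $\lambda_j$ and reveal its comparison with probability $\prior/\post_j$. Then posterior $\post_j$ appears with probability exactly $\postdist(\post_j)$. All other outcomes carry no information about $\post^\star$, and the sender can generate them itself. Yao's principle then absorbs the per-round randomization. The paper implicitly makes the same two-point restriction but never grants extra information.

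\paragraph{What your coupling buys once this is fixed.} Your ``reveal the comparison every round'' coupling is then valid, and cleaner than the paper's route. Set $x=\prior/p$ and $v=\prior/\post^\star$. The regret of $\scheme(p)$ is exactly $v-x$ if $x\le v$ and $v$ otherwise. So the coupled game is precisely Kleinberg--Leighton's fixed-valuation posted-price game with $v\in[\prior,\prior/\qth]$. This removes the paper's event-time bookkeeping: optional stopping for phase lengths and Azuma for $N\ge\prior T/2$.

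\paragraph{What you still have to supply.} You must still carry out the Kleinberg--Leighton argument, and your sketch misstates it in three places.
\begin{enumerate}
\item It uses a uniform prior (the paper takes $\post^\star$ uniform on $[\qth,1]$), not a hierarchical $1/(1-x)$ one.
\item Phase $k$ ends after $M_k=2^{2^k}-1$ acceptances or one rejection. Hence after phases $0,\dots,k-1$ there are at most $\prod_{j<k}(M_j+1)=2^{2^k-1}$ outcome cells, so with probability at least $1/2$ the feasible interval still has length $\gtrsim 2^{-2^k}$. This cell-counting step is what drives the bound, and you never state it.
\item In the conservative branch you need about $M_k\approx 1/|I_k|$ rounds, each costing $\Omega(|I_k|)$. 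That gives $\Omega(1)$ per phase; ``$\ell_j^{-1}\cdot\ell_j$'' rounds would not.
\end{enumerate}
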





\section{General Setting: Safe Exploration in Moving Posterior Geometry}\label{sec:general}
The binary case shows why safe exploration beats binary search: probing losses are asymmetric. 
We now similarly define interval-safe action region in the general setting.
For any action $a\in\actions$ and bias $\bias>0$, let $\areg{a}{\bias}$ be the region of Bayesian posteriors that weakly induce action $a$:
\[
\areg{a}{\bias}
=
\left\{
\post\in\simplex{\states}:
\Delta u_{a,a'}^\top \post
\ge
\frac{\bias-1}{\bias}\Delta u_{a,a'}^\top \prior=b_{a,a'}(\bias),
\quad \forall a'\neq a
\right\}.
\]
Here $\Delta u_{a,a'} \in \mathbb{R}^{|\states|}$
is defined component-wise by
$
\bigl(\Delta u_{a,a'}\bigr)(\omega)=u_R(a,\omega)-u_R(a',\omega).
$
Each inequality is a weak incentive-compatibility (IC) constraint, ensuring that $\bias$-biased receiver weakly prefers $a$ to $a'$.
For an interval $J=[\underbias,\upbias]$, define the interval-safe region, i.e., the set of Bayesian posteriors under which action $a$ remains weakly IC for all $\bias\in J$:
\begin{equation}
    \areg{a}{J}=
    \bigcap_{\bias\in J}\areg{a}{\bias}
    =
    \Bigl\{
    \post\in\simplex{\states}:
    \Delta u_{a,a'}^\top\post\ge b_{a,a'}(J)=\max\cbr{b_{a,a'}(\underbias), b_{a,a'}(\upbias)},\ \forall a'\neq a
    \Bigr\}
    \label{eq:interval-safe}
\end{equation}
The last equality follows because $b_{a,a'}(\bias)$ is monotone in $\bias$, with derivative $\Delta u_{a,a'}^\top\prior/\bias^2$.  Thus the binding endpoint can differ across constraints: if $\Delta u_{a,a'}^\top\prior>0$, the safe constraint is determined by $\upbias$, while if $\Delta u_{a,a'}^\top\prior<0$, it is determined by $\underbias$.  This is the main geometric difference from the binary case.  In \Cref{fig:moving-action-regions}, for example, the default-action region $\areg{\defaulta}{\bias}$ is smaller in the high-bias panel, while the region for $a_2$ moves in the opposite direction.

\begin{figure}[ht]
    \centering
    \begin{minipage}[t]{0.45\linewidth}
        \centering
        \includegraphics[width=\linewidth]{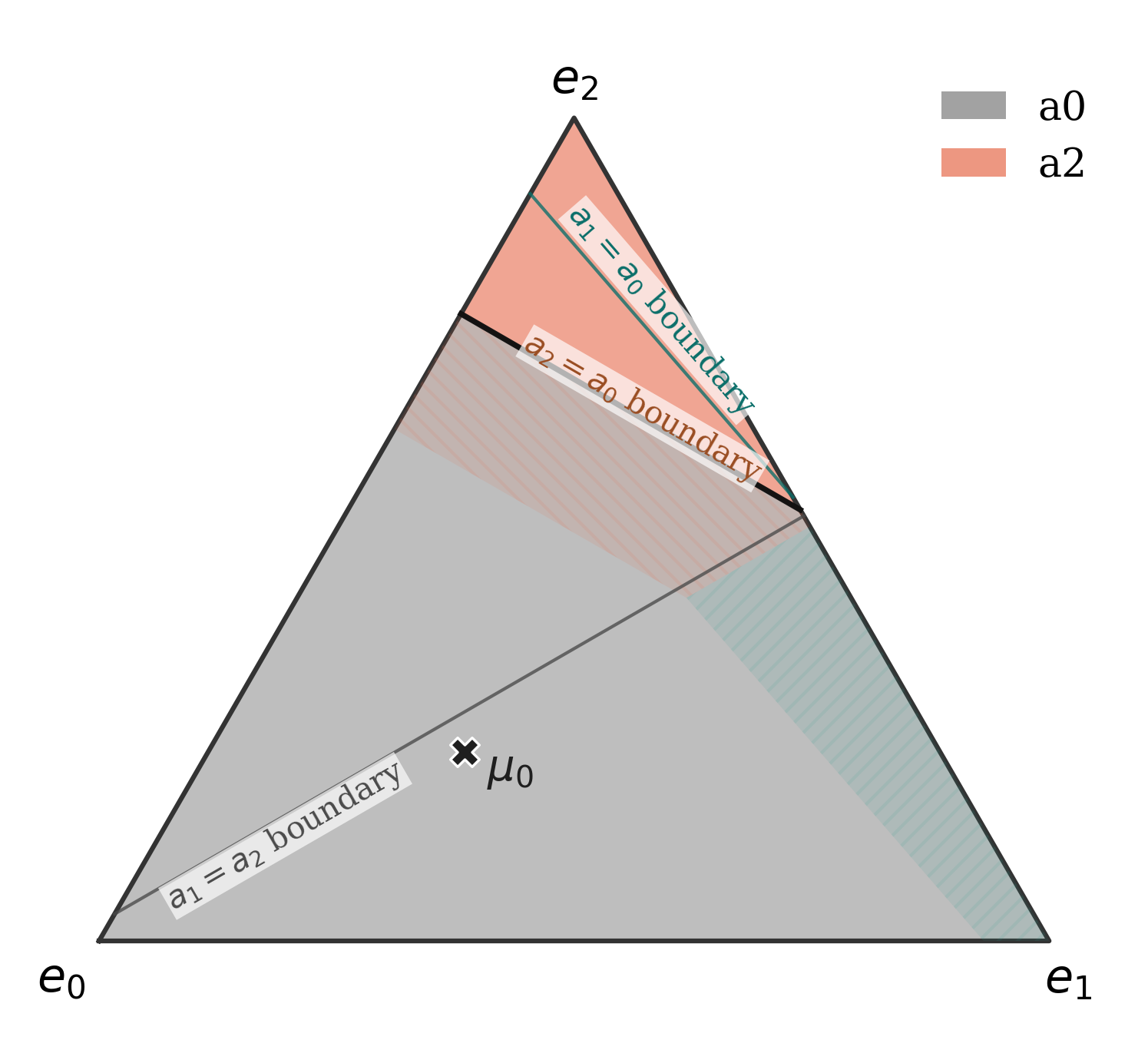}
        {\small (a) $\bias_{\mathrm{low}}=0.55$: $R^{\bias_{\mathrm{low}}}_{a_1}=\emptyset$}
    \end{minipage}
    \hfill
    \begin{minipage}[t]{0.45\linewidth}
        \centering
        \includegraphics[width=\linewidth]{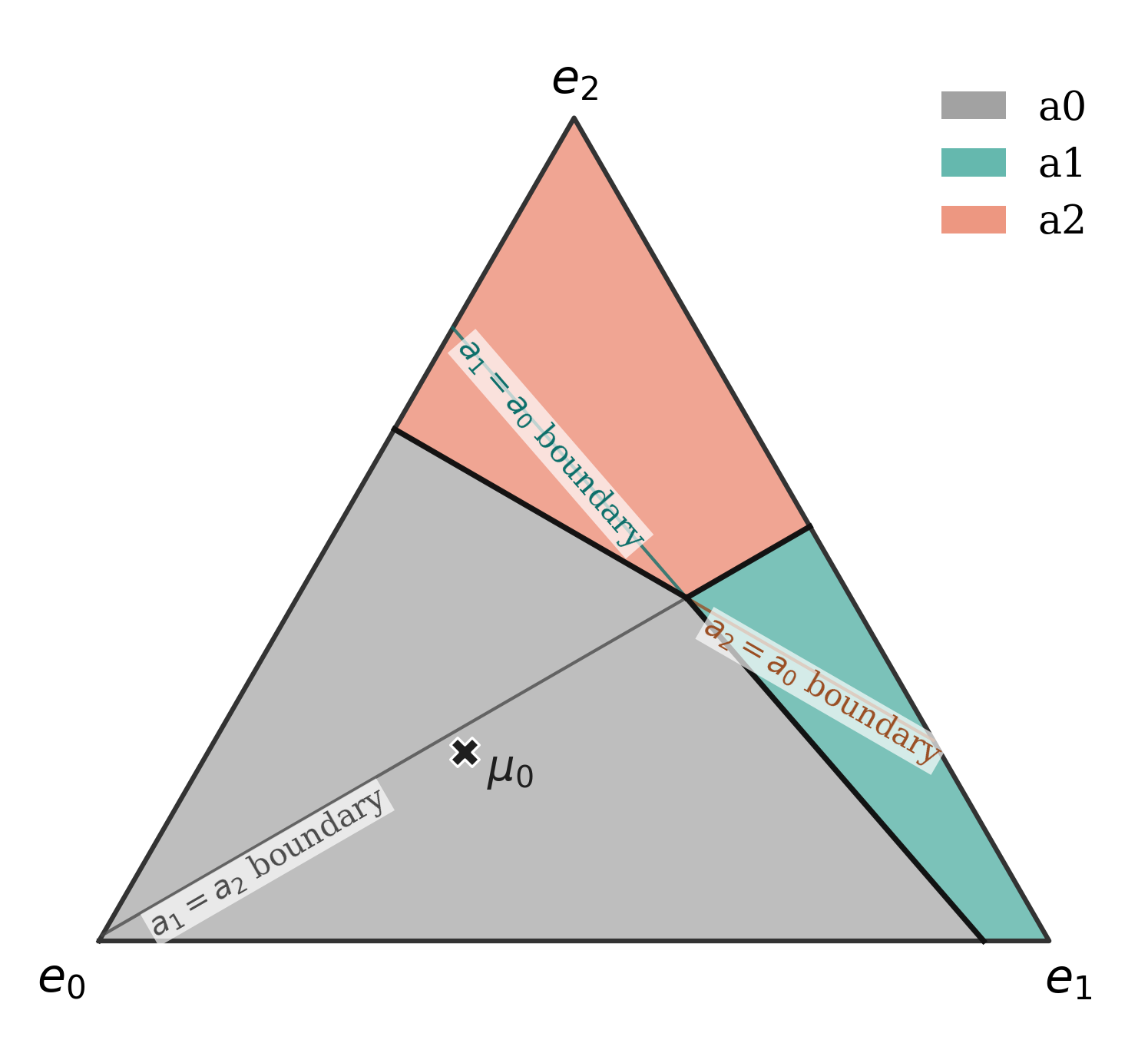}
        {\small (b) $\bias_{\mathrm{high}}=0.85$: $R^{\bias_{\mathrm{high}}}_{a_1}\neq\emptyset$}
    \end{minipage}
    \caption{
  Action regions for $\prior=(0.50,0.27,0.23)$.
In state order $(e_0,e_1,e_2)$,
$u_R(a_0,\cdot)=(0,0,0)$,
$u_R(a_1,\cdot)=(-2.1,0.3,0.9)$, and
$u_R(a_2,\cdot)=(-2.1,-0.3,1.5)$.}
    \label{fig:moving-action-regions}
\end{figure}


We now specify the condition for tie-breaking rules. Define the set of relevant actions:
\[
\actions_{\mathrm{rel}}
=
\Bigl\{a\in\actions:
\exists \post\in\simplex{\states}\text{ that }a\text{ uniquely maximizes }
\sum_{\omega\in\states}(\tbias\post(\omega)+(1-\tbias)\prior(\omega))u_R(a,\omega)
\Bigr\}.
\]
Actions outside $\actions_{\rm rel}$ are weakly dominated for the receiver: they are never unique best responses under the true bias.
We impose the following relevant-action tie-breaking convention: the receiver's tie-breaking rule never selects any action in $\actions\setminus\actions_{\rm rel}$.  This restriction can be understood as a robustness requirement on the regret benchmark.  If tie-only actions were allowed, the full-information benchmark could target a posterior lying exactly on the corresponding indifference set, while a learner who does not perfectly know $\tbias$ could not robustly induce the same action.  
In \Cref{app:relevant_action}, we show that without this restriction, low regret cannot be guaranteed.
Also, note that although $\actions_{\rm rel}$ depends on the unknown \(\tbias\), the sender does not need to know this specific set as \Cref{algo:G-SETC} is run over the original action set \(\actions\). With additional strict-feasibility check defined in \Cref{sec:interval-safe}, the sender can identify $\actions_{\rm rel}$ without knowing $\tbias$. 
Among relevant actions, the receiver breaks ties in favor of the sender, as a standard rule in Bayesian persuasion \cite{kamenica2011bayesian,dughmi2016algorithmic}.

We also keep one standard assumption from \citet{chen2024bias} as below.
\begin{assumption}[Unique Default Action \cite{chen2024bias}]
\label{ass:strict_default}
    There is a unique action $\defaulta$ that maximizes the expected receiver utility based on the prior $\prior:|\arg\max_{a\in \actions}\{\sum_{\omega\in \states }\prior(\omega)u_R(a,\omega)\}|=1$.
\end{assumption}

We propose the \emph{General Safe Exploration (GSE)} algorithm.
The algorithm proceeds in three stages, corresponding to pure exploration, a balanced exploration--exploitation stage, and pure exploitation.
First, the localization stage quickly shrinks the bias interval to length $O(1/\log T)$ so that interval-safe regions are not overly conservative.  
Second, the safe exploration stage follows the same phase structure as \Cref{alg:astc}, to shrink the interval quadratically until its length reaches $O(1/T)$.  
Third, the commitment stage plays an interval-safe scheme.
The resulting regret is $O(\log\log T)$ in total.


\begin{algorithm}[h]
\caption{General Safe Exploration (GSE)}
\label{algo:G-SETC}
\begin{algorithmic}[1]
\Require Horizon $T$, general instance $\cI=(\states,\actions,\prior,u_S,u_R)$
\State Initialize the bias interval $J\leftarrow[\bias_{\min}(\cI),1]$.
\State \textbf{Stage 1: Localization}
\While{$|J|>1/\log T$}
\State Run \textsc{ThresholdTest}$(J,m)$ where $m$ is the midpoint of $J$ and update $J$. \Comment{via \Cref{alg:threshold_test}}
\EndWhile
\State \textbf{Stage 2: Safe exploration}
\State Set $J_0\gets J, r\gets 0$.
\While{$|J_r|>T^{-1}$}
    \State Run \textsc{SafeExplore}$(J_r)$ to obtain $(J_{r+1},\tau^{J_r})$. \Comment{via \Cref{alg:sefi-short}}
    \If{$J_{r+1}=J_r$}
        \State break \Comment{The current safe scheme is already optimal}
    \EndIf
    \State $r\gets r+1$
\EndWhile
\State \textbf{Stage 3: Commitment}
\State Compute and commit to the current interval-safe scheme for all remaining rounds.
\end{algorithmic}
\end{algorithm}

\begin{theorem}[General safe exploration]\label{thm:generalbound}
For every instance \(\cI=(\states,\actions,\prior,u_S,u_R)\) satisfying \Cref{ass:strict_default} and every $\tbias\in(0,1]$, \Cref{algo:G-SETC} runs in polynomial time with respect to \(\cI\) and \(T\). Moreover, $\ireg{\cI}{T}{\Pi^{\mathrm{GSE}}}{\tbias}=O(\log\log T)$ holds with a hidden constant \(\Cgse(\cI,\tbias)\).
\end{theorem}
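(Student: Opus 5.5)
The plan is to decompose the regret of \Cref{algo:G-SETC} by stage and bound each term separately. Two structural facts, established first, drive everything. \emph{(i) Lipschitz continuity of the interval-safe optimum.} Let $\OPT^{\mathrm{safe}}(J)$ denote the value of the best Bayes-plausible scheme whose support lies in $\bigcup_a \areg{a}{J}$, with each posterior assigned to its designated action. Then
\[
\iopt{\cI}{\tbias}-\OPT^{\mathrm{safe}}(J)\le \Lnu\,|J|
\]
whenever $\tbias\in J$ and $|J|$ is below a threshold depending on $(\cI,\tbias)$. \emph{(ii) Polynomial-time computability.} The interval-safe scheme is computable by a polynomial-size LP, since by \eqref{eq:interval-safe} each region $\areg{a}{J}$ is a polytope cut out by $|\actions|-1$ linear constraints. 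Fact (ii) follows from the standard one-signal-per-action LP of \citet{dughmi2016algorithmic}, with the thresholds $b_{a,a'}(J)$ replacing $b_{a,a'}(\tbias)$.

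For fact (i), I would take an optimal full-information solution supported only on relevant actions; this uses the relevant-action tie-breaking convention. Each constraint offset satisfies
\[
|b_{a,a'}(\alpha)-b_{a,a'}(\tbias)|\le |\Delta u_{a,a'}^\top\prior|\,|J|/\bias_{\min}^2 .
\]
So every posterior must be moved by $O(|J|)$ to restore interval-safety. I would do this by mixing each posterior toward a strictly interior point of its region, which exists because $a\in\actions_{\rm rel}$, and then rebalancing Bayes plausibility through the prior. By \Cref{ass:strict_default} the prior lies strictly inside $\areg{\defaulta}{\bias}$ for all $\bias$, so it can absorb the residual mass. Each of these moves costs $O(|J|)$ in value. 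This is LP sensitivity, with constants depending on the minimum slack, which gives $\Lnu$ and the $\tbias$-dependence of $\Cgse$.

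\textbf{Stage 1 (localization).} I would show that each \textsc{ThresholdTest}$(J,m)$ decides whether $\tbias\lessgtr m$ using an expected number of rounds bounded by a constant $\Kprobe(\cI)$. The test scheme places a positive, $T$-independent probability $\pmin$ on a posterior lying on one side of a moving constraint whose action flips exactly at $\bias=m$. Each such round costs at most $\Gmax$ regret. Halving from length $1$ to $1/\log T$ takes $O(\log\log T)$ tests, so Stage 1 contributes $O(\log\log T)$.

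\textbf{Stage 2 (safe exploration).} I would mirror the proof of \Cref{prop:binary-regret-new}. In phase $r$, with step size $\epsilon_r$ satisfying $\epsilon_{r+1}=\epsilon_r^2$, \textsc{SafeExplore} plays the interval-safe scheme for $J_r$, perturbed along a single binding movable constraint so that one posterior (chosen with probability at least $\pmin$) sits at the candidate $m$ while the others stay safe. Bayes plausibility is preserved by compensating mass on the default-action posterior.

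Within a phase there are two kinds of probes:
\begin{itemize}
\item \emph{Safe probes} ($m\le\tbias$, or the analogous side for decreasing constraints) cost, by fact (i), at most $\Lnu(\tbias-m)+\Lnu|J_r|$. There are at most $|J_r|/\epsilon_r$ of them, each repeated about $1/\pmin$ rounds in expectation.
\item \emph{The single aggressive probe} costs $O(\Gmax/\pmin)$.
\end{itemize}
Summing, the phase regret is
\[
O\bigl((|J_r|/\epsilon_r)\cdot |J_r|\bigr)+O(1).
\]
Since $|J_{r+1}|\le\epsilon_r$ and Stage 1 guarantees $|J_0|\le 1/\log T$, choosing the initial step size so that $|J_r|\le\epsilon_{r-1}=\sqrt{\epsilon_r}$ makes each phase cost $O(1)$. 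The number of phases is $O(\log\log T)$, as in the binary case.

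\textbf{Stage 3 (commitment).} Once $|J|\le 1/T$, fact (i) gives at most $T\cdot\Lnu/T=O(1)$ regret. The early break when $J_{r+1}=J_r$ costs zero additional regret.

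The main obstacle is the multidimensional probe in Stage 2. Because different constraints bind at different endpoints, the perturbed scheme must change IC status for exactly one moving constraint, keep every other support posterior interval-safe, and still realize with probability bounded below. It also must change the value only linearly in $|m-\tbias|$, which is what makes safe probes cheap. My first attempt would be to choose a vertex of the LP optimum and identify a binding constraint whose threshold moves with $\bias$. I would then show, via the interior points guaranteed by relevance together with \Cref{ass:strict_default}, that a one-dimensional segment through that vertex crosses the $\tbias$-boundary transversally. The slope of that crossing bounds $\Cloc$ and ties $\pmin$ to the instance. If no binding constraint moves, the safe scheme is already optimal, which is exactly the break condition in \Cref{algo:G-SETC}.
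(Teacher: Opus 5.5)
Your Stage 2 accounting has a genuine gap. Every bound there assumes the probed posterior of the interval-safe optimizer has probability at least a constant $\pmin$, uniformly over all intervals $J_r$. This covers safe probes repeated about $1/\pmin$ rounds and the aggressive probe costing $O(\Gmax/\pmin)$. Nothing you prove gives this lower bound. The Stage 1 constant does not transfer: the threshold-test LP maximizes informative mass, whereas the safe LP maximizes sender utility. On shrinking $J\ni\tbias$, nothing prevents the safe LP from returning optimizers whose mass on vertices with a movable binding constraint is positive but tends to $0$. Your fallback, that the safe scheme is optimal when no binding constraint moves, covers only the case $\up(\scheme_{\mathrm{vtx}}^J)=0$ exactly. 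With vanishing but positive $\up$, the waiting time is of order $1/\up$ or larger. Multiplying it by a per-round loss of $O(|J_r|)$ over $O(1/|J_r|)$ safe probes, or by $\Dumax$ for the aggressive probe, gives no $O(1)$ phase bound.

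The missing idea is the paper's dichotomy (\Cref{prop:samplecomplexity}), proved by compactness over the finite vertex-index representation. For $|J|$ below an instance constant, either every interval-safe vertex-supported optimizer has $\up\ge c$, or some true-bias optimal scheme has $\up=0$. In the second case that scheme stays interval-safe for all small $J$, so $\OPT^{\mathrm{safe}}(J)=\iopt{\cI}{\tbias}$ and there is no safe--optimal gap at all. In that branch you then need both losses to scale with $\up$: the safe-probe loss is $O(\up\,|J|)$, and the aggressive-probe loss is $O(\up)$ because only the perturbed informative mass can misfire. This scaling cancels the $O(1/\up)$ waiting time.

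That scaling exposes a second hole. Your fact (i) bounds only $\iopt{\cI}{\tbias}-\OPT^{\mathrm{safe}}(J)$. Unlike the binary case, the general probe is not the safe optimum of a sub-interval, so charging a safe probe $O(|J_r|)$ ``by fact (i)'' is unjustified, and you leave the probe construction open. Your idea of compensating on a default-action posterior is the right one; what is missing is quantitative control. The paper takes the probe posterior to be the Euclidean projection of the vertex onto the set cut out by three groups of constraints: the simplex constraints, the non-selected IC constraints at $J$, and the selected constraint with right-hand side $b_{a,a'}(m)$. The other actions therefore stay excluded by construction. \Cref{lem:rhs-stability}, a Hoffman-type bound, gives displacement $O(|J|)$. \Cref{lemma:utility_stability_changed} then restores Bayes plausibility with one correction posterior at distance $\dprior/2$ from $\prior$, for a total utility loss of $O(\up\,|J|)$. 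The factor $\up$, not merely $|J|$, is exactly what the second branch of the dichotomy requires. (Minor: the per-round regret bound is $\Dumax$; in the paper $\Gmax$ denotes $\max_{a\neq a'}|\Delta u_{a,a'}^\top\prior|$.)
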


\subsection{Localization Stage}
\label{sec:localization}
\Cref{fig:moving-action-regions} shows why starting safe exploration on a wide interval can be overly conservative. The interval-safe region $\areg{a_1}{J}$ over a wide interval $J$ is empty, even though $a_1$ may be relevant and utility-improving at the true high bias.
The localization stage first shrinks the interval so that interval safety does not discard such actions. Specifically, the goal in this stage is localizing \(\tbias\) to an \(O(1/\log T)\)-length interval as quickly as possible, without caring about utility loss.
We use the threshold test of \citet{chen2024bias} (see \Cref{alg:threshold_test}): for each candidate bias threshold \(\beta\), it solves LP (see \eqref{eq:direct-lp}) to construct the signaling scheme, where most posteriors lie on the default/non-default indifference boundary at \(\beta\), so the realized action reveals whether \(\tbias\ge \beta\) or \(\tbias<\beta\).
We show that each test is informative with constant probability, so binary search reaches length \(O(1/\log T)\) after \(O(\log\log T)\) informative comparisons and incurs \(O(\log\log T)\) total regret.
Details are deferred to \Cref{app:localization}.

\subsection{Safe Exploration Stage}
\label{sec:safe_exploration}
This stage adapts the safe exploration principle of \Cref{alg:astc} to the general moving-polytope geometry. We give a short version in \Cref{alg:sefi-short} and defer the full implementation to \Cref{alg:sefi}.
First, \Cref{sec:interval-safe} explains how, on the
current uncertainty interval $J$, we compute an interval-safe vertex-supported scheme, i.e., a Bayes-plausible utility-maximizing scheme supported on vertices of the safe polytopes $\{\areg{a}{J}\}_{a\in \actions_{\rm rel}}$, so that every support posterior remains IC for all $\bias\in J$.
Second, \Cref{sec:movingconstraint} explains how vertex posteriors of this scheme are converted into local probes by perturbing movable
binding IC constraints. 
Similar to \Cref{alg:astc}, each informative realization either safely advances an endpoint of $J$ by $\eta=|J|^2$ or localizes $\tbias$ to the adjacent length-$\eta$ interval. Thus each phase reduces $|J|$ to at most $|J|^2$, and \Cref{sec:samplecomplexity} bounds waiting time and total regret.

\begin{algorithm}[ht]
\caption{Fixed-Interval Safe Exploration: Short Version}
\label{alg:sefi-short}
\begin{algorithmic}[1]
\Require Current interval $J=[\underbias,\upbias]$, instance
$\cI=(\states,\actions,\prior,u_S,u_R)$.

\State $L\gets \abr{J}$, $\eta\gets L^2$, $(\ell,r)\gets(\underbias,\upbias)$
\State $\scheme_{\mathrm{vtx}}^J \gets \textsc{VertexSafeScheme}(J,\cI)$
\If{$\up(\scheme_{\mathrm{vtx}}^J)=0$}
    \State \Return $(J,\scheme_{\mathrm{vtx}}^J)$
\EndIf

\For{each informative index $i\in \movv(\scheme_{\mathrm{vtx}}^J)$}
    \State $\mathsf{ori}_i\gets
    \textsc{ChoosemovableBindingConstraint}(i,\scheme_{\mathrm{vtx}}^J,J)$
\EndFor

\State $\scheme^{\mathrm{last}}\gets \scheme_{\mathrm{vtx}}^J$
\While{$r-\ell>\eta$}
    \State $\scheme^{\mathrm{last}}\gets
    \textsc{BuildProbe}
    \rbr{\scheme_{\mathrm{vtx}}^J,
    \cbr{\mathsf{ori}_i}_{i\in \movv(\scheme_{\mathrm{vtx}}^J)},
    \ell,r,\eta}$
    \State $(i,a_t)\gets
    \textsc{RunUntilInformative}\rbr{\scheme^{\mathrm{last}},
    \movv(\scheme_{\mathrm{vtx}}^J)}$
    \State $(\ell,r,\mathsf{status})\gets
    \textsc{RefineInterval}(i,a_t,\mathsf{ori}_i,\ell,r,\eta)$
\EndWhile

\State \Return $([\ell,r],\scheme^{\mathrm{last}})$
\end{algorithmic}
\end{algorithm}

\subsubsection{Interval-safe vertex-supported schemes}
\label{sec:interval-safe}
The interval-safe regions $\{\areg{a}{J}\}_{a\in\actions}$ were defined in \Cref{eq:interval-safe}.  For the current localized interval $J=[\underbias,\upbias]$, also define the strict interval-safe region $\operatorname{int}_{\mathrm{IC}}(\areg{a}{J})=
    \{
        \post\in\simplex{\states}:
        \Delta u_{a,a'}^\top\post>b_{a,a'}(J),
        \ \forall a'\neq a
    \}.$
We additionally remove all actions with
\(\operatorname{int}_{\mathrm{IC}}(\areg{a}{J})=\emptyset\).
This feasibility check exactly recovers the relevant action set \(A_{\rm rel}\), guaranteed once \(J\) is sufficiently small after localization, see \Cref{lemma:relevant-action} in \Cref{app:critial-value}.
Note that this is an action-level filter designed to exclude $A\setminus A_{\rm rel}$. For $a\in A_{\rm rel}$, we still consider the closed polytope \(R_a^J\), as the boundary cases are addressed by the sender-favor tie-breaking rule.

View only posteriors in the interval-safe regions
\(\bigcup_{a\in A_{\rm rel}} \areg{a}{J}\) as feasible.
Using the corresponding tighter IC constraints, we solve a LP in \eqref{eq:safe-lp} to obtain an interval-safe scheme, i.e., a Bayes-plausible utility-maximizing scheme supported in the interval-safe regions \(\bigcup_{a\in A_{\rm rel}} \areg{a}{J}\).

Unlike the binary case, where the optimal scheme $\scheme(m)$ 
places its high posterior on the boundary of the interval-safe region \(R_1^J\) and is therefore easy to probe, an arbitrary interval-safe optimizer in the general case may obscure which posteriors are informative for probing. We therefore replace it by an equivalent vertex-supported scheme. 
Specifically, since each \(\areg{a}{J}\) is a polytope with finitely many vertices, each posterior in the interval-safe scheme can be decomposed into a distribution over vertices of the corresponding polytope. 
Such decomposition is without loss as it preserves utility and Bayes plausibility.
Applying this to all posteriors gives an interval-safe vertex-supported scheme \(\scheme_{\mathrm{vtx}}^J\) on \(J\), see \Cref{app:safeexplore-alg} for details. 
When \(J\) is small, its utility gap from optimal \(\scheme_{\mathrm{opt}}\) is also small.
\begin{proposition}[Safe--optimal gap]
\label{lemma:safe-optimal}
For every interval $J\ni\tbias$,
\(
U_S(\scheme_{\mathrm{opt}})-U_S(\scheme_{\mathrm{vtx}}^J)=O(|J|).
\)
\end{proposition}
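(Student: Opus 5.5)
The plan is to construct, from the full-information optimal scheme $\scheme_{\mathrm{opt}}$, an explicit interval-safe scheme whose utility is within $O(|J|)$ of $U_S(\scheme_{\mathrm{opt}})$. Since $\scheme_{\mathrm{vtx}}^J$ has the same value as the optimizer of the safe LP (vertex decomposition preserves utility and Bayes plausibility), it then suffices to lower bound the safe LP value.

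\textbf{Step 1: reduce to a finite representation of the optimum.} Take $\scheme_{\mathrm{opt}}$ as an optimal solution of the direct LP at $\tbias$. It has at most one posterior $\post_a$ per action $a\in\actions_{\mathrm{rel}}$, with weights $p_a$, $\sum_a p_a\post_a=\prior$ and $\post_a\in\areg{a}{\tbias}$. By the relevant-action tie-breaking convention only relevant actions appear.

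\textbf{Step 2: shrink posteriors toward interior points.} For each $a\in\actions_{\mathrm{rel}}$, fix a point $z_a$ that is strictly IC at $\tbias$ with slack $\gamma_a>0$:
\[
\Delta u_{a,a'}^\top z_a \ge b_{a,a'}(\tbias)+\gamma_a \quad \text{for all } a'\neq a.
\]
Such a point exists by definition of $\actions_{\mathrm{rel}}$ together with continuity. The tilt of each constraint satisfies
\[
|b_{a,a'}(\bias)-b_{a,a'}(\tbias)|\le |\Delta u_{a,a'}^\top\prior|\,|\bias-\tbias|/(\bias\tbias).
\]
For $\bias\in J$, and with $J$ bounded away from $0$ (true after localization, since $\bias\ge\bias_{\min}>0$), this is at most $K|J|$ for an instance-dependent $K$.

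Now set $\post_a'=(1-\lambda)\post_a+\lambda z_a$ with $\lambda=K|J|/\gamma$, where $\gamma=\min_a\gamma_a$. By linearity, $\post_a'$ satisfies every constraint at $\tbias$ with slack $\lambda\gamma=K|J|$, and hence lies in $\areg{a}{J}$.

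\textbf{Step 3: restore Bayes plausibility.} The mean of the perturbed scheme is $(1-\lambda)\prior+\lambda\bar z$, where $\bar z=\sum_a p_a z_a$. The main obstacle is fixing this mean while staying safe. I would use the default action: by \Cref{ass:strict_default}, $\prior$ lies strictly inside $\areg{\defaulta}{\bias}$ for all $\bias$, since at $\post=\prior$ the distorted belief equals $\prior$ regardless of $\bias$, and the slack is uniform.

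Concretely, mix the perturbed scheme with weight $1-\theta$ and a point mass at some posterior $w$ with weight $\theta$, chosen so that $(1-\theta)\big[(1-\lambda)\prior+\lambda\bar z\big]+\theta w=\prior$. Solving gives $w=\prior+c(\prior-\bar z)$ with $c=(1-\theta)\lambda/\theta$. Choose $\theta$ proportional to $\lambda$ so that $c$ is a small constant. Then $w$ stays inside the simplex when $\prior$ has full support (otherwise restrict to the support face), and it stays strictly inside $\areg{\defaulta}{J}$ by the uniform slack.

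The utility loss is then at most $(\lambda+\theta)\cdot 2\max|u_S|=O(|J|)$. For large $|J|$ the bound is trivial, since utilities are bounded, so one may assume $|J|$ is small enough that $\lambda,\theta\le 1$. Finally, a feasible safe scheme of value $U_S(\scheme_{\mathrm{opt}})-O(|J|)$ lower bounds the safe LP optimum, and hence lower bounds $U_S(\scheme_{\mathrm{vtx}}^J)$.
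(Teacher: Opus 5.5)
Your proof is correct, but its key step differs from the paper's. The paper Euclidean-projects each atom $\post_i$ of $\scheme_{\mathrm{opt}}$ onto $R^J_{a_i}$ and bounds the displacement by $K_{\mathrm{dist}}|J|$. That bound comes from the Hoffman-type error bound of Lemma~\ref{lem:rhs-stability}, whose constant $\kappa$ is built from pseudoinverses of full-row-rank constraint submatrices. The paper then invokes the generic repair Lemma~\ref{lemma:utility_stability_changed} with $p=1$, paying a Lipschitz factor $U_{\max}\sqrt{|\states|}$ for the moved atoms.

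You instead pull each atom toward a Slater point $z_a$ of its own region. Because IC slack and sender utility are both linear along the segment, feasibility on all of $J$ and the loss bound $2U_{\max}(\lambda+\theta)$ follow from linearity alone, with no error-bound machinery. The mean discrepancy is the explicit vector $\lambda(\bar z-\prior)$, and your default-action correction is exactly the paper's repair (its $r$ here equals $\lambda(\prior-\bar z)$).

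Two small points should be stated explicitly. First, take $\scheme_{\mathrm{opt}}$ from the direct LP restricted to $\actions_{\mathrm{rel}}$: the unrestricted weak-IC LP can exceed $\mathrm{OPT}$ through tie-only actions, and those have no Slater point. Second, your scheme must be feasible for the filtered safe LP. It is, since $\Delta u_{a,a'}^\top z_a\ge b_{a,a'}(J)+(\gamma_a-K|J|)$ makes $z_a$ strictly interior once $K|J|<\gamma$. This also replaces the paper's appeal to localization for nonemptiness of $R^J_{a_i}$.

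What the paper's route buys is the form of the constant. Its $C_{\mathrm{rep}}K_{\mathrm{dist}}$ depends on the instance only through constraint normals, $G_{\max}$ and $\dprior$, once $|J|$ is small enough that the projections exist. Yours scales like $1/\gamma$, where $\gamma=\min_{a\in\actions_{\mathrm{rel}}}\gamma_a$ is the smallest maximal IC slack at $\tbias$; this depends on $\tbias$ and can be tiny for thin relevant regions. Since all bounds are pointwise in $(\cI,\tbias)$, either constant suffices here and in the limiting argument of Proposition~\ref{prop:samplecomplexity}. However, the uniform-constant hypothesis of the contextual extension is phrased in terms of $\kappa$, $G_{\max}$, $L_b$ and $\dprior$. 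With your route it would also need a uniform lower bound on $\gamma$ across contexts.
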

This vertex view is the higher-dimensional analogue of the binary safe scheme \(\scheme(\underbias)\) where one posterior lies on the moving boundary of \(R_1^J\). Here informative vertices are similarly identified by binding IC constraints, whose movement in \(\bias\) gives the local one-dimensional probes below.

\subsubsection{Probing the bias via movable binding constraints}
\label{sec:movingconstraint}
Based on the current interval-safe vertex-supported scheme
$\scheme^J_{\rm vtx}$, we construct the following probing scheme.
Call a vertex posterior $\post\in \areg{a}{J}$ informative if it has a movable binding IC constraint: $\exists a'\neq a$,
$\Delta u_{a,a'}^\top \post=b_{a,a'}(J)
=b_{a,a'}(\beta)$ with
$\beta\in\{\underbias,\upbias\}$ where the right-hand side varies with $\bias$, i.e., $\Delta u_{a_i,a_i'}^\top\mu_0\neq 0$. 
Such a constraint plays the role of the binary cutoff.
To probe a candidate value $m \in J$, we move
$\post$ slightly across this one boundary, while leaving the other IC constraints satisfied. Denote the resulting posterior by $\post^{\mathrm{pr}}(m)$. The formal construction are deferred to \Cref{app:safeexplore-alg}. 
This local probe has exactly the same threshold interpretation as in the binary case. If the selected binding constraint is determined by the lower endpoint $\beta=\underbias$,  then \[
    \bestresp{\post^{\mathrm{pr}}(m)}{\tbias}=a
    \quad\Longleftrightarrow \quad
    \tbias\ge m,
    \qquad \bestresp{\post^{\mathrm{pr}}(m)}{\tbias}\neq a
    \quad\Longleftrightarrow\quad
    \tbias< m
\]
If it binds at the upper endpoint, $\beta=\upbias$, the inequalities reverse:
\[
    \bestresp{\post^{\mathrm{pr}}(m)}{\tbias}=a
    \quad\Longleftrightarrow\quad
    \tbias\le m,\qquad \bestresp{\post^{\mathrm{pr}}(m)}{\tbias}\neq a
    \quad\Longleftrightarrow\quad
    \tbias> m
\]
Thus each informative vertex reduces the general problem to a local binary comparison.

We construct the probing scheme $\scheme_{\mathrm{probe}}^{J,\eta}(\ell,r)$ by perturb all informative vertices of $\scheme_{\mathrm{vtx}}^J$
simultaneously. Maintaining a current subinterval $(\ell,r)\subseteq J$ and step size $\eta=|J|^2$, a lower-endpoint vertex probes $m=\ell+\eta$, while an upper-endpoint vertex probes $m=r-\eta$. When an informative signal is realized, the observed action either moves the corresponding endpoint inward by $\eta$, or localizes $\alpha^\star$ to the adjacent length-$\eta$ interval. This is the
same safe-side logic as in \Cref{alg:astc}, except that different vertices may probe different sides of the interval.

We then bound the utility gap between $\scheme_{\mathrm{probe}}^{J,\eta}(\ell,r)$ and $\scheme_{\mathrm{vtx}}^J$. Note that the following property only addresses safe probes, and the wrong-action loss caused by unsafe probes is handled separately later.
\begin{proposition}[Safe--probe gap]
\label{lemma:safe-probe}
For every $(l,r)$,
\(
    U_S(\scheme_{\mathrm{vtx}}^J)
    -
    U_S(\scheme_{\mathrm{probe}}^{J,\eta}(\ell,r))
=O(\up(\scheme_{\mathrm{vtx}}^J)|J|).
\)
\end{proposition}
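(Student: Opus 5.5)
The plan is to compare the two schemes posterior by posterior, using the fact that the probe scheme is obtained from $\scheme_{\mathrm{vtx}}^J$ by locally perturbing only its informative vertices. I read $\up(\scheme_{\mathrm{vtx}}^J)$ as the total mass that $\scheme_{\mathrm{vtx}}^J$ places on informative vertices, i.e.\ on indices in $\movv(\scheme_{\mathrm{vtx}}^J)$.

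Write $\scheme_{\mathrm{vtx}}^J=\sum_i p_i\delta_{\post_i}$, with $\post_i$ a vertex of $\areg{a_i}{J}$. For each informative $i$, \textsc{BuildProbe} replaces $\post_i$ by $\post_i^{\mathrm{pr}}(m_i)$, where $m_i\in\{\ell+\eta,\,r-\eta\}\subseteq J$. It moves only along the chosen binding constraint $\mathsf{ori}_i$, and by at most a bounded multiple of $|b_{a_i,a_i'}(\beta)-b_{a_i,a_i'}(m_i)|$. Since $b_{a,a'}(\bias)=\frac{\bias-1}{\bias}\Delta u_{a,a'}^\top\prior$ has derivative $\Delta u_{a,a'}^\top\prior/\bias^2$, and $\bias\ge\bias_{\min}(\cI)>0$ on $J$, this shift is at most $\Lnu|J|$ for an instance constant $\Lnu$. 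The first step is therefore to establish the displacement bound
\[
\|\post_i^{\mathrm{pr}}(m_i)-\post_i\|_1\le \Lnu|J| .
\]

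The second step restores Bayes plausibility. I expect the construction in \Cref{app:safeexplore-alg} to do this by reweighting or redistributing the removed mean shift onto non-informative support points, for instance the default posterior or a fixed interior compensating point. The total mass moved, and the total change in posterior location, should then be $O\big(\sum_{i\in\movv}p_i\,\Lnu|J|\big)=O(\up(\scheme_{\mathrm{vtx}}^J)|J|)$. Non-informative vertices are left untouched and contribute no loss.

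The third step bounds the utility change under safe probes. Because the probe is safe, the action at $\post_i^{\mathrm{pr}}$ remains $a_i$. The reduced-form utility $\sum_\omega \post(\omega)u_S(a_i,\omega)$ is then linear in $\post$, so the per-posterior change is at most $\max|u_S|\cdot\Lnu|J|$. Any compensation mass changes utility by at most $2\max|u_S|$ per unit of mass moved. Summing over both contributions gives $O(\up(\scheme_{\mathrm{vtx}}^J)|J|)$, and the constant depends only on $\cI$ and $\bias_{\min}(\cI)$.

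The main obstacle is step two. The compensation must keep every posterior inside its interval-safe region, so that the actions stay unchanged, and must cost only a linear amount in the displacement. This requires a uniform ``slack'' constant: the compensating posterior must lie strictly inside its region with margin bounded away from zero, uniformly over small $J$.

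To get this margin, I would first try the strict feasibility given by \Cref{lemma:relevant-action} and \Cref{ass:strict_default}. The default action has a strict margin at $\prior$, so mixing a small $O(|J|)$ amount of mass toward $\prior$ absorbs any mean shift. If that fails, I would instead use the nondegeneracy of the movable constraint, $\Delta u_{a_i,a_i'}^\top\prior\neq 0$, to choose the perturbation direction within the face of the vertex so that the mean shift is exactly canceled by moving the offsetting vertex along the same direction.
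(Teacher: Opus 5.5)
Your proposal is correct and is essentially the paper's argument: the paper obtains your displacement bound from \Cref{lem:rhs-stability}, a Hoffman-type estimate whose constant $\kappa$ depends only on the constraint normals (this is the $J$-uniform ``bounded multiple'' you assert without proof), and then restores Bayes plausibility exactly as in your first option, via \Cref{lemma:utility_stability_changed}, which rescales all atoms by $1-\hat\lambda_+$ and adds one default correction posterior at distance $\dprior/2$ from $\prior$ in the direction of the mean shift, with weight $\hat\lambda_+=O(\up(\scheme_{\mathrm{vtx}}^J)|J|/\dprior)$. Utility is then compared by linearity with the labeled actions held fixed (the proposition concerns safe probes only), so your fallback of moving an offsetting vertex is not needed.
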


\subsubsection{A sample complexity dichotomy}
\label{sec:samplecomplexity}
\Cref{lemma:safe-optimal,lemma:safe-probe} control the per-round loss of a fixed probe. The only remaining issue is the waiting time for informative signals. Define the informative probability
\(\up\rbr{\scheme_{\mathrm{vtx}}^J}\) of an interval-safe
vertex-supported scheme \(\scheme_{\mathrm{vtx}}^J\) as the total
probability mass on informative vertex posteriors.
The following dichotomy rules out slow learning. 

\begin{proposition}
\label{prop:samplecomplexity}
There exists an instance-dependent constant $\eps_{\rm dich}(\cI)>0$, such that for every interval $J\ni\tbias$ with $|J|\le\eps_{\rm dich}(\cI)$, one of the following holds:
\begin{enumerate}
    \item $\exists$ constant $c>0$, every interval-safe vertex-supported optimizer satisfies $P_{\rm info}(\scheme^J_{\rm vtx})\ge c$.
    \item There exists an optimal scheme $\scheme_{\mathrm{opt}}$ with
    $P_{\rm info}(\scheme_{\mathrm{opt}})=0$.
\end{enumerate}
\end{proposition}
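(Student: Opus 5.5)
The plan is a finite-structure argument. I exploit the fact that, once $|J|$ is small, the combinatorial structure of the safe LP \eqref{eq:safe-lp} stabilizes, and I split on whether the true optimum can avoid movable binding constraints.

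\textbf{Step 1: fix the combinatorial data at $\tbias$.} Consider the family of polytopes $\{\areg{a}{\tbias}\}_{a\in\actions_{\rm rel}}$. By \Cref{lemma:relevant-action}, for $|J|$ below a threshold the strict-feasibility filter returns exactly $\actions_{\rm rel}$. Each $\areg{a}{J}$ is then described by the same constraint rows as $\areg{a}{\tbias}$, with right-hand sides $b_{a,a'}(J)$ within $O(|J|)$ of $b_{a,a'}(\tbias)$.

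For each basis, meaning a choice of active constraints defining a vertex, the vertex moves affinely (Lipschitz) in the right-hand sides. So vertices of $\areg{a}{J}$ correspond to vertices of $\areg{a}{\tbias}$, perturbed by $O(|J|)$. This uses a nondegeneracy or genericity step: a degenerate vertex at $\tbias$ may split into several nearby vertices, and I would handle this by grouping bases by their limit point.

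Call a vertex $\post$ of $\areg{a}{\tbias}$ \emph{movable} if some active constraint $a'$ has $\Delta u_{a,a'}^\top\prior\neq0$. A nearby vertex at $J$ is then informative, since the same constraint binds at an endpoint of $J$.

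\textbf{Step 2: case split on the optimal face.} Consider the full-information LP over vertex-supported schemes at $\tbias$. Its optimal solution set is a face $F$ of a polytope whose variables are weights on the finitely many vertices $V$.

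\emph{Case (2)} holds if $F$ contains a point putting zero weight on all movable vertices. Then that point is an optimal scheme $\scheme_{\mathrm{opt}}$ with $\up(\scheme_{\mathrm{opt}})=0$, since its vertices have only non-movable active constraints.

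\emph{Case (1)} holds otherwise. Then every optimal vertex-weighting puts positive mass on movable vertices. By compactness of $F$ and continuity of that mass, the minimum over $F$ is some $2c>0$.

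\textbf{Step 3: transfer to $J$ — the main obstacle.} I must show that, for small $|J|$, every interval-safe vertex-supported optimizer $\scheme^J_{\rm vtx}$ has informative mass at least $c$.

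I would argue by contradiction and LP sensitivity. Suppose a sequence $J_n\downarrow\{\tbias\}$ has optimizers with informative mass below $c$. Pass to a subsequence on which the bases and support pattern are constant, which is possible because there are finitely many. The weights are bounded, so extract a limit. Vertices converge to vertices at $\tbias$, Bayes plausibility passes to the limit, and the value converges to $\iopt{\cI}{\tbias}$ by \Cref{lemma:safe-optimal}. Hence the limit lies in $F$.

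Informative vertices at $J_n$ converge to movable vertices at $\tbias$. There is a subtlety here: a non-movable vertex at $J_n$ could converge to a movable vertex whose movable constraint is not active at $J_n$. I would rule this out by working basis-by-basis, since the active set of a fixed basis is the same along the sequence.

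Thus the limit has movable mass at most $c<2c$, a contradiction. This yields $\eps_{\rm dich}(\cI)$.

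I expect the hardest part to be the degeneracy bookkeeping: vertices splitting or merging, and ties in which constraint set binds. If the subsequence argument gets messy, my fallback is to use Hoffman's bound on the LP's optimal face. That bound gives that any $\scheme^J_{\rm vtx}$ is within $O(|J|)$ of $F$ in weight space, which again forces movable mass at least $2c-O(|J|)\ge c$.
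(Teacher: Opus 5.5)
Your argument is the paper's own, run in the contrapositive direction. The paper assumes branch~1 fails and picks $J_n\ni\tbias$ with $|J_n|\le 1/n$ and optimizers with $\up\le 1/n$. It extracts a limit over a fixed finite vertex-index set, so your ``constant bases'' subsequence is automatic. It uses Proposition~\ref{lemma:safe-optimal} to show the limit is feasible and optimal at $\tbias$, and concludes that the limit has zero informative mass. You instead assume branch~2 fails, let $2c$ be the minimum movable mass over the optimal face, and run the same limit extraction to reach a contradiction. The explicit $c$ makes the constant more transparent, but the core is the same.

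The step that does not go through as written is the degenerate case you flag. You need the following: if an index $i$ has $p_i^\star>0$ and its limit vertex $\vertex_i^{\tbias}$ has a movable active constraint, then $\vertex_i^{J_n}$ is informative for all large $n$. Only then is the limit's movable mass bounded by $\liminf_n \up(\scheme^{J_n})\le c$. Your justification, ``the active set of a fixed basis is the same along the sequence,'' controls only constraints inside the basis $B_i$. A movable constraint outside $B_i$ could a priori become active only in the limit. The missing observation splits into two cases:
\begin{enumerate}
\item If $B_i$ contains a movable IC constraint, it binds at $\vertex_i^{J_n}$ by construction.
\item If $B_i$ contains none, every right-hand side in $\bb_i^{J}$ is $J$-independent: $0$ for nonnegativity and non-movable IC constraints, $1$ for the simplex equality. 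So $\vertex_i^{J_n}=\vertex_i^{\tbias}$. Since this point carries positive weight, it lies in $\areg{a}{J_n}$, and for the movable pair $(a,a')$ active at $\tbias$,
\[
\Delta u_{a,a'}^\top\vertex_i^{\tbias}\ \ge\ b_{a,a'}(J_n)\ \ge\ b_{a,a'}(\tbias)\ =\ \Delta u_{a,a'}^\top\vertex_i^{\tbias}.
\]
The middle inequality holds because $\tbias\in J_n$ and $b_{a,a'}(J_n)$ is the maximum of the monotone map $b_{a,a'}$ over $J_n$. Hence the constraint binds at $J_n$ and the vertex is informative there.
\end{enumerate}
The paper's last step (``if some movable informative index had $p_i^\star>0$\ldots'') silently relies on this same fact. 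With the fix, your write-up would be more complete than the paper's.

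Two smaller remarks. First, your Hoffman fallback does not apply off the shelf: varying $J$ moves the vertices, which perturbs the constraint matrix of the vertex-weight LP rather than only its right-hand side. Second, the paper's proof also shows that in branch~2 the zero-informative optimum stays feasible, hence safe-optimal, on every sufficiently small $J\ni\tbias$, because its movable slacks at $\tbias$ are bounded below by some $s^\star>0$. The regret analysis uses this form, and your Case~2 does not yet give it.
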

After the localization stage, $|J|=O(1/\log T)\le\eps_{\rm dich}(\cI)$ for all sufficiently large $T$. 
In the first case, informative signals arrive after $O(1)$ rounds in expectation. A phase with interval $J$ and step size $\eta=|J|^2$ has at most $O(|J|/\eta)=O(1/|J|)$ safe probes and one unsafe probe. Given \Cref{lemma:safe-optimal,lemma:safe-probe}, its expected regret is $O(1)$. 
In the second case, the optimal scheme is also safe within $J$, then there is no safe-optimal gap, only the safe-probe gap $O(P_{\rm info}|J|)$ remains. After multiplying by the $O(1/P_{\rm info})$ waiting time, each safe probe costs $O(|J|)$, hence the whole phase again costs $O(1)$.
Finally, since every completed phase satisfies $|J_{r+1}|\le |J_r|^2$, 
only $O(\log\log T)$ phases are needed.
\begin{proposition}
\label{prop:safe_exploration_regret}
    The total expected regret during the safe exploration stage is $O(\log\log T)$.
\end{proposition}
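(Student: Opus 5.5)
We bound the regret of Stage~2 phase by phase. There are two parts: first show that each phase has expected regret $O(1)$, then count the phases. Let $J_0$ be the interval returned by localization, so $|J_0|=O(1/\log T)$. The phase lengths satisfy $|J_{r+1}|\le |J_r|^2$. The reason is that every exit of the while loop in \Cref{alg:sefi-short} happens either when $r-\ell\le\eta$, or at an unsafe probe that localizes $\tbias$ to an adjacent interval of length $\eta=|J_r|^2$. Hence $|J_r|\le |J_0|^{2^r}$, and the stage ends once $|J_r|\le T^{-1}$, which takes $r=O(\log\log T)$ phases. The early break when $J_{r+1}=J_r$ only shortens this. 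Correctness of every update, namely that $\tbias\in J_r$ is maintained, follows from the threshold equivalences in \Cref{sec:movingconstraint}. Here the tie-breaking convention and the relevant-action filter (valid once $|J|$ is small) guarantee that the perturbed posterior $\post^{\mathrm{pr}}(m)$ switches action exactly at $m$.

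Fix a phase with interval $J\ni\tbias$, $L=|J|\le\eps_{\rm dich}(\cI)$; this holds for large $T$ by localization. Within the phase, decompose the per-round regret as
\[
\bigl[U_S(\scheme_{\mathrm{opt}})-U_S(\scheme^J_{\mathrm{vtx}})\bigr]+\bigl[U_S(\scheme^J_{\mathrm{vtx}})-U_S(\scheme^{J,\eta}_{\mathrm{probe}}(\ell,r))\bigr]+(\text{unsafe-probe loss}).
\]
By \Cref{lemma:safe-optimal} and \Cref{lemma:safe-probe}, the first two terms are $O(L)$ and $O(\up L)$ respectively. The unsafe-probe loss is at most $\Gmax$ per round and occurs only in the final informative realization of a probe configuration that crosses $\tbias$. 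The number of informative realizations in the phase is at most $O(L/\eta)+1=O(1/L)$, since each safe one moves an endpoint inward by $\eta$ within a subinterval of length at most $L$. The number of physical rounds between consecutive informative realizations is geometric with mean $1/\up(\scheme^J_{\mathrm{vtx}})$. I use Wald's identity to bound the expected phase length by $O(1/(L\,\up))$. One point needs care here: the informative probability of the probe scheme must be comparable to that of $\scheme^J_{\mathrm{vtx}}$. I would argue this from the construction, because perturbing the vertices by $O(\eta)$ changes the support masses by $O(\eta)$ via Bayes plausibility.

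Now apply \Cref{prop:samplecomplexity}. In case 1, $\up\ge c$. The expected number of rounds is then $O(1/L)$, and each round costs $O(L)$. Adding at most $O(1/c)$ rounds of unsafe-probe loss, each bounded by $\Gmax$, gives $O(1)$ for the phase. In case 2, some optimal scheme has $\up=0$. I would argue that the safe-optimal gap then vanishes: that optimal scheme is supported on posteriors with no movable binding constraint, so it lies in $\bigcup_a \areg{a}{J}$ and is feasible for the safe LP, which gives $U_S(\scheme^J_{\mathrm{vtx}})=U_S(\scheme_{\mathrm{opt}})$. If $\up(\scheme^J_{\mathrm{vtx}})=0$, the phase returns immediately with zero regret. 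Otherwise the per-round regret is $O(\up L)$, and multiplying by the $O(1/(L\,\up))$ expected rounds gives $O(1)$. The unsafe-probe loss is again incurred only in $O(1)$ rounds in expectation. The reason is that an unsafe probe is played repeatedly only until an informative realization, and the wrong action arises only on the informative posteriors. So its expected cost is $\Gmax\cdot \up\cdot O(1/\up)=O(1)$.

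Summing $O(1)$ over $O(\log\log T)$ phases, and truncating at $T$ rounds (which can only reduce regret), gives the claim. I expect the main obstacle to be case 2, where $\up$ may be tiny and depend on $J$. There the $O(1/\up)$ waiting time must be offset by per-round losses that are proportional to $\up$. This requires two things: the safe-optimal gap must vanish exactly, not just be $O(L)$; and \Cref{lemma:safe-probe} must hold with $\up$ as a factor. I would first verify carefully that the zero-informative optimal scheme remains interval-safe on every later $J_r\subseteq J_0$, so that the argument applies uniformly across phases.
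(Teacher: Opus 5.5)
Your proposal is correct and follows the paper's proof essentially step for step: at most $O(1/L)$ safe probes plus one unsafe probe per phase, the $\up/2$ lower bound on the probe's informative probability, and the two branches of \Cref{prop:samplecomplexity}. In the second branch, the vanishing safe--optimal gap lets the $O(\up L)$ probe gap cancel the $O(1/\up)$ waiting time, and the unsafe loss is confined to informative posteriors; finally, $|J_{r+1}|\le|J_r|^2$ gives $O(\log\log T)$ phases. Two harmless slips: the per-round sender loss on an unsafe probe is bounded by $\Dumax$, not $\Gmax$ (a receiver-side quantity), and the probe posterior's displacement from the original vertex is $O(L)$ rather than $O(\eta)$, since the probed value $m$ sweeps across all of $J$ during a phase, which still suffices for the $\up/2$ bound.
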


\subsection{Commitment Stage}
After the safe-exploration stage, the sender commits to the current interval-safe vertex-supported optimizer \(\tau^J_{\mathrm{vtx}}\). If the final interval satisfies \(|J|\le T^{-1}\), \Cref{lemma:safe-optimal} implies that the per-round regret from the full-information benchmark is \(O(T^{-1})\), so the total commitment regret is \(O(1)\). If instead the safe-exploration routine stops earlier because \(P_{\mathrm{info}}(\tau^J_{\mathrm{vtx}})=0\), then by the second branch of \Cref{prop:samplecomplexity}, committing to it incurs no additional regret. Combining the \(O(\log\log T)\) localization cost, the \(O(\log\log T)\) safe-exploration cost, and this \(O(1)\) commitment cost proves \Cref{thm:generalbound}.

\section{Extensions and Open Problems}
\label{sec:conclusion}
Our analysis also clarifies how the safe-exploration principle extends beyond the baseline model. First, under the linear distortion model, persuasion does not harm receiver utility relative to no persuasion (see \Cref{sec_app:receiver-extension}). 
Second, in contextual environments where priors and utilities vary across rounds, the same interval-refinement idea gives \(O(\log\log T)\) regret under a uniform regularity assumption, which holds naturally with a finite contextual family (see \Cref{sec_app:changing_prior}).
Finally, when both the prior and the bias are unknown, the binary model still admits a one-dimensional implementable-threshold representation and hence an \(O(\log\log T)\) regret algorithm, and the problem inherits an \(\Omega(\log T)\) lower bound from unknown-prior persuasion \cite{Li_Lin_2025}, but the matching upper bound in the general case remains open (see \Cref{sec_app:prior-bias-extension}).

Future directions include establishing a matching \(O(\log T)\) upper bound for the general jointly unknown prior-and-bias setting, extending safe exploration to other biased-updating models, and understanding how the guarantees change under noisy or partial action feedback.

\newpage
\bibliographystyle{plainnat}
\bibliography{reference}
\newpage
\appendix
\section{Related Work (Detailed Discussion)}
\label{app:related-work}
A large body of empirical and theoretical work shows that Bayesian updating is a useful normative benchmark but often fails as a descriptive model of human responses to information. Early evidence on conservative updating finds that people revise beliefs in the Bayesian direction but insufficiently, suggesting that new information is only partially incorporated into posterior beliefs \cite{edwards1968conservatism}. Related experimental work documents distorted uses of priors and likelihoods, including base-rate neglect \cite{bar1980base}, representativeness-based inference \cite{grether1980bayes}, and broader errors in probabilistic reasoning \cite{benjamin2019errors}. More recent work develops tests of belief updating and documents systematic over- and underreaction to information \cite{ba2025over}. 
A separate line of work, mostly in political belief updating, studies motivated reasoning and belief polarization, showing that belief revisions may depend on prior commitments rather than only on statistical evidence \cite{taber2006motivated,tappin2020bayesian}. Complementing these findings, recent work in security games explicitly relaxes the assumption that adversaries interpret signals as perfectly Bayesian decision makers: \cite{cooney2019learning} combines machine learning with cognitive models to account for boundedly rational responses to warnings, and \cite{cranford2020toward} uses Instance-Based Learning Theory to model attackers who rely on past experience when interpreting deceptive signals. Together, these studies motivate a tractable model of receivers whose responses to information may be systematically attenuated or distorted rather than fully Bayesian.

Classical Bayesian persuasion studies how an informed sender strategically discloses information to influence a Bayesian receiver's action. The posterior-based formulation used in our paper follows the comparison-of-experiments tradition of Blackwell \cite{blackwell1953equivalent} and the splitting/concavification ideas developed in repeated games with incomplete information \cite{aumann1995repeated}. Kamenica and Gentzkow \cite{kamenica2011bayesian} formulate the canonical Bayesian-persuasion model, where the sender chooses a Bayes-plausible distribution over posteriors. Related early disclosure models include optimal information disclosure \cite{rayo2010optimal}. Subsequent work gives alternative characterizations and broader information-design formulations: Gentzkow and Kamenica \cite{gentzkow2016rothschild} use a Rothschild--Stiglitz perspective, and Bergemann and Morris \cite{bergemann2016information} connect information design to Bayes correlated equilibrium. A parallel algorithmic literature studies the computational complexity of finding optimal schemes when the primitives are known \cite{dughmi2016algorithmic,dughmi2017algorithmic}.


Our work builds on the literature that relaxes this Bayesian-rationality assumption, is closest to work on persuasion with biased or boundedly rational receivers. \cite{de2022non,kobayashi2025dynamic} study static and dynamic persuasion problems in which the receiver follows known non-Bayesian updating rules and the sender designs optimal information accordingly. Relatedly, \cite{tang2021bayesian} provides experimental evidence that behavior in information-design environments can deviate from the Bayesian benchmark and be approximated by linear updating, while \cite{feng2024rationality} studies robust information design when receivers follow quantal-response or approximately rational behavior. 
However, papers above either take the receiver's non-Bayesian rule or bias level as known when designing signals, or focus on a static design problem. The main exception is \cite{chen2024bias}, which also considers a linear belief distortion, but the objective is diagnostic: to infer whether the receiver's bias is above or below a threshold. In contrast, our sender must learn the unknown bias while simultaneously maximizing persuasion utility.

Our work also relates to the growing literature on learning in information design. One strand studies online Bayesian persuasion when the sender does not know the receiver's utility function or type, and designs no-regret signaling policies relative to a benchmark that knows these primitives \cite{castiglioni2020online,castiglioni2021multi,feng2022online,harris2023algorithmic}. Another strand focuses on uncertainty about the receiver's prior or the state distribution \cite{zu2021learning,Li_Lin_2025}. Related work also studies Markovian information-design problems in which the sender learns transition or event probabilities while maintaining persuasiveness \cite{wu2022sequential,bacchiocchi2024markov}. These papers study learning about utility, type, prior, state distribution, or dynamic environment primitives, typically while maintaining Bayesian receiver behavior. By contrast, in our model, the prior and utility functions are known, but the receiver's responsiveness to information is unknown. 

\section{Missing Details in Section \ref{sec:binary}}
\subsection{Constants and Notation}
\label{app:binary-constants}

Fix a binary instance \(\cB=(\prior,\qth)\), where \(\prior\in(0,1)\) and
\(\qth\in(\prior,1)\). Recall
\[
\alpha_{\min}(\cB):=\frac{\qth-\prior}{1-\prior},
\qquad
\nu_{\cB}(\bias):=\prior+\frac{\qth-\prior}{\bias},
\qquad
W_{\cB}(\bias):=\frac{\prior}{\nu_{\cB}(\bias)}.
\]
For every \(\bias\in[\alpha_{\min}(\cB),1]\), we have
\[
\nu_{\cB}(\bias)\in[\qth,1].
\]
Define the two Lipschitz constants
\[
\Lnu(\cB)
:=\sup_{\bias\in[\alpha_{\min}(\cB),1]}
\left|\frac{d}{d\bias}\nu_{\cB}(\bias)\right|
=\frac{\qth-\prior}{\alpha_{\min}(\cB)^2}
=\frac{(1-\prior)^2}{\qth-\prior},
\]
and
\[
L_W(\cB)
:=\sup_{\bias\in[\alpha_{\min}(\cB),1]}
\left|\frac{d}{d\bias}W_{\cB}(\bias)\right|
=\frac{\prior(1-\prior)^2}{\qth-\prior}.
\]
The constants used in \Cref{prop:binary-bsearch,prop:binary-regret-new,thm:lowerbound} can be taken as
\[
\Cbs(\cB)
:=
\bigl(1-\alpha_{\min}(\cB)\bigr)
\left(
\frac{\Lnu(\cB)}{\qth}+L_W(\cB)
\right),
\]
\[
\Cse(\cB)
:=
2\frac{\Lnu(\cB)}{\qth}+1,
\]
and
\[
\Clb(\cB)
:=
\frac13\min\left\{
\frac{\prior^2}{2},\frac{1-\qth}{16}
\right\}.
\]
All these constants depend only on \(\cB\) and are independent of \(T\) and \(\tbias\).

\subsection{Missing Details in Binary Search}
\label{app:binary-baseline}
Recall that $\bias_{\min}(\cB)=(\qth-\prior)/(1-\prior)$, and for any \(\bias\in[\bias_{\min}(\cB),1]\)
\[
\scheme(\bias)
=
\left\{
\left(1-\frac{\prior}{\post_{\cB}(\bias)},0\right),
\left(\frac{\prior}{\post_{\cB}(\bias)},\post_{\cB}(\bias)\right)
\right\}.
\]

\begin{algorithm}[H]
\caption{Binary Search (BS)}
\label{alg:binary-bsearch}
\begin{algorithmic}[1]
\Require Horizon \(T\), binary instance $\cB=(\prior,\qth)$
\State \((\underbias_0,\upbias_0)\gets (\bias_{\min}(\cB),1)\), \(\; k\gets 0\), \(\; t\gets 1\), \(\; M\gets \ceil{2\log_2 T}\)

\Statex \textbf{Phase 1: Exploration}
\While{\(k<M\) and \(t\le T\)}
    \State \(m\gets (\underbias_k+\upbias_k)/2\)
    \State commit to and play \(\scheme(m)\)
    \If{the realized posterior is not \(0\)}
        \If{the receiver chooses action \(1\)}
            \State \((\underbias_{k+1},\upbias_{k+1})\gets (m,\upbias_k)\)
        \Else
            \State \((\underbias_{k+1},\upbias_{k+1})\gets (\underbias_k,m)\)
        \EndIf
        \State \(k\gets k+1\)
    \EndIf
    \State \(t\gets t+1\)
\EndWhile

\Statex \textbf{Phase 2: Commitment}
\State Commit to \(\scheme(\underbias_k)\) for all remaining rounds.
\end{algorithmic}
\end{algorithm}

\begin{proof}[Proof of Proposition~\ref{prop:binary-bsearch}]
If the horizon ends before the algorithm collects \(M\) informative realizations, then the final
epoch is truncated; this can only decrease the regret. It therefore suffices to analyze the
hypothetical process in which the first \(M\) informative epochs are completed, and the policy
then commits.

Let
\[
\Delta_k:=\upbias_k-\underbias_k.
\]
By construction,
\[
\Delta_k=(1-\bias_{\min}(\cB))2^{-k},
\qquad
k=0,1,\dots,M.
\]
For each \(k=1,\dots,M\), define
\[
m_k:=\frac{\underbias_{k-1}+\upbias_{k-1}}{2}.
\]
Epoch \(k\) consists of all rounds in which the algorithm keeps probing \(m_k\) before the
\(k\)-th informative realization occurs. In each such round, the high posterior \(\post_\cB(m_k)\)
appears with probability \(\prior/\post_\cB(m_k)\), so the epoch length \(N_k\) is geometric with
\[
\E[N_k]=\frac{\post_\cB(m_k)}{\prior}.
\]

We first bound the exploration regret. Fix an epoch \(k\).

If \(m_k\le \tbias\), then \(\post_\cB(m_k)\ge \post^\star\), so the high posterior induces action $1$.
Along every sample path of this epoch, the sender obtains utility \(0\) in the first \(N_k-1\)
rounds and utility \(1\) in the last round, while the benchmark obtains
\(\prior/\post^\star\) in every round. Hence the total regret of epoch \(k\) is
\[
N_k\frac{\prior}{\post^\star}-1.
\]
Taking expectations gives
\[
\E[\mathrm{regret\ in\ epoch\ }k]
=
\E[N_k]\frac{\prior}{\post^\star}-1
=
\frac{\post_\cB(m_k)}{\post^\star}-1
=
\frac{\post_\cB(m_k)-\post^\star}{\post^\star}.
\]
Moreover,
\[
\left|\frac{d}{d\bias}\post_\cB(\bias)\right|
=
\frac{\qth-\prior}{\bias^2},
\]
so \(\post_\cB(\bias)\) is \(L_{\post}(\cB)\)-Lipschitz on \([\bias_{\min}(\cB),1]\) with
\[
L_{\post}(\cB):=\frac{\qth-\prior}{\bias_{\min}(\cB)^2}.
\]
Since \(m_k\) is the midpoint of \([\underbias_{k-1},\upbias_{k-1}]\),
\[
|m_k-\tbias|\le \frac{\Delta_{k-1}}{2},
\]
and therefore
\[
\frac{\post_\cB(m_k)-\post^\star}{\post^\star}
\le
\frac{L_{\post}(\cB)|m_k-\tbias|}{\post^\star}
\le
\frac{L_{\post}(\cB)\Delta_{k-1}}{2\post^\star}.
\]
Summing over all such epochs gives
\[
\sum_{k:\,m_k\le \tbias}\E[\mathrm{regret\ in\ epoch\ }k]
\le
\frac{L_{\post}(\cB)}{2\post^\star}\sum_{k=1}^{\infty}\Delta_{k-1}
=
\frac{L_{\post}(\cB)(1-\bias_{\min}(\cB))}{\post^\star}.
\]

If \(m_k>\tbias\), then \(\post_\cB(m_k)<\post^\star\), so even the high posterior induces action $0$.
Along every sample path of this epoch, the sender's utility is \(0\) in all \(N_k\) rounds, while
the benchmark still obtains \(\prior/\post^\star\) per round. Hence the total regret of epoch \(k\) is
\[
N_k\frac{\prior}{\post^\star}.
\]
Taking expectations gives
\[
\E[\mathrm{regret\ in\ epoch\ }k]
=
\E[N_k]\frac{\prior}{\post^\star}
=
\frac{\post_\cB(m_k)}{\post^\star}
\le 1,
\]
where the last inequality uses \(\post_\cB(m_k)\le \post^\star\). There are at most \(M\) such epochs.
Hence
\[
\Reg_T^{\mathrm{explore}}
\le
M+\frac{L_{\post}(\cB)(1-\bias_{\min}(\cB))}{\post^\star}.
\]

We next bound the regret after commitment. The policy commits to \(\scheme(\underbias_M)\). Since
\(\underbias_M\le \tbias\), this scheme is conservative and therefore always persuasive. Also,
\[
0\le \tbias-\underbias_M\le \Delta_M=(1-\bias_{\min}(\cB))2^{-M}.
\]
Define
\[
W(\bias):=\frac{\prior}{\post_\cB(\bias)}
=
\frac{\prior\bias}{\qth-\prior+\prior\bias}.
\]
Then the benchmark utility is \(W(\tbias)\), while the committed scheme yields
\(W(\underbias_M)\). Since
\[
W'(\bias)=\frac{\prior(\qth-\prior)}{(\qth-\prior+\prior\bias)^2},
\]
we have the uniform bound
\[
|W'(\bias)|\le L_W(\cB):=\frac{\prior(1-\prior)^2}{\qth-\prior},
\qquad
\bias\in[\bias_{\min}(\cB),1].
\]
Therefore the total regret in the commitment phase is at most
\[
T\bigl(W(\tbias)-W(\underbias_M)\bigr)
\le
TL_W(\cB)(\tbias-\underbias_M)
\le
TL_W(\cB)(1-\bias_{\min}(\cB))2^{-M}
\le
L_W(\cB)(1-\bias_{\min}(\cB)),
\]
where the last step uses \(T2^{-M}\le 1\).

Combining the exploration and commitment bounds yields
\[
\Reg_T(\Pi^{\mathrm{BS}};\tbias)
\le
\ceil{2\log_2 T}
+
\frac{L_{\post}(\cB)(1-\bias_{\min}(\cB))}{\post^\star}
+
L_W(\cB)(1-\bias_{\min}(\cB)).
\]
\end{proof}

\subsection{Proof of Proposition \ref{prop:binary-regret-new}}
For a probe value $m\in[\bias_{\min}(\cB),1]$, the algorithm uses 
the two-point scheme supported on
\(\{0,\post_\cB(m)\}\):
\[
\scheme(m)
=
\left\{
\left(1-\frac{\prior}{\post_\cB(m)},\,0\right),
\left(\frac{\prior}{\post_\cB(m)},\,\post_\cB(m)\right)
\right\}.
\]
The receiver takes action $1$ if and only if
\[
(1-\tbias)\prior+\tbias\post_\cB(m)\ge \qth
\iff m\le \tbias.
\]
Hence the sender's one-round regret under probe $m$ is
\[
r(m)=
\begin{cases}
\dfrac{\prior}{\post^{\star}}-\dfrac{\prior}{\post_\cB(m)}, & m\le \tbias,\\[6pt]
\dfrac{\prior}{\post^{\star}}, & m>\tbias.
\end{cases}
\]

The algorithm repeats the same probe until the first High signal appears, so each probe corresponds to
one epoch. Let $N(m)$ be the number of physical rounds spent in the epoch for probe $m$. Then
$N(m)\sim \mathrm{Geom}(p_H(m))$ and
\[
\E[N(m)] = \frac{1}{p_H(m)} = \frac{\post_\cB(m)}{\prior}.
\]
Since the probe $m$ is repeated until the first High signal appears, the total regret of one epoch is
\[
N(m)\frac{\prior}{\post^\star}-\mathbf{1}\{m\le\tbias\}.
\]
Therefore,
\[
R_{\mathrm{ep}}(m)
=
\mathbb E\!\left[N(m)\frac{\prior}{\post^\star}-\mathbf{1}\{m\le\tbias\}\right],
\]
which gives
\[
R_{\mathrm{ep}}(m)=
\begin{cases}
\dfrac{\post_\cB(m)-\post^\star}{\post^\star}, & m\le\tbias,\\[6pt]
\dfrac{\post_\cB(m)}{\post^\star}, & m>\tbias.
\end{cases}
\]
In particular, if $m>\tbias$, then $\post_\cB(m)\le \post^{\star}$ and hence
\[
R_{\mathrm{ep}}(m) \le 1.
\]

Next define
\[
C(\cB):=\sup_{x\in[\bias_{\min}(\cB),1]} \left|\frac{d}{dx}\post_\cB(x)\right|
=\sup_{x\in[\bias_{\min}(\cB),1]} \frac{\qth-\prior}{x^2}
=\frac{\qth-\prior}{\bias_{\min}(\cB)^2}.
\]
For every $m\le \tbias$,
\[
0\le \post_\cB(m)-\post^{\star} \le C(\cB)(\tbias-m),
\]
so
\[
R_{\mathrm{ep}}(m)\le \frac{C(\cB)}{\post^{\star}}(\tbias-m).
\]
We now bound the exploration regret phase by phase. In a given phase, let the current interval be
$[a,b]$ with length $L:=b-a$, and let the current step size be $\eps$. The algorithm probes the points
\[
m_j=a+j\eps,
\qquad j=1,\ldots,J,
\]
,then either stops at the first point $m_J$ with high signal inducing action $0$ or exits the
interval and updates to $[m_{J-1},b]$. In both cases,
\[
\tbias-m_j \le (J-j)\eps,
\qquad
J\le \frac{L}{\eps}+1.
\]
There is at most one epoch with high posterior inducing action $0$ in each phase, and its regret is at most $1$.
Therefore the total regret except the above special epoch in this phase is bounded by
\begin{align*}
\sum_{j=1}^{J-1} R_{\mathrm{ep}}(m_j)
&\le \frac{C(\cB)}{\post^{\star}} \sum_{j=1}^{J-1}(\tbias-m_j) \\
&\le \frac{C(\cB)}{\post^{\star}}\,\eps\sum_{r=1}^{J-1} r
\le \frac{C(\cB)}{\post^{\star}}\left(\frac{L^2}{2\eps}+\frac{L}{2}\right).
\end{align*}

It remains to count phases. Let $\eps_0=1/2$ be the initial step size. After each phase, the new
interval length is at most the previous step size, and the next step size is squared. Thus if $L_r$ and
$\eps_r$ denote the interval length and step size at the beginning of phase $r$, then
\[
L_{r+1}\le \eps_r,
\qquad
\eps_{r+1}=\eps_r^2.
\]
Hence
\[
\eps_r = 2^{-2^r},
\qquad
L_r\le \eps_{r-1}\le 1
\quad (r\ge 1),
\]
and in particular $\eps_r \ge L_r^2$ for every $r\ge 1$. Therefore every phase after the first incurs regret at most $C(\cB)/\post^{\star}+1$. The first phase also has
$L_0\le 1$ and $\eps_0=1/2$, so it contributes the same order.

Exploration terminates once the interval length becomes at most $1/T$, which is implied by
$\eps_r\le 1/T$. Since $\eps_r=2^{-2^r}$, the number of phases is at most
\[
P \le 1+\left\lceil \log_2\log_2 T\right\rceil.
\]
Therefore the total exploration regret is
\[
O\!\left(\left(\frac{C(\cB)}{\post^{\star}}+1\right)\log\log T\right).
\]

In the commitment stage, the algorithm commits to the left endpoint $\hat\bias$ of the final interval.
Because $\hat\bias\le \tbias$ and $|\tbias-\hat\bias|\le 1/T$, its one-round regret is
\[
\frac{\prior}{\post^{\star}}-\frac{\prior}{\post_\cB(\hat\bias)}
\le \frac{\post_\cB(\hat\bias)-\post^{\star}}{\post^{\star}}
\le \frac{C}{\post^{\star}}\,|\tbias-\hat\bias|
\le \frac{C}{T\post^{\star}}.
\]
Summing over the remaining $T$ rounds gives commitment regret $O(C(\cB)/\post^{\star})$. Combining the
exploration and commitment parts proves
\[
\Reg_T\!\left(\Pi^{\mathrm{SE}};\tbias\right)
= O\!\left(\left(\frac{2C(\cB)}{\post^{\star}}+1\right)\log\log T\right).
\]
Finally, since
\[
\post^{\star}=\frac{\qth-(1-\tbias)\prior}{\tbias},
\qquad
C(\cB)=\frac{(1-\prior)^2}{\qth-\prior},
\]
we obtain
\[
\frac{C(\cB)}{\post^{\star}}\leq \frac{C(\cB)}{\qth}
= \frac{(1-\prior)^2}{(\qth-\prior)\qth}.
\]

\subsection{Proof of Theorem \ref{thm:lowerbound}}
As discussed above, if $\tbias<\bias_{\min}(\cB)$ then persuasion is impossible and the regret is identically
zero. Hence it suffices to prove the lower bound on the subfamily $\tbias\in[\bias_{\min}(\cB),1]$. Under the
change of variables
\[
\post^{\star}=\prior+\frac{\qth-\prior}{\tbias},
\]
this is equivalent to proving a lower bound for an unknown threshold $\post^{\star}\in[\qth,1]$.

By Yao's principle \cite{yao1977probabilistic}, it is enough to exhibit a distribution over $\post^{\star}$ such that every
deterministic strategy suffers expected regret $\Omega(\log\log T)$. We choose $\post^{\star}$ uniformly on
$[\qth,1]$.

It is without loss of generality to restrict attention to
two-point schemes supported on $\{0,\post\}$ with $\post\in[\qth,1]$. Posteriors below $\prior$ can be
replaced by $0$. This modification strictly improves performance, as posteriors below $\prior$ yield zero utility. Moreover, shifting these values to $0$ minimizes the probability mass required for the lower signals, thereby allowing for a higher probability on the utility-generating posteriors.
For posteriors above $\prior$, denoted by $\overline{\nu}_1 < \overline{\nu}_2 < \cdots$, they can be decomposed 
into a convex combination of binary signal schemes $\{0, \overline{\nu}_1\}, \{0, \overline{\nu}_2\}, \dots$. 
Since we restrict our attention to deterministic algorithms, it suffices to consider a binary signal scheme 
with support $\{0, \nu\}$, which induces the useful posterior $\nu$ with probability $\frac{\prior}{\nu}$. We use High signal to denote this useful posterior $\nu$ in each signal scheme and Low signal to denote the posterior $0$.

We now work in \emph{event time}. Let $\tau_1<\tau_2<\cdots$ be the physical rounds in which a High
signal is realized, and let
\[
N:=\max\{k:\tau_k\le T\}
\]
be the total number of High signals by time $T$. Under a deterministic strategy, the history of High realizations only reveals comparisons of $\post^{\star}$ with queried values, so
conditional on the current history the posterior distribution of $\post^{\star}$ remains uniform on some
feasible interval $I_k=[a_k,b_k]$. Within each phase, the strategy may therefore be viewed as choosing a
deterministic descending sequence of probes until the first High signal that inducing action $0$ arrives.

Following Kleinberg and Leighton~\citep{kleinberg2003value}, define phase $k\ge 0$ to end after either
\[
M_k:=2^{2^k}-1
\]
High signals inducing action $1$ or the first High signal inducing action $0$, whichever comes first. Let $I_k=[a_k,b_k]$
denote the feasible interval at the beginning of phase $k$, and let
\[
E_k:=\left\{|I_k|\ge \frac{1-\qth}{2}\,2^{-2^k}\right\}.
\]
The same feasible-interval argument as in Claims~2.3--2.4 of Kleinberg and
Leighton~\citep{kleinberg2003value} gives
\[
\Prob(E_k)\ge \frac12.
\]
We will show that every phase contributes a constant expected regret.

Fix a phase $k$ and condition on $E_k$. Let
\[
c_k:=\frac{a_k+b_k}{2}
\]
be the midpoint of the feasible interval.

\paragraph{Case 1:} Some probe $\post<c_k$ is used during phase $k$. Since $\post\le 1$, the
corresponding High signal occurs with probability at least $\prior$. Also, conditional on $I_k$, the true
threshold is uniform on $I_k$, so with probability $1/2$ we have $\post^{\star}\ge c_k>\post$.
Whenever both events occur, the realized High signal induces action $0$ and the sender incurs instantaneous
regret at least
\[
\frac{\prior}{\post^{\star}}\ge \prior.
\]
Therefore, conditional on $E_k$, the expected regret in this phase is at least
\[
\frac{\prior}{2}\cdot \prior = \frac{\prior^2}{2}.
\]

\paragraph{Case 2:} Every probe used during phase $k$ satisfies $\post\ge c_k$. Conditional on
$\post^{\star}\le c_k$ (which has probability $1/2$ given $I_k$ is uniform), every realized High signal in
this phase induces action $1$. Hence the phase ends only when it accumulates exactly $M_k$ High
signals.

Let $S_k$ be the number of physical rounds in phase $k$, and for each round $t$ in this phase let
$H_t\in\{0,1\}$ be the indicator of a High realization. If $\mathcal F_t$ is the natural filtration, then
conditional on $\mathcal F_{t-1}$ the strategy's choice $\post_t$ is fixed and
\[
\E[H_t\given \mathcal F_{t-1}] = \frac{\prior}{\post_t} \le \frac{\prior}{c_k}.
\]
Thus $Z_t:=H_t-\E[H_t\given \mathcal F_{t-1}]$ is a bounded martingale difference sequence, and
optional stopping gives
\[
M_k
= \E\!\left[\sum_{t=1}^{S_k} H_t\right]
= \E\!\left[\sum_{t=1}^{S_k} \frac{\prior}{\post_t}\right]
\le \frac{\prior}{c_k}\,\E[S_k].
\]
Hence
\[
\E[S_k] \ge M_k\frac{c_k}{\prior}.
\]
On every physical round of this phase, because $\post_t\ge c_k$, the sender's one-round regret is at
least
\[
\frac{\prior}{\post^{\star}}-\frac{\prior}{c_k}
= \frac{\prior(c_k-\post^{\star})}{c_k\post^{\star}}.
\]
Multiplying by the lower bound on $\E[S_k]$ yields
\[
\E[\text{Regret in phase }k\given \post^{\star}\le c_k, E_k]
\ge M_k\,\E\!\left[\frac{c_k-\post^{\star}}{\post^{\star}} \given \post^{\star}\le c_k, E_k\right]
\ge M_k\,\E[c_k-\post^{\star}\given \post^{\star}\le c_k, E_k],
\]
where the last inequality uses $\post^{\star}\le 1$. Since $\post^{\star}$ is uniform on $[a_k,c_k]$ under the
conditioning,
\[
\E[c_k-\post^{\star}\given \post^{\star}\le c_k, E_k] = \frac{|I_k|}{4}.
\]
Therefore,
\[
\E[\text{Regret in phase }k\given E_k]
\ge \frac12\,M_k\frac{|I_k|}{4}
\ge \frac{1-\qth}{16}\,(1-2^{-2^k})
\ge \frac{1-\qth}{16}\cdot \frac12,
\]
so in particular the phase contributes at least a positive constant.

Combining the two cases, there exists a universal constant
\[
c_0(\cB):=\frac12\min\!\left\{\frac{\prior^2}{2},\frac{1-\qth}{16}\right\}>0
\]
such that every phase contributes at least $c_0(\cB)$ expected regret. Since the number of completed phases
is $\Omega(\log\log N)$, we obtain
\[
\Reg_T\ge c_0(\cB)\,\Omega(\log\log N).
\]

It remains to relate event time to physical time. For each physical round $t$, the chosen probe
satisfies $\post_t\le 1$, hence the High realization probability obeys
\[
\E[H_t\given \mathcal F_{t-1}] = \frac{\prior}{\post_t} \ge \prior.
\]
Therefore $X_t:=H_t-\E[H_t\given \mathcal F_{t-1}]$ is a bounded martingale difference sequence, and
Azuma--Hoeffding implies
\[
\Prob\!\left(N\le \frac{\prior T}{2}\right)
\le \exp\!\left(-\frac{\prior^2T}{8}\right).
\]
Thus with probability at least $1-e^{-\prior^2T/8}$ we have $N\ge \prior T/2$, and consequently
\[
\Reg_T
\ge \left(1-e^{-\prior^2T/8}\right)c_0(\cB)\log_2\log_2\!\left(\frac{\prior T}{2}\right)
= \Omega(\log\log T).
\]
This completes the proof.

\section{Missing Details in Section \ref{sec:general}}
\subsection{Constants and Notation}
\label{app:general-constants}

We collect the constants used in \Cref{sec:general}. All of them are independent of the
horizon \(T\).

Because states with zero prior probability are never reached, \(\prior\) can be viewed as full support on \(\states\) without loss of
generality. Then define
\[
\dprior
=
\min\left\{
\dist\!\left(\prior,\partial \areg{a_0}{1}\right),
\dist\left(\prior,\partial\simplex{\states}\right)
\right\}.\]
Here $\partial$ denote the boundary of one region.
Moreover, $\areg{\defaulta}{1}$ is the smallest default-action region over $\bias\in(0,1]$, so $\dprior$ is a uniform lower bound on the distance from $\prior$ to the nearest default-action region boundary.
By \Cref{ass:strict_default}, $\dprior>0$.
Also, Define \[
U_{\max}=\max_{a\in\actions,\omega\in\states}|u_S(a,\omega)|,
\qquad
\Dumax=2U_{\max}.
\]


\paragraph{Perturbation constants} Fix an action \(a\).  Let \(n:=|\states|\) and \(d:=n-1\).  Let
\[
    H_0:=\{x\in\mathbb R^n:\ones^\top x=0\},
\]
and let \(Q\in\mathbb R^{n\times d}\) have orthonormal columns spanning \(H_0\).  Fix any
\(\bar\post\) with \(\ones^\top\bar\post=1\).  Every vector \(\post\) with
\(\ones^\top\post=1\) can be written uniquely as
\[
    \post=\bar\post+Qz,
    \qquad z\in\mathbb R^d,
\]
and Euclidean distances are preserved:
\[
    \|\post-\post'\|_2=\|z-z'\|_2.
\]

Collect all inequality constraints defining \(\areg{J}{a}\): the IC constraints
\(\Delta u_{a,a'}^\top\post\ge b^J_{a,a'}\) and the simplex nonnegativity constraints
\(\post(\omega)\ge0\).  Stack their normals as rows of a matrix \(A_a\in\mathbb R^{m_a\times n}\)
and write
\[
    \areg{J}{a}
    =
    \{\post\in\mathbb R^n:
        \ones^\top\post=1,\ 
        A_a\post\ge b_a(J)
    \}.
\]
Define
\[
    \widetilde A_a:=A_aQ,
    \qquad
    \widetilde b_a(J):=b_a(J)-A_a\bar\post,
\]
and the reduced polytope
\[
    \widetilde R_a^J
    :=
    \{z\in\mathbb R^d:\widetilde A_a z\ge \widetilde b_a(J)\}.
\]
Then \(\post=\bar\post+Qz\in\areg{J}{a}\) if and only if
\(z\in\widetilde R_a^J\), and
\[
    \dist(\post,\areg{J}{a})
    =
    \dist(z,\widetilde R_a^J).
\]

For \(B\subseteq[m_a]\), let \(\widetilde A_{a,B}\) be the submatrix of \(\widetilde A_a\) with rows in
\(B\).  Define
\[
    \mathcal B_a
    :=
    \left\{
        B\subseteq[m_a]:
        1\le |B|\le d,
        \ \rank(\widetilde A_{a,B})=|B|
    \right\}.
\]
For \(B\in\mathcal B_a\), define the right pseudoinverse
\[
    \widetilde A_{a,B}^{\dagger}
    :=
    \widetilde A_{a,B}^{\top}
    \left(
        \widetilde A_{a,B}\widetilde A_{a,B}^{\top}
    \right)^{-1}.
\]
Finally we can define
\[
    \kappa_a
    :=
    \max_{B\in\mathcal B_a}
    \|\widetilde A_{a,B}^{\dagger}\|_2,
    \qquad
    \kappa:=\max_{a\in\actions}\kappa_a .
\]

\paragraph{Geometric constants}
We then define three new geometric constants:
\[
    G_{\max}=\max_{a\neq a'} |\Delta u_{a,a'}^\top \prior|,
    \qquad
    L_b:=
\max_{a\in\actions}
\sup_{\bias\in[\alpha_{\min}(\cI),1]}
\left\|
\frac{d}{d\bias}\widetilde b_a(\bias)
\right\|_2 .
\]
Both of them are finite and depends only on \(\cI\). 
For $L_b$, using
\(\alpha_{\min}(\cI)\ge\dprior/\sqrt{2}\), it can be upper bounded by an instance-dependent
multiple of \(\Gmax/\dprior^2\).

\paragraph{Regret constants}
The hidden constant in \Cref{cor:localization-regret} is
\[
\Cloc:=\frac{\Dumax}{\pmin}
=\frac{\sqrt{2}\Dumax}{\dprior}.
\]

The posterior-distance constant is
\[
\Kdist:=\kappa L_b .
\]
The Bayes-plausibility repair constant from \Cref{lemma:utility_stability_changed} is
\[
\Crep
:=
U_{\max}\left(\sqrt{|\states|}+\frac{4}{\dprior}\right).
\]
We use
\[
\Ksafe:=\Crep\Kdist,
\qquad
\Kprobe:=\Crep\Kdist
\]
for \Cref{lemma:safe-optimal,lemma:safe-probe}, respectively.

Let \(\cinf\) denote the constant \(c\) in the first branch of
\Cref{prop:samplecomplexity}. One valid phase-level hidden constant is
\[
\Cphase= \max\left\{\frac{2(\Ksafe+\Kprobe+\Dumax)}{c_{\mathrm{info}}},2\Kprobe+\Dumax\right\}.
\]
Thus the general-case constant in \Cref{thm:generalbound} may be chosen as any finite constant
satisfying
\[
\Cgse(\cI,\tbias)
\ge
\Cloc+\Cphase+1 .
\]

\subsection{Missing Details in the Localization Stage}
\label{app:localization}

Our main goal in the first stage is to conduct pure exploration and shrink the uncertainty interval as quickly as possible. This objective is aligned with \citet{chen2024bias}, whose algorithm is designed to test a single threshold with minimum sample complexity. 
Therefore, we adopt the same threshold-testing construction, see \Cref{alg:threshold_test}.
It designs a signaling scheme for threshold $\beta$ by solving a carefully designed LP, 
which constructs a direct-recommendation scheme whose non-default recommendations lie exactly on the receiver's decision boundary between non-default action and default action at $\beta$.
Conditional on such a non-default recommendation being realized, the observed receiver action reveals whether $\tbias\ge \beta$ or $\tbias<\beta$. 
The LP objective maximizes the probability of these informative non-default recommendations, equivalently minimizing the expected waiting time until the comparison is revealed.

\paragraph{The threshold-test LP and algorithm.}
Fix a test point \(\beta\).  We use a direct-recommendation scheme
\(\mech(a\mid\omega)\) by solving the following LP:
\begin{equation}
\begin{aligned}
\text{Maximize}\quad 
& \sum_{a\in \actions\setminus\{\defaulta\}}\ \sum_{\omega\in\states} \pi(a\mid\omega)\,\prior(\omega) \\[0.5ex]
\text{subject to:}\quad 
& \textbf{Optimality of $a$ over other actions:}\ \forall a\in \actions,\ \forall a'\in \actions\setminus\{a\} \\
& \sum_{\omega\in\states}\pi(a\mid\omega)\,\prior(\omega)\Big[\beta\Delta u_{a,a'}(\omega)
+(1-\beta)\sum_{\omega'\in\states}\prior(\omega')\Delta u_{a,a'}(\omega')\Big]\ge 0; \\[0.5ex]
& \textbf{Indifference between $a$ and $\defaulta$:}\ \forall a\in \actions\setminus\{\defaulta\}\\
& \sum_{\omega\in\states}\pi(a\mid\omega)\,\prior(\omega)\Big[\beta\Delta u_{a,\defaulta}(\omega)
+(1-\beta)\sum_{\omega'\in\states}\prior(\omega')\Delta u_{a,\defaulta}(\omega')\Big]= 0; \\[0.5ex]
& \textbf{Probability distribution constraints:}\ \forall \omega\in\states\\
& \sum_{a\in \actions}\pi(a\mid\omega)=1
\quad\text{and}\quad
\forall a\in \actions,\ \pi(a\mid\omega)\ge 0. 
\end{aligned}
\label{eq:direct-lp}
\end{equation}

\begin{algorithm}[H]
\caption{\textsc{ThresholdTest} at candidate bias \(\beta\)}
\label{alg:threshold_test}
\begin{algorithmic}[1]
\Require Current interval \(J=[\underbias,\upbias]\), test point \(\beta\in J\), general instance $\cI=(\states,\actions,\prior,u_S,u_R)$.
\State Solve the threshold-test LP at \(\beta\), and let \(\mech^\beta\) be an optimal solution.
\Repeat
    \State Play \(\mech^\beta\) for one round.
    \State Observe the realized recommendation action \(a_{\mathrm{rec}}\) and the receiver's actual action.
\Until{\(a_{\mathrm{rec}}\neq \defaulta\)}
\If{the receiver takes the default action \(a_0\) under recommendation $a_{\mathrm{rec}}\neq a_0$}
    \State \Return \([\underbias,\beta]\).
\Else
    \State \Return  \([\beta,\upbias]\). 
\EndIf
\end{algorithmic}
\end{algorithm}

Note, however, that as in the binary case, not every candidate bias is testable. When the bias is very small, the receiver may remain too close to the prior for any signal to affect her decision. We establish below the existence of an instance-dependent detectable threshold $\bias_{\min}(\cI)$. Since the absence of persuasion implies zero regret, and one additional test suffices to detect whether persuasion is possible, we focus on $\tbias\in[\bias_{\min}(\cI),1]$ in our analysis.

\begin{proposition}
\label{prop:detectable-localization}
There exists an instance-dependent threshold $\bias_{\min}(\cI)\in[\dprior/\sqrt{2},1]$ such that threshold testing is feasible if and only if $\beta\ge\bias_{\min}(\cI)$. 
Moreover, for every feasible candidate bias, the testing scheme produces an informative realization with probability at least $\dprior/\sqrt{2}$. 
\end{proposition}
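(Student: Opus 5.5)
The plan is to work entirely in the posterior view of the threshold-test LP \eqref{eq:direct-lp}. A direct-recommendation scheme $\mech$ corresponds to a Bayes-plausible distribution $\postdist$ over posteriors $\post_a$, one per recommended action, with weights $p_a=\sum_\omega \mech(a\mid\omega)\prior(\omega)$. After dividing by $p_a$, the IC constraints in \eqref{eq:direct-lp} say exactly that $\post_a\in\areg{a}{\beta}$. The indifference constraint says that $\post_a$ lies on the hyperplane $\Delta u_{a,\defaulta}^\top\post=b_{a,\defaulta}(\beta)$ for $a\neq\defaulta$. The LP objective is $1-p_{\defaulta}$, which is the per-round probability of an informative realization.

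\textbf{Feasibility threshold.} The LP is always feasible via the scheme that recommends $\defaulta$ with probability one. So "testing is feasible" should mean that the optimal value is positive, i.e.\ some non-default posterior on the $\beta$-indifference boundary lies in the simplex. Define
\[
\bias_{\min}(\cI)=\inf\{\beta\in(0,1]:\ \exists a\neq\defaulta,\ \areg{a}{\beta}\cap\areg{\defaulta}{\beta}\neq\emptyset\}.
\]
I will then show monotonicity in $\beta$. The key fact is that the distorted-posterior map $\post\mapsto(1-\beta)\prior+\beta\post$ sends the simplex onto a shrunken copy $\prior+\beta(\simplex{\states}-\prior)$. The condition "some distorted belief in this copy has $a$ tied with $\defaulta$ and weakly above all others" is monotone in $\beta$, because the copies are nested and increase with $\beta$; the set of distorted beliefs at which $a$ ties $\defaulta$ and is optimal does not depend on $\beta$. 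Hence feasibility holds iff $\beta\ge\bias_{\min}(\cI)$, with closedness at the infimum coming from compactness.

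For the lower bound, distorted beliefs within distance $\beta\cdot\mathrm{diam}(\simplex{\states})\le\beta\sqrt2$ of $\prior$ stay at least $\dist(\prior,\partial\areg{\defaulta}{1})$ away from any point where $\defaulta$ stops being the unique optimum, since distorted-belief regions are exactly the $\bias=1$ regions. Any boundary distorted belief therefore needs $\beta\sqrt2\ge\dprior$, giving $\bias_{\min}(\cI)\ge\dprior/\sqrt2$.

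\textbf{Informative probability.} This is the main step. For feasible $\beta$, pick a non-default action $a$ and a boundary posterior $\post_a\in\areg{a}{\beta}$ on the indifference hyperplane. Form the split $\prior=p\post_a+(1-p)\post_0$ with $\post_0=\prior-\tfrac{p}{1-p}(\post_a-\prior)$ on the ray opposite $\post_a$ through $\prior$. We need $\post_0$ to lie in the simplex and in $\areg{\defaulta}{\beta}$. Since $\dist(\prior,\partial\simplex{\states})\ge\dprior$ and $\prior$ lies deep inside the default region, moving from $\prior$ away from $\post_a$ by distance at most $\dprior$ is safe. $\areg{\defaulta}{\beta}\supseteq\areg{\defaulta}{1}$ should hold along the direction away from $\post_a$: the default constraints $\Delta u_{\defaulta,a'}^\top\prior>0$ make $b_{\defaulta,a'}(\beta)\le b_{\defaulta,a'}(1)=0$, so $\areg{\defaulta}{\beta}\supseteq\areg{\defaulta}{1}\ni$ the $\dprior$-ball around $\prior$.

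Thus we may take $\tfrac{p}{1-p}\|\post_a-\prior\|=\dprior$. With $\|\post_a-\prior\|\le\sqrt2$ this gives $p\ge\dprior/(\sqrt2+\dprior)$. Since the LP optimum is at least this $p$, the informative probability is of order $\dprior/\sqrt2$. The exact constant $\dprior/\sqrt2$ needs a slightly sharper accounting, e.g.\ using $p/(1-p)\ge p$, which gives $p\ge\dprior/\sqrt2$ directly when we set $p\|\post_a-\prior\|\le\dprior$.

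I expect the main obstacle to be justifying that the tie-breaking and the indifference placement really make the realized action reveal $\tbias$ versus $\beta$. This should follow from the monotone movement of $b_{a,\defaulta}(\bias)$ noted after \eqref{eq:interval-safe}, using that $\Delta u_{a,\defaulta}^\top\prior<0$.
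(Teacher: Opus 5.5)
Your structure matches the paper's. The nested shrunken simplices $\mu_0+\beta(\Delta(\Omega)-\mu_0)$ are the paper's rescaling $\nu'=(\beta/\beta')\nu+(1-\beta/\beta')\mu_0$ viewed in distorted coordinates. The bound $\alpha_{\min}\ge\delta_{\mu_0}/\sqrt2$ uses the same distance argument. Your explicit two-point split is a constructive version of the paper's identity $p_\beta=\|\mu_0-\nu_{\mathrm{def}}\|_2/\|\nu_{\mathrm{use}}-\nu_{\mathrm{def}}\|_2$; the paper instead argues that an optimal default posterior can be pushed out to the boundary. Reading feasibility as positivity of the LP value and deriving the characterization directly, rather than citing Chen et al., is fine, and so is the closedness remark. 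Everything up to $p\ge\delta_{\mu_0}/(\sqrt2+\delta_{\mu_0})$ is correct.

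\textbf{The gap is the final upgrade to the stated constant $\delta_{\mu_0}/\sqrt2$.} What certifies the default posterior is $\|\nu_0-\mu_0\|_2=\tfrac{p}{1-p}\|\nu_a-\mu_0\|_2\le\delta_{\mu_0}$. The inequality $\tfrac{p}{1-p}\ge p$ points the wrong way for this. If you impose only $p\|\nu_a-\mu_0\|_2\le\delta_{\mu_0}$, e.g.\ $p=\delta_{\mu_0}/\|\nu_a-\mu_0\|_2$, then $\nu_0$ sits at distance $\delta_{\mu_0}/(1-p)>\delta_{\mu_0}$, outside the certified ball. It can then fail to be a probability vector.

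Here is a concrete case. Take binary states, $\mu_0=(0.75,0.25)$, and a distorted-belief cutoff of $0.5$, so $\delta_{\mu_0}=0.25\sqrt2$. Take $\beta=5/7$, which is feasible and has indifference posterior $\nu_a=(0.4,0.6)$. Your choice gives $p=5/7$ and $\nu_0(1)=-0.625$.

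The repair is to apply the diameter bound to the full chord rather than the half-chord. Keep $\|\nu_0-\mu_0\|_2=\delta_{\mu_0}$, so that $\nu_0\in R^{1}_{a_0}\subseteq R^{\beta}_{a_0}$ and $\nu_0$ is a probability vector. Since $\mu_0$ lies on the segment between $\nu_a$ and $\nu_0$, Bayes plausibility gives $p=\|\mu_0-\nu_0\|_2/\|\nu_a-\nu_0\|_2=\delta_{\mu_0}/\|\nu_a-\nu_0\|_2$. Both endpoints are probability vectors, so $\|\nu_a-\nu_0\|_2\le\sqrt2$, and hence $p\ge\delta_{\mu_0}/\sqrt2$. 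In the example this gives $p=5/12$ with $\nu_0=(1,0)$.

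This is exactly the paper's accounting: numerator at least $\delta_{\mu_0}$, denominator at most $\sqrt2$. Your weaker constant would still suffice for the regret analysis, but not for the statement as written.
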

\begin{proof}
    For every non-default action \(a\neq \defaulta\), define the indifference hyperplane in distorted-belief
space
\[
    \mathcal I_a
    :=
    \left\{
        \dpost\in\simplex{\states}:
        \Delta u_{a,\defaulta}^\top \dpost =0
    \right\}.
\]
For a candidate bias level \(\beta\in[0,1]\), define its pullback to Bayesian-posterior space:
\[
    \mathcal I_{a,\beta}
    :=
    \left\{
        \post\in\simplex{\states}:
        (1-\beta)\prior+\beta\post
        \in \mathcal I_a
    \right\}.
\]

From Theorem 5.2 in \cite{chen2024bias}, we have that 
For a candidate \(\beta\in[0,1]\), threshold testing between
\(\tbias\ge \beta\) and \(\tbias<\beta\) is feasible if and only if
\(\mathcal I_{a,\beta}\neq\emptyset\) for some \(a\neq\defaulta\). Namely, a threshold
test at level \(\beta\) exists exactly when some non-default action can be placed on an indifference
boundary with the default action at the distorted belief generated by \(\beta\).

We now show that the set of feasible thresholds is an upper interval:
there exists \(\alpha_{\min}(I)\in[\dprior/\sqrt{2},1]\) such that \(\beta\) is testable if and only if
\(\beta\ge \alpha_{\min}(I)\).

 Let
\[
    \mathcal S
    :=
    \left\{
        \beta\in[0,1]:
        \exists a\neq\defaulta
        \text{ such that }
        \mathcal I_{a,\beta}\neq\emptyset
    \right\}.
\]
If \(\beta\in\mathcal S\) and \(\beta'\ge \beta\), choose
\(\post\in\mathcal I_{a,\beta}\).  Define
\[
    \post'
    :=
    \frac{\beta}{\beta'}\post
    +
    \left(1-\frac{\beta}{\beta'}\right)\prior .
\]
Then \(\post'\in\simplex{\states}\), and
\[
    (1-\beta')\prior+\beta'\post'
    =
    (1-\beta)\prior+\beta\post
    \in \mathcal I_a .
\]
Hence \(\post'\in\mathcal I_{a,\beta'}\), so \(\beta'\in\mathcal S\).  Therefore
\(\mathcal S\) is an upper interval, and we may set
\[
    \alpha_{\min}(I):=\inf \mathcal S .
\]
Finally, suppose \(\beta\in\mathcal S\).  Then for some \(a\neq\defaulta\) there exists
\(\post\in\simplex{\states}\) such that
\[
    \dpost:=(1-\beta)\prior+\beta\post\in\mathcal I_a .
\]
Since \(\mathcal I_a\) is contained in the boundary of the Bayesian default-action region, the
definition of \(\dprior\) gives
\[
    \|\dpost-\prior\|_2\ge \dprior .
\]
On the other hand,
\[
    \|\dpost-\prior\|_2
    =
    \beta\|\post-\prior\|_2
    \le
    \beta\sqrt{2},
\]
because both \(\post\) and \(\prior\) lie in the simplex.  Thus
\(\beta\ge \dprior/\sqrt{2}\), and in particular
\(\alpha_{\min}(I)\ge \dprior/\sqrt{2}\).

Now it remains to lower bound the useful-signal probability.  Let \(p_\beta\) denote the optimal objective
value of the threshold-test LP, i.e., the total probability of recommending a non-default action.  Let
\(\post_{\mathrm{use}}\) be the probability-weighted average posterior conditional on a non-default
recommendation, and let \(\post_{\mathrm{def}}\) be the posterior conditional on the default
recommendation.  Bayes plausibility gives
\[
    \prior
    =
    p_\beta \post_{\mathrm{use}}
    +
    (1-p_\beta)\post_{\mathrm{def}} .
\]
Equivalently,
\[
    p_\beta
    =
    \frac{\|\prior-\post_{\mathrm{def}}\|_2}
         {\|\post_{\mathrm{use}}-\post_{\mathrm{def}}\|_2}.
\]
For an optimal threshold-test scheme, the default posterior can be chosen on the boundary of the
Bayesian default-action region.  Otherwise, if \(\post_{\mathrm{def}}\) were strictly inside that region,
moving it farther away from \(\prior\) along the ray passing through \(\post_{\mathrm{def}}\) would
preserve default optimality and would strictly increase the feasible probability of non-default
recommendations, contradicting optimality of the LP objective.  Hence
\[
    \|\prior-\post_{\mathrm{def}}\|_2\ge \dprior .
\]
Since any two points in the simplex are at Euclidean distance at most \(\sqrt{2}\),
\[
    p_\beta
    \ge
    \frac{\dprior}{\sqrt{2}} .
\]
Thus every feasible threshold-test scheme emits a useful non-default recommendation with probability
at least \(\dprior/\sqrt{2}\).
\end{proof}

In this stage, our algorithm stops once the uncertainty interval satisfies $|J|=O(1/\log T)$, which requires $O(\log\log T)$ threshold tests. By \Cref{prop:detectable-localization}, the probability of obtaining an informative realization is bounded below by a constant. Hence, each threshold test has constant sample complexity, and the total regret incurred in this stage is $O(\log\log T)$.

\begin{proposition}
\label{cor:localization-regret}
Given the target number of informative signals in the localization stage is $O(\log\log T)$, the expected regret during the localization stage is $O(\log \log T)$.
\end{proposition}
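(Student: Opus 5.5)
The plan is to split the localization-stage regret into a number of threshold tests times the expected regret per test, and to bound each factor separately.

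\textbf{Counting tests.} Stage 1 of \Cref{algo:G-SETC} runs \textsc{ThresholdTest} at the midpoint of $J$. Each completed test returns either $[\underbias,\beta]$ or $[\beta,\upbias]$, so it halves $|J|$. Starting from $|J|\le 1$, the stage ends after at most
\[
K:=\lceil \log_2\log T\rceil + 1 = O(\log\log T)
\]
tests, since after that many halvings $|J|\le 1/\log T$. This matches the stated target number of informative signals. If $\tbias<\bias_{\min}(\cI)$, persuasion is impossible and regret is zero. The single extra test deciding this adds only a constant. So we may assume every tested $\beta$ lies in $[\bias_{\min}(\cI),1]$, where \Cref{prop:detectable-localization} applies.

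\textbf{Regret per test.} Fix one call of \textsc{ThresholdTest} at $\beta$.
\begin{itemize}
\item[(i)] Per round, regret is at most $\Dumax$. This holds because both the benchmark value and the realized value $\ivalue{\cI}{\tbias}{\post}$ lie in $[-U_{\max},U_{\max}]$.
\item[(ii)] The number of rounds $N_\beta$ until a non-default recommendation is geometric with success probability $p_\beta$. The draws are i.i.d. because $\mech^\beta$ is fixed once the LP is solved. By \Cref{prop:detectable-localization}, $p_\beta\ge \pmin=\dprior/\sqrt{2}$, so $\E[N_\beta]\le 1/\pmin$.
\item[(iii)] Combining (i) and (ii), the expected regret of the test is at most $\Dumax\,\E[N_\beta]\le \Dumax/\pmin=\Cloc$.
\end{itemize}
The bound $\pmin$ is uniform over $\beta$, so the estimate in (iii) does not depend on the current history.

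\textbf{Summing over tests.} The test points are adaptive, so I use Wald's identity or the tower property over the sequence of tests. Conditioning on the history at the start of the $k$-th test, that test costs at most $\Cloc$ in expectation. Summing over $k\le K$ gives total expected regret at most $\Cloc K=O(\log\log T)$. Truncation by the horizon $T$ only removes rounds, so it can only lower the regret.

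\textbf{Main obstacle.} The only delicate point is applying the uniform lower bound on $p_\beta$. It requires every tested $\beta$ to be feasible, meaning $\beta\ge\bias_{\min}(\cI)$. This holds because $J$ is initialized to $[\bias_{\min}(\cI),1]$, and every midpoint of a subinterval of it stays at or above $\bias_{\min}(\cI)$.
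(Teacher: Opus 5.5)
Your proof is correct and follows the paper's argument: the paper also dominates each waiting time between informative recommendations by a geometric variable with success probability $\dprior/\sqrt{2}$ (from \Cref{prop:detectable-localization}), sums over the $O(\log\log T)$ targeted signals, and multiplies by the per-round bound $\Dumax$, giving the same constant $\Cloc=\sqrt{2}\Dumax/\dprior$. Your additional checks are details the paper leaves implicit: that midpoints stay in $[\bias_{\min}(\cI),1]$, that the tower property handles adaptively chosen tests, and that truncation by the horizon only lowers regret.
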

\begin{proof}
    The targeted number of informative signals are $N=O(\log\log T)$. Let \(X_k\) be the number of rounds between the \((k-1)\)-st and \(k\)-th useful recommendation.
By \Cref{prop:detectable-localization}, each \(X_k\) is stochastically dominated by a geometric random variable with
success probability \(\dprior/\sqrt{2}\).  Therefore
\[
    \E[X_k]\le \frac{\sqrt{2}}{\dprior},
    \qquad
    \E\left[\sum_{k=1}^N X_k\right]
    \le
    \frac{\sqrt{2}N}{\dprior}
    =
    O(\log\log T).
\]
Each physical round contributes at most \(\Delta U_{\max}\) regret. The hidden constant here is $\sqrt{2}\Delta U_{\max}/\dprior$.
\end{proof}



\subsection{Relevant Actions}
\label{app:critial-value}
\label{app:relevant_action}
Recall the relevant-action set
\[
    \actions_{\mathrm{rel}}
=
\Bigl\{a\in\actions:
\exists \post\in\simplex{\states}\text{ that }a\text{ uniquely maximizes }
\sum_{\omega\in\states}(\tbias\post(\omega)+(1-\tbias)\prior(\omega))u_R(a,\omega)
\Bigr\}.
\]
The following discussion shows that without excluding those actions, it is impossible to guarantee low regret.
\begin{remark}[Why tie-only actions are excluded]
Suppose \(a^\circ\notin\actions_{\mathrm{rel}}\).  Then the set of distorted beliefs at which
\(a^\circ\) can be selected is contained in a finite union of indifference hyperplanes.  Consequently,
for any non-degenerate interval \(J=[\underbias,\upbias]\), the set of Bayesian posteriors that induce
\(a^\circ\) for every \(\bias\in J\) has empty relative interior in \(\simplex{\states}\).  Thus any
scheme that relies on such an action is not stable to arbitrarily small perturbations of the unknown
bias parameter.

This is not only a technical inconvenience.  Consider a binary-state example in which \(a^\circ\)
is selected by tie-breaking only at the distorted belief \(\dpost(1)=1/2\), and suppose the sender
receives utility \(1\) from \(a^\circ\) and utility \(0\) from all other actions.  For each known
\(\bias\), a full-information sender can target the unique Bayesian posterior
\[
    \post_\bias(1)
    =
    \frac{1/2-(1-\bias)\prior(1)}{\bias}
\]
whenever it is feasible.  But if \(\bias\) is unknown, inducing \(a^\circ\) requires hitting this
posterior exactly.  Any interval-safe learning rule fails to induce \(a^\circ\) on a whole open interval
of possible bias values, while the full-information benchmark obtains a positive value.  Hence
including tie-only actions can create linear regret against the full-information benchmark.  
\end{remark}

We also recall the definition of interval-safe action region within $J$ and strict interval-safe region.
\[
    \areg{a}{J}=
    \bigcap_{\bias\in J}\areg{a}{\bias}
    =
    \Bigl\{
    \post\in\simplex{\states}:
    \Delta u_{a,a'}^\top\post\ge b_{a,a'}(J)=\max\cbr{b_{a,a'}(\underbias), b_{a,a'}(\upbias)},\ \forall a'\neq a
    \Bigr\}.
\]
\[
    \operatorname{int}_{\mathrm{IC}}\!\left(\areg{a}{J}\right)
    =
    \left\{
        \post\in\simplex{\states}:
        \Delta u_{a,a'}^\top\post>b_{a,a'}(J),
        \ \forall a'\neq a
    \right\}.
\]
We further show that for sufficiently small intervals $J\ni \tbias$, we can identify $A_{\mathrm{rel}}$ through checking whether $\operatorname{int}_{\mathrm{IC}}\!\left(\areg{a}{J}\right)\neq \emptyset$. After the localization stage returns an interval \(J=[\underbias,\upbias]\) of length \(O(1/\log T)\), for all sufficiently large \(T\), $J$ satisfies this length requirement.

\begin{lemma}[Local equivalence of relevant actions]
\label{lemma:relevant-action}
Fix the instance \(I\) and the true bias \(\tbias\). There exists
\(\epsilon_{\rm rel}(I,\tbias)>0\) such that for every interval
\(J\ni\tbias\) with \(|J|\le \epsilon_{\rm rel}(I,\tbias)\), and every action
\(a\in A\),
\[
a\in A_{\rm rel}
\quad\Longleftrightarrow\quad
\operatorname{int}_{\mathrm{IC}}\!\left(\areg{a}{J}\right)\neq \emptyset
\]
Consequently, on sufficiently small localized intervals, the computable
strict-feasibility check in \Cref{sec:interval-safe} keeps exactly the relevant actions.
\end{lemma}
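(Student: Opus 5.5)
We argue the two directions separately, working throughout with the true-bias region $\areg{a}{\tbias}$ and its strict version $\operatorname{int}_{\mathrm{IC}}(\areg{a}{\tbias})=\{\post\in\simplex{\states}:\Delta u_{a,a'}^\top\post>b_{a,a'}(\tbias)\ \forall a'\neq a\}$. The key observation is that for each fixed $\post$, the condition ``$a$ uniquely maximizes the distorted objective at $\tbias$'' is exactly $\post\in\operatorname{int}_{\mathrm{IC}}(\areg{a}{\tbias})$. Dividing the strict inequality $\Delta u_{a,a'}^\top((1-\tbias)\prior+\tbias\post)>0$ by $\tbias>0$ gives $\Delta u_{a,a'}^\top\post>b_{a,a'}(\tbias)$. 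Hence $a\in\actions_{\rm rel}$ iff $\operatorname{int}_{\mathrm{IC}}(\areg{a}{\tbias})\neq\emptyset$. It therefore suffices to compare the strict regions at the point $\tbias$ and on the interval $J$.

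\emph{($\Leftarrow$).} Since $\tbias\in J$ and $b_{a,a'}(J)=\max\{b_{a,a'}(\underbias),b_{a,a'}(\upbias)\}\ge b_{a,a'}(\tbias)$ by monotonicity of $b_{a,a'}$, any $\post\in\operatorname{int}_{\mathrm{IC}}(\areg{a}{J})$ also lies in $\operatorname{int}_{\mathrm{IC}}(\areg{a}{\tbias})$. Thus $a\in\actions_{\rm rel}$, for every $J\ni\tbias$ and with no smallness needed.

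\emph{($\Rightarrow$).} Fix $a\in\actions_{\rm rel}$ and pick a witness $\post_a\in\operatorname{int}_{\mathrm{IC}}(\areg{a}{\tbias})$. Define its slack
\[
s_a:=\min_{a'\neq a}\bigl(\Delta u_{a,a'}^\top\post_a-b_{a,a'}(\tbias)\bigr)>0,
\]
which is a minimum of finitely many positive numbers. The function $b_{a,a'}(\bias)=\frac{\bias-1}{\bias}\Delta u_{a,a'}^\top\prior$ has derivative $\Delta u_{a,a'}^\top\prior/\bias^2$, bounded by $\Gmax/\bias^2$. Work only with $J\subseteq[\tbias/2,1]$, which is ensured by taking $|J|\le\tbias/2$. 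On such $J$ we get $|b_{a,a'}(J)-b_{a,a'}(\tbias)|\le 4\Gmax|J|/\tbias^2$. Now set
\[
\epsilon_{\rm rel}(I,\tbias):=\min\Bigl\{\tbias/2,\ \min_{a\in\actions_{\rm rel}} s_a\tbias^2/(8\Gmax)\Bigr\},
\]
interpreting the second term as $+\infty$ if $\Gmax=0$. With this choice, $\Delta u_{a,a'}^\top\post_a-b_{a,a'}(J)\ge s_a/2>0$ for all $a'\neq a$. So the same witness $\post_a$ lies in $\operatorname{int}_{\mathrm{IC}}(\areg{a}{J})$. Because $\actions$ is finite, $\epsilon_{\rm rel}$ is a positive constant depending only on $I$ and $\tbias$.

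The main subtlety is making sure that the ``unique maximizer'' definition of $\actions_{\rm rel}$ matches the strict IC region. The simplex constraint is not strict in either object, so it is identical on both sides, and only the IC inequalities change with $\bias$. Once this matching is checked, the argument is a continuity argument. The consequence in the lemma then follows directly: the localization stage produces $|J|=O(1/\log T)$, which is at most $\epsilon_{\rm rel}$ for large $T$.
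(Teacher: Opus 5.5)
Your proof is correct and follows essentially the same route as the paper. The ($\Leftarrow$) direction uses $b_{a,a'}(J)\ge b_{a,a'}(\tbias)$, and the ($\Rightarrow$) direction fixes a witness with positive slack at $\tbias$ and uses Lipschitz continuity of $b_{a,a'}$ to keep that witness strictly feasible on short $J$. The differences are cosmetic: you take an arbitrary witness rather than the paper's LP-defined maximal slack $s_a(\tbias)$, and you bound the derivative on $[\tbias/2,1]$ rather than $[\alpha_{\min}(I),1]$. The latter choice gives an explicit $\epsilon_{\rm rel}$ and avoids needing $\alpha_{\min}(I)>0$.
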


\begin{proof}
Define the single-bias maximal slack
\[
s_a(\tbias)
:=
\max_{\nu\in\Delta(\Omega),\,s\in\mathbb R}
s
\quad
\text{s.t.}\quad
\Delta u_{a,a'}^\top\nu \ge b_{a,a'}(\tbias)+s,
\ \forall a'\neq a.
\]
By definition, \(a\in A_{\rm rel}\) if and only if \(s_a(\tbias)>0\).

First suppose \(\operatorname{int}_{\mathrm{IC}}\!\left(\areg{a}{J}\right)\neq \emptyset\). Then there exists \(\nu\in\Delta(\Omega)\) such
that
\[
\Delta u_{a,a'}^\top\nu>b_{a,a'}(J),
\qquad \forall a'\neq a.
\]
Since \(J\ni\tbias\), we have
\[
b_{a,a'}(J)\ge b_{a,a'}(\tbias)
\]
for every \(a'\neq a\). Hence the same posterior has strict IC slack at
\(\tbias\), so \(s_a(\tbias)>0\), and therefore \(a\in A_{\rm rel}\).

Conversely, suppose \(a\in A_{\rm rel}\). Then \(s_a(\tbias)>0\). Let
\[
\bar s_a:=s_a(\tbias)>0.
\]
Choose \(\nu_a\in\Delta(\Omega)\) such that
\[
\Delta u_{a,a'}^\top\nu_a
\ge
b_{a,a'}(\tbias)+\bar s_a/2,
\qquad \forall a'\neq a.
\]
The functions \(b_{a,a'}(\alpha)\) are Lipschitz on
\([\alpha_{\min}(I),1]\). Hence there exists \(\epsilon_a>0\) such that for
every interval \(J\ni\tbias\) with \(|J|\le\epsilon_a\),
\[
b_{a,a'}(J)
\le
b_{a,a'}(\tbias)+\bar s_a/4,
\qquad \forall a'\neq a.
\]
Therefore
\[
\Delta u_{a,a'}^\top\nu_a
\ge
b_{a,a'}(J)+\bar s_a/4,
\qquad \forall a'\neq a,
\]
so \(s_a(J)>0\) and $\operatorname{int}_{\mathrm{IC}}\!\left(\areg{a}{J}\right)\neq \emptyset$.

Taking the minimum of \(\epsilon_a\) over the finite set \(A_{\rm rel}\) gives
\(\epsilon_{\rm rel}(I,\tbias)>0\). This proves the equivalence.
\end{proof}

\subsection{Missing Details in Section \ref{sec:safe_exploration}}
\subsubsection{Full algorithm in the safe exploration stage}
\label{app:safeexplore-alg}
See the full safe exploration algorithm in \Cref{alg:sefi}, with subroutine in \Cref{alg:vertexsafescheme,alg:decompose}. We also introduce several LPs below, which are used in the algorithms.
\begin{algorithm}[ht]
\caption{\textsc{SafeExplore} on a fixed interval}
\label{alg:sefi}.
\begin{algorithmic}[1]
\Require Current interval \(J=[\underbias,\upbias]\), instance \(I\).
\State \(L\leftarrow |J|\), \(\eta\leftarrow L^2\).
\State Compute
\(
    \scheme_{\mathrm{vtx}}^J
    =
    \{(p_i,\vertex_i^J,a_i)\}_{i=1}^{M_J}
\)
using \Cref{alg:vertexsafescheme}. \Comment{\textsc{VertexSafeScheme}}.
\If{\(\up(\scheme_{\mathrm{vtx}}^J)=0\)}
    \State \Return \((J,\scheme_{\mathrm{vtx}}^J)\).
\EndIf
\For{each \(i\in\movv(\scheme_{\mathrm{vtx}}^J)\)}\Comment{\textsc{ChoosemovableBindingConstraint}}
    \State Choose one movable binding IC constraint
    \(
        \Delta u_{a_i,a_i'}^\top\post=b_{a_i,a_i'}(\beta_i),
        \qquad
        \beta_i\in\{\underbias,\upbias\}.
    \)
    \If{\(\beta_i=\underbias\)}
        \State Mark \(i\) as \textsc{Lower}.
    \Else
        \State Mark \(i\) as \textsc{Upper}.
    \EndIf
\EndFor
\State \((\ell,r)\leftarrow(\underbias,\upbias)\).
\State \(\scheme_{\mathrm{last}}\leftarrow\scheme_{\mathrm{vtx}}^J\).
\While{\(r-\ell>\eta\)}
    \For{each \(i\in\movv(\scheme_{\mathrm{vtx}}^J)\)}\Comment{\textsc{BuildProbe}}
        \If{\(i\) is \textsc{Lower}}
            \State \(m_i\leftarrow \ell+\eta\).
        \Else
            \State \(m_i\leftarrow r-\eta\).
        \EndIf
        \State Replace \(\vertex_i^J\) by the probe posterior
        \(\vertex_i^{\mathrm{pr}}(m_i)\) using Euclidean projection.
    \EndFor
    \State Leave non-informative posteriors unchanged.
    \State Add at most one correction posterior, if needed, to restore Bayes plausibility. \Comment{\Cref{lemma:utility_stability_changed}}
    \State Let the resulting probe scheme be \(\scheme_{\mathrm{probe}}^{J,\eta}(\ell,r)\).
    \State \(\scheme_{\mathrm{last}}\leftarrow \scheme_{\mathrm{probe}}^{J,\eta}(\ell,r)\).
    \State Play \(\scheme_{\mathrm{probe}}^{J,\eta}(\ell,r)\) until the first informative signal \(i\) is realized. \Comment{\textsc{RunUntilInformative}}
    \State Observe the receiver's action \(a_t\) on that signal.
    \If{\(i\) is \textsc{Lower}} \Comment{ \textsc{RefineInterval}}
        \If{\(a_t=a_i\)}
            \State \(\ell\leftarrow \ell+\eta\).
        \Else
            \State \Return \(([\ell,\ell+\eta],\scheme_{\mathrm{last}})\).
        \EndIf
    \Else
        \If{\(a_t=a_i\)}
            \State \(r\leftarrow r-\eta\).
        \Else
            \State \Return \(([r-\eta,r],\scheme_{\mathrm{last}})\).
        \EndIf
    \EndIf
\EndWhile
\State \Return \(([\ell,r],\scheme_{\mathrm{last}})\).
\end{algorithmic}
\end{algorithm}

\begin{algorithm}[ht]
\caption{\textsc{VertexSafeScheme} on interval \(J\)}
\label{alg:vertexsafescheme}
\begin{algorithmic}[1]
\Require Interval \(J=[\underbias,\upbias]\), instance \(\cI\).
\State  For every \(a\in A\), construct \(R_a^J\) and test whether
   \(\operatorname{int_{IC}}(R_a^J)\neq\emptyset\).
\State Solve the safe direct-recommendation LP (\eqref{eq:safe-lp}) on \(J\).
\State Obtain posterior/action pairs \(\{(\lambda_a,\bar\post_a,a):\lambda_a>0\}\).
\For{each \(a\) with \(\lambda_a>0\)}
    \State Decompose \(\bar\post_a\) into vertices of the polytope \(\areg{a}{J}\) using \Cref{alg:decompose}:
    \[
        \bar\post_a
        =
        \sum_{m=1}^{q_a}\gamma_{a,m}\vertex_{a,m}^J,
        \qquad
        \vertex_{a,m}^J\in\operatorname{vert}(\areg{a}{J}).
    \]
    \State Replace \((\lambda_a,\bar\post_a,a)\) by
    \(\{(\lambda_a\gamma_{a,m},\vertex_{a,m}^J,a)\}_{m=1}^{q_a}\).
\EndFor
\State Merge all resulting atoms and return
\[
    \scheme_{\mathrm{vtx}}^J
    =
    \{(p_i,\vertex_i^J,a_i)\}_{i=1}^{M_J}.
\]
\end{algorithmic}
\end{algorithm}

\begin{algorithm}[ht]
\caption{\textsc{VertexDecompose} a posterior in a safe polytope}
\label{alg:decompose}
\begin{algorithmic}[1]
\Require A polytope \(P=\areg{a}{J}\) and a posterior \(\post\in P\).
\State Initialize a working vertex set \(V\) by solving the initialization LP.
\Repeat
    \State Solve the restricted membership LP.
    \If{the membership LP is feasible}
        \State \Return the resulting decomposition.
    \EndIf
    \State Solve the separation LP to find \(c\) with
    \(c^\top\post>\max_{\vertex\in V}c^\top\vertex\).
    \State Solve the pricing LP to get a new vertex $\vertex^{\mathrm{new}}$, add it to \(V\).
\Until{membership is feasible}
\end{algorithmic}
\end{algorithm}

\paragraph{Safe direct-recommendation LP.}
A safe direct-recommendation scheme can be computed through the variables \(x_{a,\omega}\ge0\),
interpreted as the joint probability of state \(\omega\) and recommendation \(a\):
\begin{equation}
\begin{aligned}
\max_{x_{a,\omega}\ge 0}\quad
& \sum_{a\in A}\sum_{\omega\in\Omega} x_{a,\omega}u_S(a,\omega) \\
\text{s.t.}\quad
& \sum_{a\in A}x_{a,\omega}=\mu_0(\omega),
&& \forall \omega\in\Omega,\\
& \sum_{\omega\in\Omega}x_{a,\omega}\Delta u_{a,a'}(\omega)
\ge
b_{a,a'}(J)\sum_{\omega\in\Omega}x_{a,\omega},
&& \forall a\in A,\ \forall a'\in A\setminus\{a\},\\
& x_{a,\omega}=0,
&& \forall \omega\in\Omega,\ 
\forall a\in A\text{ such that }
\operatorname{intIC}(R_a^J)=\emptyset .
\end{aligned}
\label{eq:safe-lp}
\end{equation}
The last line is the strict-feasibility filtering step.

If \(\lambda_a:=\sum_{\omega}x_{a,\omega}>0\), the induced posterior is
\[
    \bar\post_a(\omega)
    :=
    \frac{x_{a,\omega}}{\lambda_a}
    \in \areg{a}{J}.
\]
Thus the LP returns a Bayes-plausible safe scheme
\[
    \{(\lambda_a,\bar\post_a,a):\lambda_a>0\}.
\]

\paragraph{Initialization LP.} Choose any generic vector $r\in\mathbb{Q}^{|\Omega|}$ and solve
\[
  v^{(1)}\in\arg\max_{x\in P} r^\top x,
\]
returning any optimal basic feasible solution; this is a vertex of $P$.

\paragraph{Restricted membership LP.} Given a current set $V=\{v^1,\dots,v^m\}\subseteq \mathrm{vert}(P)$, solve
\[
  \text{find }\lambda\in\mathbb{R}^m_+
  \quad\text{s.t.}\quad
  \mathbf{1}^\top\lambda=1,\qquad \sum_{j=1}^m \lambda_j v^j=\nu.
\]
If this LP is feasible, then $\nu\in\mathrm{conv}(V)$ and the resulting coefficients already give a decomposition.

\paragraph{Separation LP.} If the membership LP is infeasible, solve
\[
  \max_{c\in\mathbb{R}^{|\Omega|},\,\gamma\in\mathbb{R}}\ c^\top \nu-\gamma
  \quad\text{s.t.}\quad
  c^\top v^j\le \gamma\ \ (j=1,\dots,m),\qquad \|c\|_\infty\le 1.
\]
Any optimal solution with positive objective value yields a separating hyperplane $c$ such that
\[
  c^\top \nu>\max_{v\in V} c^\top v.
\]

\paragraph{Pricing LP.} Given such a $c$, solve
\[
  v^{\mathrm{new}}\in\arg\max_{x\in P} c^\top x,
\]
and again return any optimal basic feasible solution; this is an extreme point of $P$.

We then describe the omitted details in \Cref{sec:movingconstraint}.
\paragraph{Details of the local probe posterior construction in \Cref{sec:movingconstraint}}
Fix an informative vertex $\vertex \in R_a^J$ and a selected movable binding constraint $\Delta u_{a,a'}^\top v=b_{a,a'}(\beta)$, where
$\beta\in\{\underbias,\upbias\}$ is the endpoint determining
the interval-safe constraint. For a probe value $m\in J$, define
$v^{\mathrm{pr}}(m)$ as a minimum $l_2$ distance point satisfying the same simplex constraints, all non-selected IC constraints, and the modified equality
\[
    \Delta u_{a,a'}^\top v^{\mathrm{pr}}(m)=b_{a,a'}(m).
\]
\Cref{lem:rhs-stability} shows that this perturbation has distance
$O(|J|)$ from $v$. After perturbing all informative vertices, \Cref{lemma:utility_stability_changed}
adds at most one correction posterior to restore Bayes plausibility, losing only
$O(P_{\mathrm{info}}(\tau^J_{\mathrm{vtx}})|J|)$ utility.
\subsubsection{Geometric stability and perturbation bounds}
Before we show proofs of all propositions, we first list two useful lemmas here. For the definition of $\kappa_a$ and $\kappa$, see \Cref{app:general-constants}. These constants depend only on the constraint normals and not on the interval \(J\).

\begin{lemma}[Uniform distance bound under RHS perturbations]
\label{lem:rhs-stability}
Fix \(a\in\actions\).  For any two parameter values \(\beta_0,\beta_1\in[\alpha_{\min}(I),1]\)
and any \(\post_0\in\areg{a}{\beta_0}\),
\[
    \dist(\post_0,\areg{a}{\beta_1})
    \le
    \kappa_a
    \left\|
        \pos{\widetilde b_a(\beta_1)-\widetilde b_a(\beta_0)}
    \right\|_2
    \le
    \kappa_a
    \|\widetilde b_a(\beta_1)-\widetilde b_a(\beta_0)\|_2 .
\]
The same bound holds with \(\areg{J}{a}\) in place of \(\areg{a}{\beta_1}\) by taking
\(\widetilde b_a(J)\) as the right-hand side.
\end{lemma}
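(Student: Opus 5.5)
The plan is to work entirely in the reduced coordinates $z$ from \Cref{app:general-constants}, where $\post_0=\bar\post+Qz_0$ and $\areg{a}{\beta_1}$ becomes $\widetilde R_a^{\beta_1}=\{z:\widetilde A_a z\ge \widetilde b_a(\beta_1)\}$. Distances are preserved, so it suffices to show
\[
\dist\bigl(z_0,\widetilde R_a^{\beta_1}\bigr)\le \kappa_a\bigl\|\pos{\widetilde b_a(\beta_1)-\widetilde A_a z_0}\bigr\|_2 ,
\]
and then observe that $\widetilde A_a z_0\ge \widetilde b_a(\beta_0)$ gives componentwise $\pos{\widetilde b_a(\beta_1)-\widetilde A_a z_0}\le \pos{\widetilde b_a(\beta_1)-\widetilde b_a(\beta_0)}$. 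The second inequality in the statement is then trivial. This is a Hoffman-type error bound, and the point is that the constant is the specific $\kappa_a$ built from right pseudoinverses of row-independent subsystems.

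For the key bound, I let $z_1$ be the Euclidean projection of $z_0$ onto $\widetilde R_a^{\beta_1}$; this needs nonemptiness, which holds whenever $\beta_1\ge\alpha_{\min}(I)$ and $a$ is a relevant or feasible action. If the set is empty, the statement is vacuous or needs $a$ to be restricted, and I would note this. KKT for the projection gives
\[
z_0-z_1=-\widetilde A_{a,B'}^\top \lambda \quad\text{with } \lambda\ge0,
\]
supported on the active set $B'$ at $z_1$. By Carathéodory for cones, I can choose $\lambda$ supported on a subset $B\subseteq B'$ whose rows are linearly independent, so $|B|\le d$ and $B\in\mathcal B_a$. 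Then $z_0-z_1$ lies in the row space of $\widetilde A_{a,B}$. Since $\widetilde A_{a,B}^\dagger$ is a right inverse acting as the identity on that row space,
\[
z_1-z_0=\widetilde A_{a,B}^\dagger\,\widetilde A_{a,B}(z_1-z_0)=\widetilde A_{a,B}^\dagger\bigl(\widetilde b_{a,B}(\beta_1)-\widetilde A_{a,B}z_0\bigr),
\]
using that constraints in $B$ are active at $z_1$.

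To pass to the positive part, I use the sign of $\lambda$. Writing $w=\widetilde b_{a,B}(\beta_1)-\widetilde A_{a,B}z_0=\widetilde A_{a,B}\widetilde A_{a,B}^\top\lambda$, I get $\|z_1-z_0\|^2=\lambda^\top w$. Since $\lambda\ge0$, this gives $\lambda^\top w\le \lambda^\top w_+$, so the negative components of $w$ only help. The cleanest route I expect is to drop from $B$ the rows where $w_j<0$ and re-solve: define the projection onto the affine set $\{\widetilde A_{a,B}z=\ldots\}$ restricted to violated rows. If that gets messy, I fall back on an alternative. I apply the argument to the minimal-norm $\lambda$ and show that any row with $w_j<0$ must have $\lambda_j=0$ by complementary slackness, since $z_0$ strictly satisfies it while moving toward $z_1$. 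Pinning down exactly this step, so that the constant stays $\kappa_a$ with no extra factor, is the main obstacle. Once it is done, $\|z_1-z_0\|\le\|\widetilde A_{a,B}^\dagger\|_2\|w_+\|_2\le\kappa_a\|w_+\|_2$.

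The interval version follows identically. $\areg{J}{a}$ has the same constraint normals $\widetilde A_a$, and only the right-hand side changes to $\widetilde b_a(J)$, so the same projection and active-set argument applies verbatim.
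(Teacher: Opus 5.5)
Your framework is sound, and it differs from the paper's route. Projecting $z_0$ onto $\widetilde R_a^{\beta_1}$, writing $z_1-z_0=\widetilde A_{a,B}^\top\lambda$ with $\lambda>0$ on a linearly independent active set $B\in\mathcal B_a$ (conic Carath\'eodory), and deducing $z_1-z_0=\widetilde A_{a,B}^\dagger w_B$ with $w_B=\widetilde b_{a,B}(\beta_1)-\widetilde A_{a,B}z_0$ are all correct. The paper instead tightens the right-hand side to $r=\pos{\widetilde b_a(\beta_1)-\widetilde A_a z_0}\ge 0$ and takes a min-norm $y$ with $\widetilde A_a y\ge r$. That shortcut is not available: the simplex rows give $Qy\ge 0$ and $\ones^\top Qy=0$, so $\{y:\widetilde A_a y\ge 0\}=\{0\}$, and the set is empty whenever $r\neq 0$. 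Your projection route is therefore the sound one.

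The genuine gap is the step you leave open: passing from $w_B$ to its positive part with constant exactly $\|\widetilde A_{a,B}^\dagger\|_2$. This is the whole content of the lemma, and your fallback for it is false. A constraint that $z_0$ strictly satisfies can carry a positive multiplier at the projection. In $\mathbb R^2$ take $P=\{(s,t):s\ge 0,\ -s+0.1t\ge 0\}$ and $z_0=(1,-0.2)$. The projection is the apex $0$, and $0-z_0=(-1,0.2)=1\cdot(1,0)+2\cdot(-1,0.1)$. So the unique multiplier is $\lambda=(1,2)$, even though $w_1=-1$. Bounding $P$ by faraway constraints changes nothing, and the same acute-vertex configuration arises between a nonnegativity row and a shifted IC row. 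Your first route fails on the same example: dropping the row with $w_j<0$ and projecting onto $\{-s+0.1t=0\}$ gives $(-1/101,-10/101)\notin P$. That point is not feasible, so it gives no upper bound on the distance.

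The missing idea is one Cauchy--Schwarz step on the identity you already wrote. Since $\lambda\ge 0$,
\[
\|z_1-z_0\|_2^2=\lambda^\top w_B\le\lambda^\top\pos{w_B}\le\|\lambda\|_2\,\|\pos{w_B}\|_2 .
\]
Since $\widetilde A_{a,B}$ has full row rank, $\lambda=(\widetilde A_{a,B}\widetilde A_{a,B}^\top)^{-1}\widetilde A_{a,B}(z_1-z_0)=(\widetilde A_{a,B}^\dagger)^\top(z_1-z_0)$. Hence $\|\lambda\|_2\le\|\widetilde A_{a,B}^\dagger\|_2\,\|z_1-z_0\|_2$. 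Dividing by $\|z_1-z_0\|_2$ (when nonzero) gives
\[
\|z_1-z_0\|_2\le\|\widetilde A_{a,B}^\dagger\|_2\,\|\pos{w_B}\|_2\le\kappa_a\|\pos{\widetilde b_a(\beta_1)-\widetilde b_a(\beta_0)}\|_2 ,
\]
with no extra factor.

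Separately, nonemptiness of the target polytope must be an explicit hypothesis; it does not follow from $\beta_1\ge\alpha_{\min}(I)$ plus relevance. In the instance of Figure~\ref{fig:moving-action-regions}, $\alpha_{\min}(I)<0.55$ and $a_1$ is relevant at bias $0.85$, yet $R_{a_1}^{0.55}=\emptyset$. With an empty target set the inequality fails rather than holding vacuously.
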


\begin{proof}
Write \(\post_0=\bar\post+Qz_0\).  Then \(z_0\in\widetilde R_a^{\beta_0}\), and
\[
    \dist(\post_0,\areg{a}{\beta_1})
    =
    \dist(z_0,\widetilde R_a^{\beta_1}).
\]
Define the componentwise violation vector
\[
    r
    :=
    \pos{\widetilde b_a(\beta_1)-\widetilde A_a z_0}
    \in\mathbb R^{m_a}_{+}.
\]
Since \(z_0\in\widetilde R_a^{\beta_0}\), we have
\(\widetilde A_a z_0\ge \widetilde b_a(\beta_0)\), and therefore
\[
    r
    \le
    \pos{\widetilde b_a(\beta_1)-\widetilde b_a(\beta_0)}
    \quad
    \text{componentwise}.
\]
Thus
\[
    \|r\|_2
    \le
    \left\|
        \pos{\widetilde b_a(\beta_1)-\widetilde b_a(\beta_0)}
    \right\|_2 .
\]

It remains to show
\(\dist(z_0,\widetilde R_a^{\beta_1})\le \kappa_a\|r\|_2\).  By definition,
\[
    \dist(z_0,\widetilde R_a^{\beta_1})
    =
    \min_y
    \{\|y\|_2:\widetilde A_a y\ge \widetilde b_a(\beta_1)-\widetilde A_a z_0\}.
\]
Since \(r\ge \widetilde b_a(\beta_1)-\widetilde A_a z_0\) componentwise, it suffices to consider
\[
    \min_y
    \{\|y\|_2:\widetilde A_a y\ge r\}.
\]
Let \(y^\star\) be a minimum-norm solution.  Let \(B\) be a full-row-rank subset of binding constraints
at \(y^\star\).  Then \(B\in\mathcal B_a\), and \(y^\star\) is the minimum-norm solution of
\[
    \widetilde A_{a,B}y=r_B.
\]
Hence
\[
    y^\star
    =
    \widetilde A_{a,B}^{\dagger}r_B,
    \qquad
    \|y^\star\|_2
    \le
    \|\widetilde A_{a,B}^{\dagger}\|_2\|r_B\|_2
    \le
    \kappa_a\|r\|_2 .
\]
Mapping back from \(z\)-coordinates to posteriors preserves Euclidean distance.  This proves the
first inequality.  The second follows from
\(\|\pos{x}\|_2\le\|x\|_2\).
\end{proof}

\begin{lemma}[Bayes-plausibility repair under small perturbations]
\label{lemma:utility_stability_changed}
Let
\(
    \scheme=\{(\lambda_i,\post_i,a_i)\}_{i=1}^k
\)
be Bayes-plausible:
\(
    \lambda_i\ge0,
    \sum_i\lambda_i=1,
    \sum_i\lambda_i\post_i=\prior .
\)
Suppose we perturb a subset of posteriors and obtain
\(\post_i'\in\simplex{\states}\) such that
\[
    \|\post_i'-\post_i\|_2\le \epsilon
    \quad\text{whenever }\post_i'\neq\post_i,
    \qquad
    \sum_{\post_i'\neq\post_i}\lambda_i\le p .
\]
Then there exists an augmented Bayes-plausible scheme
$
    \widehat\scheme
    =
    \{(\widehat\lambda_i,\post_i',a_i)\}_{i=1}^k
    \cup
    \{(\widehat\lambda_+,\post_+,\defaulta)\},
$
where \(\post_+\in \areg{J}{\defaulta}\), such that
\[
    |U_S(\scheme)-U_S(\widehat\scheme)|
    \le
    C_{\mathrm{rep}}\,p\epsilon ,\qquad
     C_{\mathrm{rep}}
    :=
    U_{\max}\left(\sqrt{|\states|}+\frac{4}{\dprior}\right).
\]
If no repair is needed, then \(\widehat\lambda_+=0\).
\end{lemma}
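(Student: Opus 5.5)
We would construct the repair explicitly. The first step is to measure the Bayes-plausibility violation created by the perturbation:
\[
\Delta:=\sum_i\lambda_i(\post_i'-\post_i),
\]
which satisfies \(\sum_i\lambda_i\post_i'=\prior+\Delta\). By the triangle inequality \(\|\Delta\|_2\le p\epsilon\). If \(\Delta=0\), no repair is needed and \(\widehat\lambda_+=0\).

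Otherwise, we mix in one correction posterior near the prior. We look for weights \(\widehat\lambda_i=(1-\theta)\lambda_i\) and \(\widehat\lambda_+=\theta\), with
\[
\post_+:=\frac{\prior-(1-\theta)(\prior+\Delta)}{\theta}=\prior-\frac{1-\theta}{\theta}\Delta ,
\]
so that \(\sum_i\widehat\lambda_i\post_i'+\theta\post_+=\prior\) holds by construction. The constraint on \(\theta\) is that \(\post_+\) must lie in the simplex and in the default region \(\areg{J}{\defaulta}\).

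The key geometric input is \(\dprior\). The ball of radius \(\dprior\) around \(\prior\), intersected with the affine hull of the simplex, lies inside the simplex. It also lies inside \(\areg{1}{\defaulta}\), the smallest default region over \(\bias\in(0,1]\), and hence inside \(\areg{J}{\defaulta}\) for every \(J\). To justify the last inclusion, we would observe that \(b_{\defaulta,a'}(\bias)=\frac{\bias-1}{\bias}\Delta u_{\defaulta,a'}^\top\prior\le0\), since \(\Delta u_{\defaulta,a'}^\top\prior>0\), and this bound is most restrictive at \(\bias=1\). Because \(\Delta\) sums to zero, it lies in \(H_0\). It therefore suffices to choose \(\theta\) with \(\frac{1-\theta}{\theta}\|\Delta\|_2\le \dprior/2\); taking \(\theta=\min\{1,2\|\Delta\|_2/\dprior\}\le 2p\epsilon/\dprior\) works, with a factor-two margin.

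The final step is the utility accounting. Each perturbed atom keeps its action label \(a_i\), so its per-posterior utility \(\sum_\omega\post(\omega)u_S(a_i,\omega)\) is linear in \(\post\) and changes by at most \(U_{\max}\sqrt{|\states|}\,\epsilon\). Summed over atoms of total mass at most \(p\), this costs \(U_{\max}\sqrt{|\states|}p\epsilon\). The rescaling by \(1-\theta\) together with the new atom of mass \(\theta\) costs at most \(2U_{\max}\theta\le 4U_{\max}p\epsilon/\dprior\).

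I expect the main obstacle to be justifying that the labels remain valid, i.e.\ that each \(\post_i'\) still induces \(a_i\) and that \(\post_+\) induces \(\defaulta\), since \(U_S\) is evaluated with the receiver's best response. For \(\post_+\), this is exactly the \(\dprior\)-ball containment above, with sender-favorable tie-breaking covering the boundary. For the perturbed posteriors, the lemma is applied only to feasible probe posteriors, so correctness of the label is assumed there, and unsafe probes are charged separately. Adding the two costs gives \(C_{\mathrm{rep}}p\epsilon\).
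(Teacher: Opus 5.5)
Your proposal is correct and is essentially the paper's proof: the same single correction atom $\post_+=\prior+\frac{1-\theta}{\theta}r$ with $r=\prior-\sum_i\lambda_i\post_i'=-\Delta$, the same rescaling of the original weights by $1-\theta$, and the same accounting $U_{\max}\sqrt{|\states|}\,p\epsilon+2U_{\max}\theta$ with $\theta\le 2p\epsilon/\dprior$. The differences are minor: the paper takes $\theta=2\|r\|_2/(\dprior+2\|r\|_2)$ so that $\|\post_+-\prior\|_2=\dprior/2$ exactly, while your $\min\{1,2\|\Delta\|_2/\dprior\}$ gives at most $\dprior/2$, which suffices; and you justify $R_{\defaulta}^{1}\subseteq R_{\defaulta}^{J}$ via monotonicity of $b_{\defaulta,a'}$, which the paper simply invokes as a ``standing margin convention.''
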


\begin{proof}
Let
\[
    \prior'
    :=
    \sum_i\lambda_i\post_i',
    \qquad
    r:=\prior-\prior' .
\]
Since both \(\prior\) and \(\prior'\) lie in the affine hull of the simplex,
\(\ones^\top r=0\).  Moreover,
\[
    \|r\|_2
    =
    \left\|
        \sum_i\lambda_i(\post_i-\post_i')
    \right\|_2
    \le
    \sum_{\post_i'\neq\post_i}
        \lambda_i\|\post_i-\post_i'\|_2
    \le
    p\epsilon .
\]
If \(r=0\), set \(\widehat\lambda_+=0\), \(\widehat\lambda_i=\lambda_i\), and we are done.

Assume \(r\neq0\).  Define
\[
    \widehat\lambda_+
    :=
    \frac{2\|r\|_2}{\dprior+2\|r\|_2},
    \qquad
    \widehat\lambda_i:=(1-\widehat\lambda_+)\lambda_i,
\]
and
\[
    \post_+
    :=
    \prior
    +
    \frac{1-\widehat\lambda_+}{\widehat\lambda_+}r .
\]
By construction,
\[
    \left\|
        \post_+-\prior
    \right\|_2
    =
    \frac{\dprior}{2}.
\]
Because \(\ones^\top r=0\), we have \(\ones^\top\post_+=1\).  By the standing margin convention,
\(\post_+\in\simplex{\states}\cap\areg{J}{\defaulta}\).

Bayes plausibility holds exactly:
\[
\begin{aligned}
    \sum_i\widehat\lambda_i\post_i'
    +
    \widehat\lambda_+\post_+
    &=
    (1-\widehat\lambda_+)\prior'
    +
    \widehat\lambda_+
    \left(
        \prior+
        \frac{1-\widehat\lambda_+}{\widehat\lambda_+}r
    \right)
\\
    &=
    (1-\widehat\lambda_+)\prior'
    +
    \widehat\lambda_+\prior
    +
    (1-\widehat\lambda_+)(\prior-\prior')
    =
    \prior .
\end{aligned}
\]
Also,
\[
    \widehat\lambda_+
    \le
    \frac{2\|r\|_2}{\dprior}
    \le
    \frac{2p\epsilon}{\dprior}.
\]

For utility, write
\[
    v_a(\post):=\sum_{\omega\in\states}\post(\omega)u_S(a,\omega).
\]
Then
\[
    |v_a(\post)-v_a(\post')|
    \le
    U_{\max}\sqrt{|\states|}\,\|\post-\post'\|_2.
\]
Therefore,
\[
\begin{aligned}
    |U_S(\scheme)-U_S(\widehat\scheme)|
    &\le
    U_{\max}\sqrt{|\states|}
    \sum_{\post_i'\neq\post_i}
        \lambda_i\|\post_i-\post_i'\|_2
    +
    U_{\max}\sum_i|\lambda_i-\widehat\lambda_i|
    +
    U_{\max}\widehat\lambda_+
\\
    &\le
    U_{\max}\sqrt{|\states|}\,p\epsilon
    +
    2U_{\max}\widehat\lambda_+
\\
    &\le
    U_{\max}\sqrt{|\states|}\,p\epsilon
    +
    \frac{4U_{\max}}{\dprior}p\epsilon .
\end{aligned}
\]
This proves the claim.
\end{proof}
\subsubsection{Proof of Proposition \ref{lemma:safe-optimal}}
Since $\scheme_{\mathrm{opt}}=\{(\lambda_i,\post_i,a_i)\}_{i=1}^k$
is a full-information optimal posterior-form scheme at the true bias \(\tbias\), where
$
    a_i=\bestresp{\post_i}{\tbias}.
$
Thus \(\post_i\in\areg{\tbias}{a_i}\).
We project each \(\post_i\) into the interval-safe region for the same action:
\[
    \post_i'
    \in
    \argmin_{\post\in\areg{J}{a_i}}
    \|\post-\post_i\|_2 .
\]
The localization stage ensures that \(\areg{J}{a_i}\neq\emptyset\) whenever
\(\areg{\tbias}{a_i}\neq\emptyset\), so this projection is well-defined.

For every \(a'\neq a_i\) and every \(\beta\in J\),
\[
    \left|
        b_{a_i,a'}(\beta)-b_{a_i,a'}(\tbias)
    \right|
    \le
    \sup_{\alpha\in[\alpha_{\min}(I),1]}
    \left|
        \frac{d}{d\alpha}b_{a_i,a'}(\alpha)
    \right|
    |J|.
\]
Since
\[
    \frac{d}{d\alpha}b_{a,a'}(\alpha)
    =
    \frac{C_{a,a'}}{\alpha^2},
\]
and \Cref{prop:detectable-localization} gives \(\alpha_{\min}(I)\ge\dprior/\sqrt{2}\), we have
\[
    \left|
        b_{a_i,a'}(\beta)-b_{a_i,a'}(\tbias)
    \right|
    \le
    \frac{2G_{\max}}{\dprior^2}|J| .
\]
Applying \Cref{lem:rhs-stability} yields an instance-dependent constant \(K_{\mathrm{dist}}=\frac{2G_{\max}\kappa}{\dprior^2}\) such
that
\[
    \|\post_i'-\post_i\|_2
    \le
    K_{\mathrm{dist}}|J|
    \qquad
    \forall i.
\]

Now apply \Cref{lemma:utility_stability_changed} to the perturbation
\(\post_i\mapsto\post_i'\), with \(p=1\) and
\(\epsilon=K_{\mathrm{dist}}|J|\).  We obtain a Bayes-plausible scheme
\(\widehat\scheme\) supported on \(R_{\mathrm{safe}}^J\) such that
\[
    U_S(\scheme_{\mathrm{opt}})-U_S(\widehat\scheme)
    \le
    C_{\mathrm{rep}}K_{\mathrm{dist}}|J| .
\]
Since \(\scheme_{\mathrm{vtx}}^J\) is optimal among interval-safe schemes,
\[
    U_S(\scheme_{\mathrm{vtx}}^J)
    \ge
    U_S(\widehat\scheme).
\]
Therefore
\[
    0
    \le
    U_S(\scheme_{\mathrm{opt}})-U_S(\scheme_{\mathrm{vtx}}^J)
    \le
    C_{\mathrm{rep}}K_{\mathrm{dist}}|J|
    =
    O(|J|).
\]

\subsubsection{Proof of Proposition \ref{lemma:safe-probe}}
Fix scan endpoints \((\ell,r)\subseteq J\).  Let
\[
    p_u:=\up(\scheme_{\mathrm{vtx}}^J).
\]
Only informative posteriors are perturbed, and their total probability is \(p_u\).  From \Cref{lem:rhs-stability}, for every
informative index \(i\),
\[
    \|\vertex_i^{\mathrm{pr}}(m_i)-\vertex_i^J\|_2
    \le
    K_{\mathrm{pr}}|J|,
\]
where $K_{\mathrm{pr}}=\frac{2G_{\max}\kappa}{\dprior^2}$, similar to the proof of \Cref{lemma:safe-optimal}.

Applying \Cref{lemma:utility_stability_changed} with
\[
    p=p_u,
    \qquad
    \epsilon=K_{\mathrm{pr}}|J|
\]
gives a Bayes-plausible probe scheme
\(\scheme_{\mathrm{probe}}^{J,\eta}(\ell,r)\) such that
\[
    \left|
        U_S(\scheme_{\mathrm{vtx}}^J)
        -
        U_S(\scheme_{\mathrm{probe}}^{J,\eta}(\ell,r))
    \right|
    \le
    C_{\mathrm{rep}}K_{\mathrm{pr}}\,p_u|J| .
\]
Thus
\[
    U_S(\scheme_{\mathrm{vtx}}^J)
    -
    U_S(\scheme_{\mathrm{probe}}^{J,\eta}(\ell,r))
    =
    O\!\left(\up(\scheme_{\mathrm{vtx}}^J)|J|\right).
\]

Moreover, the repair step scales the original atoms by a factor
\(1-O(p_u|J|)\) and adds only a default correction posterior.  Therefore, for all sufficiently small
intervals \(J\),
\[
    \up(\scheme_{\mathrm{probe}}^{J,\eta}(\ell,r))
    \ge
    \frac12\,\up(\scheme_{\mathrm{vtx}}^J).
\]

\subsubsection{Proof of Proposition \ref{prop:samplecomplexity}}
\paragraph{Finite vertex-index representation.}
For the proof of \Cref{prop:samplecomplexity}, we use a fixed finite index set for all possible vertices of the interval-safe regions.
Each interval-safe action region $\areg{a}{J}$ is a polytope, and its constraints consist of the simplex equality \(\ones^\top\post=1\), the IC
inequalities
$
    \Delta u_{a,a'}^\top\post \ge b^J_{a,a'}$,
and the non-negativity inequalities
$\post(\omega)\ge 0$.

To solve a possible candidate of vertex, we first fix one action $a$ and select \(n-1\) candidate binding
constraints chosen from the IC and non-negativity constraints of \(\areg{a}{J}\), then turn these constraints into equalities and add the simplex equality. This gives a linear system. If the linear system has a unque solution, it then corresponds to one possible vertex.
We use \emph{vertex index} 
$i=(B_i, a(i))$ to do such process for every possible action and set of constraints, 
where \(B_i\) is a collection of \(n-1\) candidate binding
constraints and \(a(i)\) is the corresponding recommending action. Then the linear system is denoted by
$
    \mA_i \post = \bb_i^J ,
$
where \(\mA_i\) depends only on the chosen constraint normals and not on $J$, while \(\bb_i^J\)
depends on \(J\) through the safe right-hand sides \(b^J_{a(i),a'}\).

We keep only
indices for which \(\mA_i\) is invertible, namely there exists a unique solution of the linear system. 
Use $\cI_{\mathrm{vtx}}^J$ to denote the set of such indexes.  
Note that firstly, since $\mA_i$ doesn't depend on $J$, then $\cI_{\mathrm{vtx}}^J$ doesn't depend on $J$, namely each index $i$ corresponds to the same action and the same set of constraints. By saying the same constraint, we mean it is the IC constraint on the same action pair or the same non-negative constraints.

Also, the actual set of vertices is actually a subset of $\cI_{\mathrm{vtx}}^J$. Besides, $\cI_{\mathrm{vtx}}^J$ may include infeasible points which violate other constraints, or duplicate points like degenerate cases or overlapping of different action regions. 
But since our goal is to have an uniform representation to denote the vertex-supported scheme, $\cI_{\mathrm{vtx}}^J$ is enough to use.
 Infeasible candidate vertices are assigned probability zero.

 For notational simplicity, enumerate
$\cI_{\mathrm{vtx}}^J=\cbr{1,\ldots,M}$. 
Each \(i\in\cI_{\mathrm{vtx}}^J\) therefore labels a possible vertex
\(\vertex_i^J\in\operatorname{vert}\rbr{\areg{a(i)}{J}}\). 

Thus any interval-safe vertex-supported scheme can be represented as
\[
    \scheme^J
    =
    \{(p_i,\vertex_i^J,a_i)\}_{i=1}^{M},
    \qquad
    p\in\simplex{\{1,\ldots,M\}},
    \qquad
    \sum_{i=1}^M p_i\vertex_i^J=\prior,
\]
with \(p_i=0\) whenever \(\vertex_i^J\) is infeasible.

\paragraph{Formal proof}
We prove the dichotomy by contradiction.  Suppose the first condition fails.  Then for every
\(n\ge1\), there exists an interval \(J_n=[\underbias_n,\upbias_n]\) with
\[
    \tbias\in J_n,
    \qquad
    |J_n|\le \frac1n,
\]
and an interval-safe vertex-supported optimizer
\[
    \scheme^n
    =
    \{(p_i^n,\vertex_i^{J_n},a_i)\}_{i=1}^{M}
\]
such that
\[
    \up(\scheme^n)\le \frac1n .
\]

Because the probability simplex in \(\mathbb R^M\) is compact, there is a subsequence, still denoted
by \(n\), such that
\[
    p^n\to p^\star .
\]
Since \(\underbias_n,\upbias_n\in[\alpha_{\min}(I),1]\), compactness also gives a subsequence such
that
\[
    \underbias_n\to\alpha^-,
    \qquad
    \upbias_n\to\alpha^+ .
\]
Because \(\tbias\in J_n\) and \(|J_n|\to0\), we have
\[
    \alpha^-=\alpha^+=\tbias.
\]
By continuity of each candidate vertex,
\[
    \vertex_i^{J_n}\to \vertex_i^{\tbias}
    \qquad
    \forall i.
\]
Define the limiting scheme
\[
    \scheme^\star
    :=
    \{(p_i^\star,\vertex_i^{\tbias},a_i)\}_{i=1}^{M}.
\]

We first show feasibility at \(\tbias\).  If \(p_i^\star>0\), then for all sufficiently large \(n\),
\(p_i^n>0\).  Since \(\scheme^n\) is feasible for \(J_n\), the candidate posterior
\(\vertex_i^{J_n}\) satisfies all IC and nonnegativity constraints of action \(a_i\) under \(J_n\).
Taking limits gives
\[
    \Delta u_{a_i,a'}^\top \vertex_i^{\tbias}
    \ge
    b_{a_i,a'}(\tbias),
    \qquad
    \forall a'\neq a_i,
\]
and also \(\vertex_i^{\tbias}\in\simplex{\states}\).  Bayes plausibility follows from
\[
    \sum_i p_i^n\vertex_i^{J_n}=\prior
\]
by taking limits.  Therefore \(\scheme^\star\) is feasible at the true bias \(\tbias\).

Next we show optimality.  Since \(\scheme^n\) is safe-optimal on \(J_n\),
\[
    U_S(\scheme^n)=\OPT^{\mathrm{safe}}(J_n).
\]
Because \(J_n\) tightens the IC constraints relative to the true-bias problem,
\[
    \OPT^{\mathrm{safe}}(J_n)\le \iopt{I}{\tbias}.
\]
On the other hand, \Cref{lemma:safe-optimal} gives
\[
    \OPT^{\mathrm{safe}}(J_n)
    \ge
    \iopt{I}{\tbias}-C_{\mathrm{rep}}K_{\mathrm{dist}}|J_n|
\]
where constants are only related to instance $\cI$, but not $J$.
Hence
\[
    \OPT^{\mathrm{safe}}(J_n)\to \iopt{I}{\tbias}.
\]
By continuity of the finite representation,
\[
    U_S(\scheme^n)\to U_S(\scheme^\star).
\]
Therefore
\[
    U_S(\scheme^\star)=\iopt{I}{\tbias},
\]
so \(\scheme^\star\) is optimal at the true bias.

Finally, since \(\up(\scheme^n)\le1/n\), we must have
\[
    \up(\scheme^\star)=0.
\]
Indeed, if some movable informative index had \(p_i^\star>0\), then \(p_i^n\ge p_i^\star/2\) for
all large \(n\), contradicting \(\up(\scheme^n)\le1/n\).  Thus the second condition holds: there
exists a true-bias optimal scheme with zero informative probability.

The remaining statement we will show is that if such a
zero-informative optimal scheme exists, then there exists \(\delta_{\mathrm{stab}}>0\), for every interval \(J\ni\tbias\) with \(|J|\le\delta_{\mathrm{stab}}\), the same scheme
\(\scheme^\star\) is feasible and optimal for the interval-safe problem on \(J\).  Consequently,
$
    \OPT^{\mathrm{safe}}(J)=\iopt{I}{\tbias}.$

Write the finite-support optimal scheme as
\[
    \scheme^\star=\{(p_i,\post_i,a_i)\}_{i=1}^k,
    \qquad
    p_i>0,
    \qquad
    \sum_i p_i\post_i=\prior,
\]
where \(a_i=\bestresp{\post_i}{\tbias}\).

For every action pair \((a,a')\), define the IC slack
\[
    \operatorname{slack}_{a,a'}(\post,\bias)
    :=
    \Delta u_{a,a'}^\top\post-b_{a,a'}(\bias).
\]
A movable constraint is one with \(C_{a,a'}\neq0\).  The condition
\(\up(\scheme^\star)=0\) means that no support posterior of \(\scheme^\star\) lies on a movable
binding IC constraint.  Hence, for every support atom \(i\) and every movable pair
\((a_i,a')\),
\[
    \operatorname{slack}_{a_i,a'}(\post_i,\tbias)>0.
\]
Because the support is finite, the minimum positive movable slack
\[
    s^\star
    :=
    \min_{i}
    \min_{\substack{a'\neq a_i\\ C_{a_i,a'}\neq0}}
    \operatorname{slack}_{a_i,a'}(\post_i,\tbias)
\]
is strictly positive, ignoring pairs over an empty set.

By uniform continuity of \(b_{a,a'}(\cdot)\) on
\([\alpha_{\min}(I),1]\), there exists \(\delta_{\mathrm{stab}}>0\) such that whenever
\(J\ni\tbias\) and \(|J|\le\delta_{\mathrm{stab}}\),
\[
    b^J_{a,a'}
    \le
    b_{a,a'}(\tbias)+\frac{s^\star}{2}
\]
for every movable pair that appears in the support of \(\scheme^\star\).  Therefore all movable IC
constraints remain satisfied on \(J\).  Non-movable constraints have \(C_{a,a'}=0\), so
\(b_{a,a'}(\bias)\equiv0\) and are unchanged.  Thus every support posterior of \(\scheme^\star\)
lies in its corresponding interval-safe region, and \(\scheme^\star\) is feasible for the safe problem on
\(J\).

Since the safe feasible set is a subset of the feasible set at the true bias \(\tbias\),
\[
    \OPT^{\mathrm{safe}}(J)\le \iopt{I}{\tbias}.
\]
Feasibility of \(\scheme^\star\) gives the reverse inequality.  Hence
\[
    \OPT^{\mathrm{safe}}(J)=\iopt{I}{\tbias},
\]
and \(\scheme^\star\) is safe-optimal on \(J\).

\subsubsection{Proof of Proposition \ref{prop:safe_exploration_regret}}
Consider a safe-exploration phase starting from interval
\(J_r=[\underbias_r,\upbias_r]\) of length \(L_r\).  The algorithm uses step size
\[
    \eta_r:=L_r^2 .
\]
Each safe informative realization moves either the left endpoint or the right endpoint inward by
\(\eta_r\).  A unsafe informative realization localizes \(\tbias\) to an adjacent interval of length
\(\eta_r\).  Therefore the phase contains at most
\[
    \left\lceil \frac{L_r}{\eta_r}\right\rceil
    =
    O\!\left(\frac1{L_r}\right)
\]
safe probes and at most one unsafe probe.  In all cases,
\[
    L_{r+1}\le \eta_r=L_r^2 .
\]

We now bound the regret of one phase.  Let
\[
    p_u:=\up(\scheme_{\mathrm{vtx}}^{J_r})
\]
for the safe vertex-supported optimizer used at the start of the phase.  By the proof of \Cref{lemma:safe-probe}, the
corresponding probe scheme has informative-signal probability at least \(p_u/2\) on sufficiently small
intervals.

There are two cases from \Cref{prop:samplecomplexity}.

\paragraph{Case 1: informative probability is bounded below.}
There exist constants \(c>0\) and \(\epsilon>0\) such that for every interval
\(J\ni\tbias\) with \(|J|\le\epsilon\), every interval-safe vertex-supported optimizer satisfies
\[
    \up(\scheme_{\mathrm{vtx}}^J)\ge c .
\]
Then every probe has informative-signal probability at least \(c/2\), so the expected waiting time for
an informative realization is \(O(1/c)\).

By \Cref{lemma:safe-optimal}, the safe scheme is \(O(L_r)\)-optimal relative to the full-information benchmark.
By \Cref{lemma:safe-probe}, the probe perturbation changes utility by at most
\(O(p_u|L_r|)\le O(|L_r|)\).    Multiplying by the expected waiting time, one safe probe contributes
\(O(|L_r|/c)\) expected regret.  Since there are \(O(|L_r|/|L_r|^2)=O(1/|L_r|)\) safe probes, their total contribution in
this phase is \(O(1/c)\).

There is at most one unsafe probe.  Its per-round regret is at most \(\Delta U_{\max}\), and its
expected waiting time is \(O(1/c)\), so it contributes \(O(1/c)\).  Thus the whole phase contributes
\(O(1)\) expected regret, with constants depending only on the instance.

\paragraph{Case 2: a zero-informative optimal scheme exists.}
For all sufficiently small intervals \(J_r\ni\tbias\),
\[
    \OPT^{\mathrm{safe}}(J_r)=\iopt{I}{\tbias}.
\]
Hence the baseline safe scheme already achieves the full-information benchmark.  The only loss comes
from the probe perturbation.

For an safe probe, \Cref{lemma:safe-probe} gives per-round regret \(O(p_u|L_r|)\), while the expected
waiting time is \(O(1/p_u)\).  Thus one safe probe contributes \(O(|L_r|)\).  Summing over
$O(|L_r|/|L_r|^2)=O(1/|L_r|)$ safe probes gives
\[
    O\!\left(\frac{1}{|L_r|}\cdot |L_r|\right)
    =
    O(1).
\]
For the single unsafe probe, only informative posteriors can induce the wrong action.  Their total
probability is \(O(p_u)\), and the expected waiting time is \(O(1/p_u)\), so the unsafe probe
contributes \(O(1)\).  Therefore this phase also has \(O(1)\) expected regret.

\paragraph{Number of phases.}
The interval lengths satisfy
\[
    L_{r+1}\le L_r^2,
    \qquad
    L_0\le \frac1{\log T}.
\]
Thus
\[
    L_r
    \le
    \left(\frac1{\log T}\right)^{2^r}.
\]
The smallest \(r\) such that \(L_r\le 1/T\) is \(O(\log\log T)\).  Therefore the total regret over all
safe-exploration phases is \(O(\log\log T)\).

\subsection{Proof of Theorem \ref{thm:generalbound}}
\subsubsection{Runtime of the general algorithm}
The localization stage solves the threshold-test LP, which has
\(|\actions||\states|\) variables and polynomially many constraints.  The safe direct-recommendation
LP used in \textsc{VertexSafeScheme} has the same polynomial scale.

For vertex decomposition, the polytope \(\areg{J}{a}\) is given explicitly by polynomially many
linear inequalities.  Each call to \textsc{VertexDecompose} uses only polynomial-size LPs: a generic
linear optimization LP over \(\areg{J}{a}\), a restricted membership LP over the currently discovered
vertices, a separation LP, and a pricing LP over \(\areg{J}{a}\).  Equivalently, one may use the
standard separation--optimization implementation of convex decomposition to obtain a formal
polynomial-time routine.  The final Carathéodory compression is a linear-algebraic elimination step.

The probe construction only solves constant-size linear systems associated with active vertex bases
and then applies the Bayes-plausibility repair step.  Therefore each update of the online algorithm is
polynomial in the finite instance description, and the total runtime up to horizon \(T\) is polynomial
in \(T\), \(|\actions|\), and \(|\states|\).

\subsubsection{Regret of the general algorithm}
We decompose the regret into the localization stage, the
safe-exploration stage, and the final commitment stage.

\paragraph{Localization stage.}
The localization stage starts from the interval \([\alpha_{\min}(I),1]\) and performs binary search
using \textsc{ThresholdTest}.  It stops once the interval length is at most \(1/\log T\).  Therefore it
requires \(O(\log\log T)\) informative threshold-test realizations.  By \Cref{cor:localization-regret},
the expected localization regret is \(O(\log\log T)\).


After this step, no action region changes feasibility status inside the current interval.  Let this
interval be \(J_0\), with length \(L_0\le 1/\log T\).

\paragraph{Safe exploration stage.}
From \Cref{prop:safe_exploration_regret}, the expected total regret in this stage is $O(\log\log T)$.

\paragraph{Commitment stage.}
If safe exploration reaches an interval \(J_R\) with \(|J_R|\le1/T\), the algorithm commits to an
interval-safe vertex-supported optimizer on \(J_R\).  \Cref{lemma:safe-optimal} implies that its per-round gap
relative to the full-information benchmark is \(O(1/T)\), so the total commitment regret is \(O(1)\).
If instead \textsc{SafeExplore} stops early because
\(\up(\scheme_{\mathrm{vtx}}^{J_R})=0\), then Case 1 of \Cref{prop:samplecomplexity} cannot hold on sufficiently
small intervals.  Hence Case 2 holds, and this implies that the current safe scheme is
already optimal at the true bias.  Committing to it incurs no additional asymptotic regret.

Combining the localization stage, the
\(O(\log\log T)\) safe-exploration phases, and the final commitment stage gives
\[
    \ireg{T}{\cI}{\Pi^{\mathrm{GSE}}}{\tbias}
    =
    O(\log\log T).
\]

\section{Receiver Welfare under Biased Persuasion}\label{sec_app:receiver-extension}
This appendix compares the receiver's ex ante utility under persuasion with her no-persuasion utility.
All notation follows the model section. 
In the classical setting \cite{kamenica2011bayesian}, the receiver's utility cannot decrease under persuasion: additional information cannot reduce the value of an optimal decision, because the receiver can always ignore the signal and play the default action.

In our setting, the receiver is non-Bayesian. Even if the sender provides more information, it is not clear whether the receiver's true expected utility increases, because the receiver chooses actions based on a biased belief. We show that under linear bias model, the non-decreasing guarantee still holds.

\begin{proposition}[Preservation of consistency]
\label{prop:linear_consistency}
If $\postdist$ is Bayes-plausible, then the induced distribution of distorted posteriors also has mean $\prior$:
$
\E_{\post \sim \postdist}\big[\tbias\post+(1-\tbias)\prior\big] = \prior.
$
\end{proposition}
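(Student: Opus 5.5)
The plan is to argue directly from linearity of expectation and the Splitting Lemma (\Cref{lemma:splitting}). The distorted posterior $\dpost=\tbias\post+(1-\tbias)\prior$ is an affine function of the Bayesian posterior $\post$, and the prior $\prior$ and bias $\tbias$ are deterministic constants that do not depend on the realized signal. So we can pass the expectation over $\post\sim\postdist$ through the affine map and write
\[
\E_{\post\sim\postdist}\bigl[\tbias\post+(1-\tbias)\prior\bigr]
=
\tbias\,\E_{\post\sim\postdist}[\post]+(1-\tbias)\prior .
\]
Bayes plausibility of $\postdist$ means exactly $\E_{\post\sim\postdist}[\post]=\prior$. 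Substituting this gives $\tbias\prior+(1-\tbias)\prior=\prior$, which is the claim.

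The steps, in order, are as follows.
\begin{enumerate}
\item Recall that Bayes plausibility is the mean condition of \Cref{lemma:splitting}.
\item Note that the expectation is taken componentwise in $\R^{|\states|}$. Each coordinate $\dpost(\omega)$ is bounded in $[0,1]$, so the expectation is well defined for any $\postdist\in\simplex{\simplex{\states}}$, not only finitely supported ones. In our setting the support is in fact finite, so this is mere bookkeeping.
\item Apply linearity of expectation to split the two terms.
\item Substitute the mean condition from step 1.
\end{enumerate}

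I do not expect a genuine obstacle here. The one point to be careful about is that $\tbias$ is a fixed parameter and not a random quantity correlated with the signal realization; in the model it is fixed across rounds, so it factors out of the expectation. For the downstream welfare use, it may also help to remark that the identity holds for every $\bias\in[0,1]$, not only $\tbias$. The distorted-posterior distribution is then itself a Bayes-plausible distribution over $\simplex{\states}$, so the receiver's induced behavior can be compared against the default action $\defaulta$ chosen at $\prior$.
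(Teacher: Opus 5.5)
Your argument is correct and is the same route the paper takes: pass the affine map $\post\mapsto\tbias\post+(1-\tbias)\prior$ through the expectation by linearity, then substitute the Bayes-plausibility condition $\E_{\post\sim\postdist}[\post]=\prior$. The paper states this proposition without a separate proof, treats the identity as immediate, and uses it directly in the receiver-welfare argument.
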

\begin{proposition}[Persuasion does not hurt the biased receiver]
\label{prop:receiver_not_hurt_linear_bias}
For any bias level $\tbias$ and any feasible signal scheme
$\tau=\{(p_i,\nu_i)\}_{i=1}^k$, the receiver's expected utility under $\tau$ is at least her no-persuasion utility.
\end{proposition}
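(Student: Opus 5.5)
The plan is to compare, posterior by posterior, the receiver's true expected utility under the action she actually takes with the utility of the default action $\defaulta$, and then average using Bayes plausibility. Fix $\tau=\{(p_i,\nu_i)\}_{i=1}^k$ with $\sum_i p_i\nu_i=\prior$. Let $a_i=\bestresp{\nu_i}{\tbias}$ and write $\hat\nu_i=\tbias\nu_i+(1-\tbias)\prior$. Since the state is distributed as $\nu_i$ conditional on signal $i$, the receiver's true ex ante utility is $\sum_i p_i\,\nu_i^\top u_R(a_i,\cdot)$. Her no-persuasion utility is $\prior^\top u_R(\defaulta,\cdot)=\sum_i p_i\,\nu_i^\top u_R(\defaulta,\cdot)$. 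It therefore suffices to prove
\[
\sum_i p_i\,\nu_i^\top \Delta u_{a_i,\defaulta}\ \ge\ 0 .
\]

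The key step is to show that each term is individually nonnegative. Fix $i$. Because $a_i$ is a best response at the distorted belief, $\hat\nu_i^\top\Delta u_{a_i,\defaulta}\ge 0$, which reads
\[
\tbias\,\nu_i^\top\Delta u_{a_i,\defaulta}+(1-\tbias)\,\prior^\top\Delta u_{a_i,\defaulta}\ge 0 .
\]
Because $\defaulta$ is optimal at the prior, $\prior^\top\Delta u_{a_i,\defaulta}\le 0$. Together these give $\tbias\,\nu_i^\top\Delta u_{a_i,\defaulta}\ge -(1-\tbias)\prior^\top\Delta u_{a_i,\defaulta}\ge 0$. Since $\tbias>0$, we get $\nu_i^\top\Delta u_{a_i,\defaulta}\ge 0$. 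This is the crux: the distorted belief is a convex combination, so an action preferred at the distorted belief but weakly dispreferred at the prior must be preferred at the Bayesian posterior itself. Summing with weights $p_i$ then gives the claim.

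\Cref{prop:linear_consistency} is an alternative framing, in which the mean of the distorted posteriors is $\prior$, but the per-posterior argument above does not actually need it. The only points needing care are the conventions. The no-persuasion benchmark is the utility of $\defaulta$ at $\prior$, which is well defined by \Cref{ass:strict_default}; even without uniqueness, any prior-optimal action gives the same value, so the argument is unaffected. Tie-breaking also does not matter, since the best-response inequality holds for whichever maximizer is selected. I expect no real obstacle; the one thing to verify is that the sign step uses $\tbias\in(0,1]$ exactly, with $1-\tbias\ge 0$ and division by $\tbias>0$.
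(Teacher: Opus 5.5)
Your proof is correct and is essentially the paper's argument. The paper also rewrites $\nu_i=\frac{1}{\tbias}\dpost_i-\frac{1-\tbias}{\tbias}\prior$ and uses exactly your two inequalities: optimality of $a_i$ at the distorted belief, and optimality of $\defaulta$ at the prior. The only difference is that it sums over $i$ before simplifying and collapses $\sum_i p_i\dpost_i$ to $\prior$ via \Cref{prop:linear_consistency}, whereas you make explicit the slightly stronger pointwise fact $\nu_i^\top\Delta u_{a_i,\defaulta}\ge 0$ and correctly observe that linear consistency is equivalent to Bayes plausibility here.
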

\begin{proof}
Consider any feasible signal scheme $\tau=\{p_i,\post_i\}_{i=1}^k$. Under the signal scheme, the receiver's utility can be rewritten as below.
    \begin{align*}
        \sum_{i} p_i\sum_{\omega}u_R(a^*(\nu_i;\tbias),\omega)\nu_i(\omega)&=\sum_i p_i\sum_{\omega}u_R(a^*(\nu_i^{\tbias}),\omega)\left(\frac{1}{\tbias}\nu_i^{\tbias}(\omega)-\frac{1-\tbias}{\tbias}\prior(\omega)\right)\tag{$\nu_i^{\tbias}=\tbias\nu_i+(1-\tbias)\prior$}\\
        &\geq\sum_i p_i\sum_{\omega}u_R(a_0,\omega)\frac{1}{\tbias}\nu_i^{\tbias}(\omega)-\frac{1-\tbias}{\tbias}\sum_{\omega}u_R(a_0,\omega)\prior(\omega)\tag{ $a^*(\nu_i^{\tbias})$ is the optimal action facing belief $\nu_i^{\tbias}$, $a_0$ is the optimal action facing belief $\prior$}\\
        &=\sum_\omega u_R(a_0,\omega)\prior(\omega)\tag{$\sum_i p_i\nu_i^{\tbias}=\prior$ from \Cref{prop:linear_consistency}, $\frac{1}{\tbias}-\frac{1-\tbias}{\tbias}=1$}
    \end{align*}

Since $\sum_\omega u_R(a_0,\omega)\prior(\omega)$ is the receiver's utility without persuasion, we conclude the proof.
\end{proof}

Another question is: for a fixed bias level $\tbias$, how does the receiver's utility compare between
(i) a sender who correctly accounts for the receiver's bias when optimizing the signal scheme, and
(ii) a sender who incorrectly treats the receiver as Bayesian when optimizing the signal scheme?
The comparison can go in either direction.

The main intuition is that when the sender does not know the receiver’s true bias level, the sender cannot precisely target the receiver’s decision thresholds. This imprecision can cut in either direction: the sender may fail to extract as much utility as under correct knowledge, which can leave the receiver better off, or the sender may inadvertently push the receiver into suboptimal actions under the true bias, reducing the receiver’s utility and potentially lowering the sender’s utility as well.

\begin{example}
\label{ex:receiver-nonmonotone}
Let \(\states=\{0,1\}\), \(\actions=\{a_1,a_2,a_3\}\), and \(\prior(1)=0.1\). The receiver's utility are
\[
\begin{aligned}
&u_R(a_1,0)=u_R(a_1,1)=0,\qquad
u_R(a_2,0)=-1,\quad u_R(a_2,1)=4,\\
&u_R(a_3,0)=-51,
\qquad u_R(a_3,1)=54.
\end{aligned}
\]
For a binary belief \(q=\Pr(\omega=1)\), the receiver obtains
\[
    \E[u_R(a_2,\omega)\mid q] = -1+5q,
    \qquad
    \E[u_R(a_3,\omega)\mid q] = -51+105q.
\]
Thus she chooses \(a_1\) for \(q\le 0.2\), \(a_2\) for \(0.2\le q\le 0.5\), and \(a_3\) for \(q\ge 0.5\). Let the sender's utility depend only on the induced action:
\[
    u_S(a_1)=0,
    \qquad u_S(a_2)=2,
    \qquad u_S(a_3)=10 .
\]

First take \(\tbias=0.2\). A sender who mistakenly treats the receiver as Bayesian chooses the two-posterior scheme \(q_L=0\), \(q_H=0.5\), with probabilities \(0.8\) and \(0.2\). Under the true bias,
\[
    \hat q_L=0.2\cdot 0+0.8\cdot 0.1=0.08,
    \qquad
    \hat q_H=0.2\cdot 0.5+0.8\cdot 0.1=0.18,
\]
so the receiver chooses \(a_1\) after both signals and obtains utility \(0\). By contrast, a sender who accounts for \(\tbias=0.2\) cannot induce \(a_3\), and induces \(a_2\) using \(q_L=0\), \(q_H=0.6\), with probabilities \(5/6\) and \(1/6\). The receiver's utility is
\[
    \frac{1}{6}(-1+5\cdot 0.6)=\frac{1}{3}.
\]
Hence the receiver is better off when the sender knows \(\tbias\).

Now take \(\tbias=1/3\). The same Bayesian-designed scheme \((q_L,q_H)=(0,0.5)\), with probabilities \((0.8,0.2)\), induces distorted high belief
\[
    \hat q_H=\frac{1}{3}\cdot 0.5+\frac{2}{3}\cdot 0.1=\frac{7}{30},
\]
so the receiver chooses \(a_2\) after the high signal and obtains
\[
    0.2(-1+5\cdot 0.5)=0.3.
\]
A sender who correctly accounts for \(\tbias=1/3\) again cannot induce \(a_3\). To induce \(a_2\), the minimal Bayesian posterior is \(q_H=0.4\), so the optimal scheme uses \(q_L=0\), \(q_H=0.4\), with probabilities \(0.75\) and \(0.25\). The receiver obtains
\[
    0.25(-1+5\cdot 0.4)=0.25.
\]
Hence the receiver is worse off when the sender knows \(\tbias\).
\end{example}

\section{Extension: Changing Priors and Utility Functions} \label{sec_app:changing_prior}
We now allow the prior and utility functions to vary across rounds while the receiver's bias \(\tbias\) remains fixed. The state and action spaces \(\states\) and \(\actions\) are fixed and finite. At round \(t\), a context
\[
    x_t=(\mu_{x_t},u_{S,x_t},u_{R,x_t})
\]
is realized and observed by the sender before she chooses a signaling scheme. Conditional on \(x_t=x\), the state is drawn from \(\mu_x\). 
We do \emph{not} require the sender to know the full family of possible contexts
in advance, nor do we impose any stochastic assumption on the sequence $(x_t)_{t=1}^T$.
Instead, we only assume that there exists a context set $\mathcal X$ such that $x_t \in \mathcal X$
for every $t$.

For each context \(x\), write \(I_x=(\states,\actions,\mu_x,u_{S,x},u_{R,x})\). Define
\[
\begin{aligned}
    a_x^\star(\post;\bias)
    &\in \argmax_{a\in\actions}
    \sum_{\omega\in\states}
    \big((1-\bias)\mu_x(\omega)+\bias\post(\omega)\big)u_{R,x}(a,\omega),\\
    R_{x,a}^{\bias}
    &:=\{\post\in\simplex{\states}:a_x^\star(\post;\bias)=a\},\\
    V_x^{\bias}(\post)
    &:=\sum_{\omega\in\states}\post(\omega)u_{S,x}\big(a_x^\star(\post;\bias),\omega\big),\\
    \OPT_x(\bias)
    &:=\sup_{\postdist\in\Delta(\simplex{\states}):\E_{\post\sim\postdist}[\post]=\mu_x}
        \E_{\post\sim\postdist}\big[V_x^{\bias}(\post)\big].
\end{aligned}
\]
The dynamic oracle knows \(\tbias\) and, in every round \(t\), plays an optimal Bayes-plausible scheme for the realized context \(x_t\). For a fixed context sequence \(\mathbf{x}=(x_1,\ldots,x_T)\), define the expected dynamic regret by
\[
    \Reg_T^{\mathbf{x}}(\Pi;\tbias)
    :=\sum_{t=1}^T
    \left(
    \OPT_{x_t}(\tbias)
    -\E_{\Pi}\big[V_{x_t}^{\tbias}(\post_t)\big]
    \right).
\]

\begin{assumption}[Uniform contextual regularity]
\label{ass:contextual-regularity}
For every \(x\in\mathcal X\), the instance \(I_x\) satisfies the \Cref{ass:strict_default}, together with the tie-breaking rule of the relevant-action restriction. In addition:
\begin{enumerate}
    \item The true bias is detectable in every context: \(\tbias\ge \bar\alpha_{\min}:=\max_{x\in\mathcal X}\alpha_{\min}(I_x)\).
    \item The sender utilities are uniformly bounded: \(U_{\max}^{\mathcal X}:=\max_{x\in\mathcal X}\max_{a,\omega}|u_{S,x}(a,\omega)|<\infty\).
      \item For every context $x$, the analogue of \Cref{prop:samplecomplexity} holds, and there exists a positive constant $c= \min_{x} c_x >0,\epsilon=\min_x \epsilon_x>0$ for the first branch in \Cref{prop:samplecomplexity}, and a positive constant $\delta=\min_x \delta_x>0$ for the second branch.
    \item The constants in \Cref{thm:generalbound} can be chosen uniformly over \(x\in\mathcal X\), see more details in \Cref{app:general-constants}. Equivalently, there exist positive constant $\dprior=\min_{x\in\mathcal X}\delta_{\mu_0,x}$ and 
finite constants $\kappa,G_{\max},L_b$ satisfying $$\kappa=\max_{x\in\mathcal X}\kappa_x,\quad G_{\max}=\max_{x\in\mathcal X}G_{\max,x},\quad L_b=\max_{x\in\mathcal X}L_{b,x}.$$
\end{enumerate}
\end{assumption}
Note that the above assumption naturally holds if the context set $\mathcal X$ is finite.

\paragraph{Contextual algorithm.}
The contextual algorithm maintains one global confidence interval \(J\) for the common parameter \(\tbias\). In each round, every LP, threshold test, safe scheme, and probe is instantiated with the realized context \(x_t\). Thus the algorithm is the contextual analogue of \Cref{algo:G-SETC}. 





\begin{theorem}
\label{thm:contextual-gse}
Under \Cref{ass:contextual-regularity}, for every fixed context sequence \(\mathbf{x}\in\mathcal X^T\), the contextual version of \Cref{algo:G-SETC} \(\Pi_{\mathrm{CGSE}}\) satisfies that 
\[
    \Reg_T^{\mathbf{x}}(\Pi_{\mathrm{CGSE}};\tbias)=O(\log\log T).
\]
\end{theorem}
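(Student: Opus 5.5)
The proof will replay the three-stage regret decomposition behind \Cref{thm:generalbound}, checking at each step that the context-indexed constants are uniform by \Cref{ass:contextual-regularity}. The key structural observation is that the only object carried across rounds is the global interval $J\ni\tbias$. Every update of $J$ comes from one informative realization in the realized context $x_t$. That realization is a comparison of $\tbias$ with a candidate $m$: a threshold test at $\beta$, or a local probe on a movable binding constraint of $I_{x_t}$. Such a comparison is valid in any context, because the threshold interpretation in \Cref{sec:movingconstraint} and \Cref{alg:threshold_test} depends only on the instance used in that round. Hence, on a sample path, the sequence of interval updates has the same combinatorial structure as in the single-context case, no matter how $(x_t)$ is chosen adversarially. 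So the same counts hold: $O(\log\log T)$ localization comparisons, and phases with $|J_{r+1}|\le|J_r|^2$. These give $O(\log\log T)$ phases in total.

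\textbf{Localization.} By item 1, every $\beta\in[\bar\alpha_{\min},1]$ is testable in every context; I start localization from $[\bar\alpha_{\min},1]$. By \Cref{prop:detectable-localization} applied to each $I_x$ with the uniform $\dprior$ of item 4, each round yields an informative recommendation with probability at least $\dprior/\sqrt2$, conditional on the past and on $x_t$. The waiting times are therefore dominated by geometric variables, and the argument of \Cref{cor:localization-regret} gives regret at most $\sqrt2\,\Dumax^{\mathcal X}\,O(\log\log T)/\dprior$, where $\Dumax^{\mathcal X}=2U_{\max}^{\mathcal X}$.

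\textbf{Safe exploration.} Within a phase on $J_r$, the per-round loss of playing the probe built for $x_t$ is bounded by \Cref{lemma:safe-optimal,lemma:safe-probe} instantiated at $x_t$. The constants $\Crep,\Kdist$ depend only on $\kappa,\Gmax,\dprior,U_{\max}$, so they are uniform by item 4. This yields a per-round expected regret of at most $K\bigl(|J_r|+P_x|J_r|\bigr)$ in round $t$ while a safe probe is active, where $P_x:=\up(\scheme^{J_r}_{\mathrm{vtx},x})$, and at most $\Dumax^{\mathcal X}P_x$ while the unsafe probe is active. Contexts with $P_x=0$ are handled by the second branch of \Cref{prop:samplecomplexity}, with uniform $\delta$ from item 3: there the safe scheme is exactly optimal and no perturbation is made, so the round costs zero and produces no information. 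This is harmless because regret only accrues in rounds whose cost is proportional to the informative probability.

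The main obstacle is that the informative probability $P_{x_t}$ now varies adversarially across rounds, so "$O(1/P)$ expected waiting time" no longer makes sense. I will replace it with an optional-stopping argument, as in the lower-bound proof. Let $H_t$ indicate an informative realization, so that $\E[H_t\mid\mathcal F_{t-1}]\ge P_{x_t}/2$ by the repair-scaling remark after \Cref{lemma:safe-probe}. In first-branch contexts, the per-round regret is at most $K'|J_r|\le (2K'/c)|J_r|\,\E[H_t\mid\mathcal F_{t-1}]$. In second-branch contexts it is at most $K''P_{x_t}|J_r|$, which is again bounded by a constant times $|J_r|\,\E[H_t\mid\mathcal F_{t-1}]$. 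Summing up to the stopping time of each probe epoch and applying optional stopping turns $\sum \E[H_t\mid\mathcal F_{t-1}]$ into the expected number of informative realizations, which is one per epoch. Each safe probe therefore costs $O(|J_r|)$ and the unsafe probe costs $O(1)$, with constants $\max\{2/c,\,1\}$ times the uniform constants. With $O(1/|J_r|)$ safe probes, each phase costs $O(1)$.

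I must also ensure that once $|J|\le\epsilon$ with the uniform $\epsilon$, the relevant-action filter of \Cref{lemma:relevant-action} is correct in every context. Here I will assume that item 3 also makes $\epsilon_{\rm rel}$ uniform, or absorb it into $\epsilon$. The commitment stage then contributes $T\cdot O(1/T)=O(1)$ by \Cref{lemma:safe-optimal} per context, or zero in second-branch contexts. Summing the three stages gives $O(\log\log T)$ for every fixed $\mathbf x$.
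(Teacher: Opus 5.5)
Your proposal is correct and follows the paper's proof essentially step by step. It uses the same three-stage decomposition with constants made uniform by \Cref{ass:contextual-regularity}, and the same bound of each safe-probe round's regret by a constant times $|J|$ times the conditional informative probability in both branches of the dichotomy. Your optional-stopping step is exactly the paper's hazard-rate identity $\E\bigl[\sum_{t\le\sigma}\Pr(\sigma=t\mid\mathcal F_{t-1})\bigr]\le 1$, and your explicit handling of zero-informative contexts and of the uniformity of $\epsilon_{\rm rel}$ fills in details the paper leaves implicit.
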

\begin{proof}
The following proof is a contextual analogue of proof of \Cref{thm:generalbound}.

\paragraph{Localization.}
The localization stage runs threshold tests at midpoints of the current global interval until its length is at most \(1/\log T\). Since the interval is contained in \([\bar\alpha_{\min},1]\), the threshold test is feasible in every realized context. By the uniform version of \Cref{prop:detectable-localization}, in every round of a threshold test the conditional probability of an informative non-default recommendation is at least \(p_{\min}=\dprior/\sqrt{2}>0\). Therefore the waiting time for one informative comparison is stochastically dominated by a geometric random variable with mean \(1/p_{\min}\), even though contexts may vary over time. The number of informative comparisons needed to reduce the interval from constant length to \(1/\log T\) is \(O(\log\log T)\). Since one-round regret is bounded by \(2U_{\max}^{\mathcal X}\), localization contributes \(O(\log\log T)\) expected regret.


\paragraph{One safe-exploration phase.}
Consider a safe-exploration phase that starts with interval \(J\) and step size \(\eta=L^2\). For each realized context \(x\), let \(\tau_{x,\mathrm{vtx}}^J\) be the contextual interval-safe vertex-supported optimizer, and let
\[
    p_x^J=P_{\mathrm{info}}(\tau_{x,\mathrm{vtx}}^J).
\]
If \(p_x^J=0\), the algorithm plays \(\tau_{x,\mathrm{vtx}}^J\) and does not attempt to update the interval in that round. If \(p_x^J>0\), it plays the contextual probe constructed using the binding-movable-constraint construction from \Cref{sec:movingconstraint}. By the uniform version of \Cref{lemma:safe-probe}, the probability of an informative realization under this probe is at least \(p_x^J/2\) for all sufficiently small intervals.

We first bound the regret accumulated until the next informative realization that corresponds to a safe probe, i.e., a realization that moves one endpoint inward by \(\eta\). There are two cases, matching the dichotomy in Proposition~8. If context \(x\) is in the first branch of the dichotomy, then \(p_x^J\ge c\), so the informative probability in that round is at least \(c/2\). The safe-optimal gap is at most $O(L)$, and the probe perturbation adds at most $O(L)$. Thus the one-round regret on such a safe probe is at most $O(L)$, while its informative probability is at least \(c/2\).

If context \(x\) is in the second branch, then the contextual safe optimum already attains \(\OPT_x(\tbias)\) on sufficiently small intervals. The only loss is the probe perturbation, at most \(O(p_x^JL)\), while the informative probability is at least \(p_x^J/2\). Hence the one-round regret is at most $O(L)$.

Combining the two cases, there is a uniform constant \(C\) such that, on every safe-probe round before the next safe informative realization,
\[
    \text{one-round regret}\le C L\cdot
    \Pr(\text{informative realization in that round}\mid\mathcal F_{t-1}).
\]
Let \(\sigma\) be the first informative realization time for the current probe. The standard hazard-rate identity gives
\[
    \E\left[\sum_{t\le\sigma}
    \Pr(\sigma=t\mid\mathcal F_{t-1})\right]
    \le 1.
\]
Therefore the expected regret before one safe endpoint update is at most \(CL\). During a phase, each safe informative realization moves one endpoint by \(\eta=L^2\), so there are at most \(\lceil L/\eta\rceil=O(1/L)\) such updates. The total expected regret from safe updates in the phase is therefore \(O(1)\).

There can be at most one unsafe informative realization in the phase, because once such a realization occurs the algorithm localizes \(\tbias\) to an adjacent interval of length \(\eta\) and ends the phase. In the first branch of the dichotomy, the informative probability is uniformly bounded below, and one-round regret is uniformly bounded by \(2U_{\max}^{\mathcal X}\), so the unsafe realization contributes \(O(1)\) expected regret. In the second branch, the baseline safe scheme is optimal and only the perturbed informative mass can induce the wrong action; its total probability is \(O(p_x^J)\), while the informative probability is at least \(p_x^J/2\). Thus the unsafe realization also contributes \(O(1)\) expected regret. Hence every completed safe-exploration phase contributes \(O(1)\) expected regret. A phase truncated by the horizon is bounded by the same argument.

\paragraph{Number of phases and commitment.}
At the end of any completed safe-exploration phase, the new interval length satisfies
\[
    L_{r+1}\le \eta_r=L_r^2.
\]
After localization, \(L_0\le 1/\log T\). Therefore after \(O(\log\log T)\) phases the interval length is at most \(1/T\). When this happens, the algorithm commits, in each realized context \(x_t\), to the contextual interval-safe optimizer for the final interval. The uniform safe-optimal gap gives per-round regret $O(1/T)$, and hence the total commitment regret is \(O(1)\).

Adding the localization regret, the constant diagnostic cost, the \(O(\log\log T)\) safe-exploration phases, and the final commitment cost yields the claimed bound.

\end{proof}

\section{Extension: Jointly Unknown Prior and Bias}\label{sec_app:prior-bias-extension}
\subsection{Binary case with jointly unknown prior and bias.}
We now consider the binary case in which both the receiver's bias and the prior are fixed across rounds and unknown to the sender. As in Section~\ref{sec:binary}, let $\Omega=\{0,1\}$, $A=\{0,1\}$, $\mu^\ast:=\Pr(\omega=1)\in(0,1)$, and $\alpha^\ast\in[0,1]$. The sender utility is $u_S(a,\omega)=\mathbf{1}\{a=1\}$, so the sender only cares about inducing action $1$. For a realized signal $s$, let $\nu=\Pr(\omega=1\mid s)$ be the Bayesian posterior. Under the linear-bias model, the receiver evaluates this signal using the distorted posterior $(1-\alpha^\ast)\mu^\ast+\alpha^\ast\nu$.

Let $\hat q\in(0,1)$ be the receiver's cutoff in distorted-belief space: the receiver takes action $1$ iff $(1-\alpha^\ast)\mu^\ast+\alpha^\ast\nu\ge \hat q$. Equivalently, there is a Bayesian-posterior threshold $\nu^\ast$ such that the receiver takes action $1$ iff $\nu\ge \nu^\ast$, where $\nu^\ast$ is determined by $(1-\alpha^\ast)\mu^\ast+\alpha^\ast\nu^\ast=\hat q$, i.e.,
\begin{align*}
\nu^\ast=\frac{\hat q-(1-\alpha^\ast)\mu^\ast}{\alpha^\ast}
=\mu^\ast+\frac{\hat q-\mu^\ast}{\alpha^\ast}.
\end{align*}
Thus $\nu^\ast$ is the critical threshold in posterior space.

Compared with the known-prior case, the difficulty is that when $\mu^\ast$ is known, one can guess $\alpha$, compute the target posterior $\nu(\alpha)=\frac{\hat q-(1-\alpha)\mu^\ast}{\alpha}$, and implement the corresponding binary scheme supported on $\{0,\nu(\alpha)\}$. When $\mu^\ast$ is unknown, this direct reduction is unavailable because implementing a target posterior itself depends on the prior. Nevertheless, under a natural two-signal family, the sender still faces a one-dimensional threshold-learning problem, much as in the binary-action unknown-prior model of \cite{Li_Lin_2025}.

Specifically, consider the implementable family $\{\pi_m\}_{m\in[0,1]}$ given by $\pi_m(\mathrm{High}\mid \omega=1)=1$ and $\pi_m(\mathrm{High}\mid \omega=0)=m$. Thus the sender always sends $\mathrm{High}$ in state $1$ and sends $\mathrm{High}$ with probability $m$ in state $0$. Under $\pi_m$, Bayes' rule gives
$\nu(m;\mu^\ast)=\Pr(\omega=1\mid \mathrm{High})
=\frac{\mu^\ast}{\mu^\ast+(1-\mu^\ast)m}$.
Hence the receiver accepts $\mathrm{High}$ iff $(1-\alpha^\ast)\mu^\ast+\alpha^\ast\frac{\mu^\ast}{\mu^\ast+(1-\mu^\ast)m}\ge \hat q$. This motivates the critical implementable threshold $m^\ast$, defined as the unique solution of
\begin{align*}
(1-\alpha^\ast)\mu^\ast+\alpha^\ast\frac{\mu^\ast}{\mu^\ast+(1-\mu^\ast)m^\ast}=\hat q,
\end{align*}
namely
\begin{align*}
m^\ast=\frac{\mu^\ast(\mu^\ast(1-\tbias)+\tbias-\hat q)}{(1-\mu^\ast)\bigl(\hat q-(1-\alpha^\ast)\mu^\ast\bigr)}.
\end{align*}

Since $\nu(m;\mu^\ast)$ is decreasing in $m$, the scheme $\pi_m$ is persuasive for $\mathrm{High}$ iff $m\le m^\ast$. Therefore, although the primitive unknown is the pair $(\alpha^\ast,\mu^\ast)$, within the family $\{\pi_m\}_{m\in[0,1]}$ the sender's problem reduces to identifying the single boundary $m^\ast$. The two thresholds are linked by $\nu^\ast=\frac{\mu^\ast}{\mu^\ast+(1-\mu^\ast)m^\ast}$, or equivalently $m^\ast=\frac{\mu^\ast(1-\nu^\ast)}{(1-\mu^\ast)\nu^\ast}$. Thus $\nu^\ast$ is the critical threshold in posterior space, while $m^\ast$ is the corresponding threshold in the implementable signaling family.

\begin{algorithm}[h]
\caption{Safe Exploration for Unknown Bias and Unknown Prior}
\label{alg:setc_joint_unknown}
\begin{algorithmic}[1]
\State \textbf{Input:} horizon $T$, threshold $\hq$.
\State \textbf{Initialize:} interval $[a,b]\gets[0,1]$, step size $\varepsilon\gets \tfrac12$, time $t\gets 1$.

\State \textbf{Stage 1: Safe exploration}
\While{$b-a>T^{-1}$ \textbf{and} $t\le T$}
    \State $m \gets a$, \quad $m_{\mathrm{prev}}\gets a$
    \While{$m\le b$ \textbf{and} $t\le T$}
        \State Commit to the signaling scheme $\pi_m$ defined by
        \begin{align*}
        \pi_m(\mathrm{High}\mid \omega=1)=1,
        \qquad
        \pi_m(\mathrm{High}\mid \omega=0)=m
        \end{align*}
        \State Realize signal $s_t\in\{\mathrm{Low},\mathrm{High}\}$
        \If{$s_t=\mathrm{Low}$}
            \State $t\gets t+1$
            \State \textbf{continue}
        \Else
            \State Observe receiver action $a_t\in\{0,1\}$
            \If{$a_t=1$}
                \State $m_{\mathrm{prev}}\gets m$, \quad $m\gets m+\varepsilon$, \quad $t\gets t+1$
            \Else
                \State $[a,b]\gets[m_{\mathrm{prev}},\, m]$, \quad $\varepsilon\gets \varepsilon^2$, \quad $t\gets t+1$
                \State \textbf{break}
            \EndIf
        \EndIf
    \EndWhile
    \If{$m>b$}
        \State $[a,b]\gets[m_{\mathrm{prev}},\, b]$, \quad $\varepsilon\gets \varepsilon^2$
    \EndIf
\EndWhile

\State \textbf{Stage 2: Commit}
\State $\hat m \gets a$
\State Commit to the signaling scheme $\pi_{\hat m}$ for all remaining rounds
\end{algorithmic}
\end{algorithm}

\begin{proposition}[Binary-case regret upper bound for jointly unknown prior and bias]
\label{prop:joint-unknown-loglog}
Consider the binary model described above. Assume:
\begin{enumerate}
    \item $\Omega = A = \{0,1\}$;
    \item $u_S(a,\omega) = \mathbf{1}\{a=1\}$;
    \item the receiver takes action $1$ if and only if the distorted posterior is at least $\hq$, where $0<\mu^\ast<\hq<1$;
    \item persuasion is feasible, namely
    \begin{align*}
    \alpha^\ast \geq\alpha_{min}=\frac{\hq-\mu^\ast}{1-\mu^\ast}
    \end{align*}
    so that $m^\ast \in [0,1]$ is well-defined;
    \item ties are broken in favor of action $1$.
\end{enumerate}
Let $\Pi^{\mathrm{SEJ}}$ denote Algorithm~\ref{alg:setc_joint_unknown}. Then, for every $T\ge 4$,
\begin{align*}
\ireg{T}{}{\Pi^{\mathrm{SEJ}}}{\alpha^\ast,\mu^\ast}=O(\log\log T).
\end{align*}
In particular,
\begin{align*}
\ireg{T}{}{\Pi^{\mathrm{SEJ}}}{\alpha^\ast,\mu^\ast}
\le
\left(\frac{1-\mu^\ast}{\mu^\ast}+1\right)
\left(2+\left\lceil \log_2 \log_2 T \right\rceil\right)+1.
\end{align*}
\end{proposition}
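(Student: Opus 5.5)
The plan is to repeat the proof of \Cref{prop:binary-regret-new} in the implementable coordinate $m$ instead of the bias coordinate. Within the family $\{\pi_m\}$, the problem becomes one-dimensional posted-price learning of the boundary $m^\ast$, and the loss turns out to be exactly linear in $m^\ast-m$.

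\textbf{Step 1: per-round value of $\pi_m$ and the benchmark.} Under $\pi_m$ the High signal occurs with probability
\[
p_H(m)=\mu^\ast+(1-\mu^\ast)m .
\]
Since $\nu(m;\mu^\ast)$ is decreasing in $m$ and ties favor action $1$, High is persuasive iff $m\le m^\ast$. So the per-round sender utility is $p_H(m)\mathbf 1\{m\le m^\ast\}$.

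Next I identify the benchmark. The full-information optimum with known $(\alpha^\ast,\mu^\ast)$ is the two-point scheme on $\{0,\nu^\ast\}$, with value $\mu^\ast/\nu^\ast$. Using $\nu^\ast=\mu^\ast/(\mu^\ast+(1-\mu^\ast)m^\ast)$, this value equals $p_H(m^\ast)$, so $\pi_{m^\ast}$ is optimal. This is the same argument as the $\tau(\alpha^\ast)$ characterization in \Cref{sec:binary}: posterior $0$ in the Low signal and posterior exactly $\nu^\ast$ in the High signal.

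The one-round regret of a probe is therefore
\[
r(m)=(1-\mu^\ast)(m^\ast-m)\ \text{ if } m\le m^\ast,
\qquad
r(m)=p_H(m^\ast)\le 1\ \text{ if } m>m^\ast .
\]

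\textbf{Step 2: epoch regret.} Each probe is repeated until the first High signal. The epoch length $N(m)$ is geometric with mean $1/p_H(m)\le 1/\mu^\ast$.

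For a safe probe ($m\le m^\ast$), the expected epoch regret is
\[
\E[N(m)]\,p_H(m^\ast)-1=\frac{p_H(m^\ast)-p_H(m)}{p_H(m)}\le \frac{1-\mu^\ast}{\mu^\ast}(m^\ast-m).
\]
For an unsafe probe ($m>m^\ast$), the expected epoch regret is $p_H(m^\ast)/p_H(m)\le 1$.

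\textbf{Step 3: per-phase count and number of phases.} Copy the argument of \Cref{prop:binary-regret-new}. Consider a phase with interval $[a,b]$ of length $L$ and step $\varepsilon$. The probes are $a, a+\varepsilon,\dots$, so there are at most $L/\varepsilon+2$ probes, and at most one of them is unsafe. Every safe probe satisfies $m^\ast-m_j\le(J-j)\varepsilon$, so the safe probes contribute at most
\[
\frac{1-\mu^\ast}{\mu^\ast}\left(\frac{L^2}{2\varepsilon}+L\right).
\]
The invariants $L_{r+1}\le\varepsilon_r$ and $\varepsilon_{r+1}=\varepsilon_r^2$ give $\varepsilon_r\ge L_r^2$. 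Hence each phase costs at most $(1-\mu^\ast)/\mu^\ast+1$, where the $+1$ is the unsafe epoch. Because $\varepsilon_r=2^{-2^r}$, the condition $b-a\le 1/T$ holds after at most $2+\lceil\log_2\log_2 T\rceil$ phases. As in the earlier proofs, truncation by the horizon only reduces regret.

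\textbf{Step 4: commitment.} The final left endpoint satisfies $\hat m\le m^\ast$ and $m^\ast-\hat m\le 1/T$. So the commitment regret is at most $T(1-\mu^\ast)/T\le 1$, which gives the trailing $+1$ in the bound.

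\textbf{Main obstacle.} I expect the delicate part to be the bookkeeping in Step 3 needed to land exactly on the stated constant. Two sources of slack must be absorbed: the extra probe at $m=a$ itself (the algorithm starts at $m\gets a$ rather than $a+\varepsilon$), and the first phase, where $L_0=1$ and $\varepsilon_0=\tfrac12$ do not satisfy $\varepsilon_0\ge L_0^2$. I would absorb both into the extra $+2$ phases in the count rather than sharpen the per-phase bound.
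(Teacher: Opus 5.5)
Your proposal is correct and follows the paper's proof essentially step for step. The shared steps are: reduce to the implementable threshold $m^\ast$; bound each safe epoch by $\frac{1-\mu^\ast}{\mu^\ast}(m^\ast-m)$ and the single unsafe epoch per phase by $1$; get $O(1)$ cost per phase from $\varepsilon_r\ge L_r^2$ for $r\ge 1$; use doubly exponential shrinkage for the phase count; and add a commitment cost of at most $1$. Your bookkeeping also closes: the always-safe probe at $m=a$ costs at most $L$ and the remaining safe probes sum to at most $L^2/(2\varepsilon)$, so $\frac{L^2}{2\varepsilon}+L$ is valid, and since at most $1+\lceil\log_2\log_2 T\rceil$ phases actually occur, the spare phase in the budget $2+\lceil\log_2\log_2 T\rceil$ absorbs the first phase's excess of $\frac{1-\mu^\ast}{\mu^\ast}$.
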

\begin{proof}
Let $V^\ast$ denote the full-information benchmark value. As noted above, when $(\tbias,\mu^\ast)$ is known, an optimal scheme is supported on $\{0,\nu^\ast\}$, so $V^\ast=\frac{\mu^\ast}{\nu^\ast}=\mu^\ast+(1-\mu^\ast)m^\ast$.

Fix $m\in[0,1]$ and consider $\pi_m$. The probability of the $\mathrm{High}$ signal is $p_H(m)=\mu^\ast+(1-\mu^\ast)m$. By definition of $m^\ast$, the $\mathrm{High}$ recommendation is persuasive if and only if $m\le m^\ast$. Hence the sender's one-round utility is $V(m)=\mu^\ast+(1-\mu^\ast)m$ if $m\le m^\ast$, and $V(m)=0$ otherwise. Therefore the one-round regret is $r(m)=(1-\mu^\ast)(m^\ast-m)$ if $m\le m^\ast$, and $r(m)=V^\ast$ if $m>m^\ast$.

Viewing one test point $m$ as an epoch that lasts until the first $\mathrm{High}$ realization, the epoch length is geometric with mean $\mathbb{E}[N(m)]=1/p_H(m)$. Thus the expected regret of that epoch is $R_{\mathrm{epoch}}(m)=r(m)/p_H(m)$. If $m\le m^\ast$, then
\begin{align*}
R_{\mathrm{epoch}}(m)
=\frac{(1-\mu^\ast)(m^\ast-m)}{\mu^\ast+(1-\mu^\ast)m}
\le \frac{1-\mu^\ast}{\mu^\ast}(m^\ast-m).
\end{align*}
If $m>m^\ast$, then $p_H(m)\ge p_H(m^\ast)=V^\ast$, so
\begin{align*}
R_{\mathrm{epoch}}(m)=\frac{V^\ast}{p_H(m)}\le 1.
\end{align*}

Now consider the exploration phase. The maintained interval $[a,b]$ always satisfies $a\le m^\ast\le b$: this is true initially since $[a,b]=[0,1]$, and each update preserves it because acceptance is equivalent to $m\le m^\ast$. Let $L_k$ be the interval length at the start of phase $k$. Since $L_0=1$ and $\varepsilon_0=1/2$, after phase $0$ we have $L_1=1/2$. For every $k\ge 1$, the new interval length equals the previous step size and the algorithm updates $\varepsilon\gets\varepsilon^2$, so $\varepsilon_k=L_k^2$ and $L_{k+1}=L_k^2$. Hence $L_k=(1/2)^{2^{k-1}}$ for $k\ge 1$. Exploration stops once $L_k\le T^{-1}$, so the number of phases is at most
\begin{align*}
P\le 2+\left\lceil \log_2\log_2 T\right\rceil.
\end{align*}

Fix one phase with interval $[a,b]$, length $L=b-a$, and step size $\varepsilon$. Let the accepted test points be $m_j=a+j\varepsilon$ for $j=0,\dots,J-1$, where $J\le L/\varepsilon$, and possibly there is one rejected point $m_J$. Then
\begin{align*}
\sum_{j=0}^{J-1} R_{\mathrm{epoch}}(m_j)
\le \frac{1-\mu^\ast}{\mu^\ast}\sum_{j=0}^{J-1}(m^\ast-m_j).
\end{align*}
Since rejection first occurs at $m_J$, we have $m^\ast<a+J\varepsilon$, so $m^\ast-m_j\le (J-j)\varepsilon$. Therefore
\begin{align*}
\sum_{j=0}^{J-1}(m^\ast-m_j)
\le \varepsilon\sum_{r=1}^J r
= \varepsilon\frac{J(J+1)}{2}
\le \frac{L^2}{2\varepsilon}+\frac{L}{2}.
\end{align*}
Hence the regret from accepted points in this phase is at most $\frac{1-\mu^\ast}{\mu^\ast}\left(\frac{L^2}{2\varepsilon}+\frac{L}{2}\right)$. For phase $0$, this is at most $\frac{3}{2}\cdot \frac{1-\mu^\ast}{\mu^\ast}$. For every later phase, $\varepsilon=L^2$, so it is at most $\frac{1-\mu^\ast}{\mu^\ast}\left(\frac12+\frac{L}{2}\right)\le \frac{1-\mu^\ast}{\mu^\ast}$. Each phase has at most one rejected point, and its epoch regret is at most $1$. Therefore
\begin{align*}
R_{\mathrm{explore}}
&\le \frac{3}{2}\cdot \frac{1-\mu^\ast}{\mu^\ast}+1+(P-1)\left(\frac{1-\mu^\ast}{\mu^\ast}+1\right) \\
&\le \left(\frac{1-\mu^\ast}{\mu^\ast}+1\right)P.
\end{align*}

In the commitment phase, the algorithm commits to $\hat m=a$. By the interval invariant, $\hat m\le m^\ast$ and $m^\ast-\hat m\le b-a\le 1/T$. Thus $\pi_{\hat m}$ is persuasive, and its per-round regret is $(1-\mu^\ast)(m^\ast-\hat m)\le (1-\mu^\ast)/T\le 1/T$. So the total commitment regret is at most $1$.

Combining the two parts,
\begin{align*}
\ireg{T}{}{\Pi^{\mathrm{SEJ}}}{\alpha^\ast,\mu^\ast}
&\le \left(\frac{1-\mu^\ast}{\mu^\ast}+1\right)P+1 \\
&\le \left(\frac{1-\mu^\ast}{\mu^\ast}+1\right)\left(2+\left\lceil \log_2\log_2 T\right\rceil\right)+1.
\end{align*}
This proves the proposition.
\end{proof}

\subsection{General case with jointly unknown prior and bias}
\begin{proposition}[General-case regret lower bound for jointly unknown prior and bias]\label{prop:joint_general_lower}
In the jointly unknown prior and bias model with finite state and action spaces, for any learning algorithm $\Pi$, there exists an instance such that
\begin{align*}
\ireg{T}{}{\Pi}{\alpha^\ast,\mu^\ast}=\Omega(\log T).
\end{align*}
\end{proposition}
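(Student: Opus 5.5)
The plan is a reduction: embed the known-bias, unknown-prior persuasion problem of \citet{Li_Lin_2025} into the jointly unknown model by fixing $\tbias=1$. We then check that any lower-bound instance family from that work transfers to our regret notion without loss. Concretely, we restrict to the subfamily of instances with $\tbias=1$ and a prior $\mu^\ast$ ranging over a set $\mathcal M\subseteq\simplex{\states}$. On this subfamily the receiver is exactly Bayesian: the distorted posterior $(1-\tbias)\mu^\ast+\tbias\post=\post$. The sender's feedback (realized states, signals and actions), the full-information benchmark $\iopt{\cI}{1}$, and the regret $\ireg{\cI}{T}{\Pi}{1}$ therefore coincide with the corresponding objects in the unknown-prior Bayesian persuasion model. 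An important point is that the sender does not know $\tbias=1$. This only restricts the sender relative to the model of \citet{Li_Lin_2025}, where the bias is known to be $1$. Hence for any policy $\Pi$ in our model, $\sup_{\tbias,\mu^\ast}\Reg\ge\sup_{\mu^\ast\in\mathcal M}\Reg(\Pi;1,\mu^\ast)$, and the right-hand side is the regret of a legitimate policy in their model.

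The key steps are as follows.
\begin{enumerate}
\item Recall the $\Omega(\log T)$ lower-bound instance of \citet{Li_Lin_2025}. It has finite states and actions, and a family of priors such that any sender suffers $\Omega(\log T)$ regret against the prior-aware benchmark for some prior in the family. We state it as a black box.
\item Verify that their signaling model matches ours. Two things need checking: the sender commits to a scheme $\mech:\states\to\simplex{\signals}$ each round rather than to a posterior distribution, and the feedback includes the realized state. When the prior is unknown, the posterior-based formulation of our model must be read as the sender choosing the map $\mech$, since posteriors depend on $\mu^\ast$. I will note this interpretation explicitly, as in the binary jointly unknown section where the family $\pi_m$ is used.
\item Check the standing conditions. \Cref{ass:strict_default} and the relevant-action tie-breaking convention must hold on their instances. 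If they fail, I would perturb utilities slightly so that the default action is unique and no action is tie-only, while preserving the instance structure that drives the $\Omega(\log T)$ argument.
\item Conclude that the worst case over $(\tbias,\mu^\ast)$ is at least the worst case over $\mu^\ast$ with $\tbias=1$, which is $\Omega(\log T)$.
\end{enumerate}

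The main obstacle is step 2 together with step 3: making sure the information structure and regularity conventions of the cited lower bound align with ours, so that the reduction is literally valid. If their construction were incompatible, I would fall back on a direct Le Cam or KL argument with two priors $\mu_1,\mu_2$ at distance $\Theta(1/\sqrt{T})$ or a sequence of shrinking scales. That argument would rest on the same fact: without paying a constant per round, the sender can distinguish the priors only through state samples. Informative probing near the unknown persuasion threshold costs regret proportional to the squared gap, which yields the logarithmic rate.
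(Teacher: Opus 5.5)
Your proposal is correct and follows the paper's proof: restricting to $\tbias=1$ makes the receiver exactly Bayesian, so any policy for the jointly unknown model (which is not even told that $\tbias=1$) runs as a policy in the unknown-prior model of \citet{Li_Lin_2025} with identical regret, and their $\Omega(\log T)$ lower bound transfers. Your steps 2--3 (aligning the commit-to-$\mech$ formulation and feedback, and checking the standing conventions) are extra care that the paper's short proof leaves implicit, and the Le Cam fallback is unnecessary---its premise that priors are distinguishable only via state samples is also inaccurate, since a receiver who updates with the true prior reveals information about it through her actions.
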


\begin{proof}
The jointly unknown model strictly contains the unknown-prior model studied in \cite{Li_Lin_2025} as the special case $\alpha^\ast=1$. Indeed, when $\alpha^\ast=1$, the distorted posterior reduces to the Bayesian posterior.

Therefore any learning algorithm for the jointly unknown model induces a learning algorithm for the unknown-prior model. By the $\Omega(\log T)$ lower bound for the unknown-prior problem in \cite{Li_Lin_2025}, the same lower bound must hold for the jointly unknown model.
\end{proof}

\paragraph{Why the general jointly unknown case is difficult.}
Although the lower bound $\Omega(\log T)$ follows immediately from the unknown-prior model by restricting to the special case $\alpha^\ast=1$, obtaining a matching $O(\log T)$ upper bound for the general jointly unknown case appears substantially harder. The main difficulties are as follows.
\begin{itemize}
    \item \textbf{The identification step from unknown-prior model no longer decouples into pairwise prior-ratio estimation.}
    
    In the unknown-prior model of \cite{Li_Lin_2025}, the sender learns $\mu^\ast$ by constructing experiments in which the receiver's action reveals the sign of a linear expression involving only a pair of states. This makes it possible to estimate ratios such as $\mu^\ast(\omega_i)/\mu^\ast(\omega_j)$ and then reconstruct the prior. In the jointly unknown model, however, after a signal inducing Bayesian posterior $\nu$, the receiver evaluates actions using the distorted belief
    $
    (1-\alpha^\ast)\mu^\ast + \alpha^\ast \nu$.
    Hence, for any two actions $a,a'$, the receiver compares
    \begin{align*}
    \sum_{\omega\in\Omega}
    \Bigl((1-\alpha^\ast)\mu^\ast(\omega)+\alpha^\ast \nu(\omega)\Bigr)
    \bigl(u_R(a,\omega)-u_R(a',\omega)\bigr).
    \end{align*}
    Since $\nu$ itself depends on $\mu^\ast$ through Bayes' rule, the sign of this expression depends jointly on the entire prior vector $\mu^\ast$ and on $\alpha^\ast$. Therefore a single experiment no longer isolates a pairwise ratio of prior masses. In particular, the ratio-estimation argument from \cite{Li_Lin_2025} cannot be applied directly. 

    

    \item \textbf{The binary-action reduction from \cite{Li_Lin_2025} does not extend directly once bias is also unknown.}
    
    In the binary-action setting of \cite{Li_Lin_2025}, the sender's optimization problem can be written as a fractional knapsack problem whose value-to-weight ordering is independent of the prior, because the common factor $\mu^\ast(\omega)$ cancels in the ratio. This allows \cite{Li_Lin_2025} to define a universal one-parameter family $\{\pi_M\}$ and to characterize persuasion by a single threshold $M^\ast$. In the jointly unknown biased model, however, the persuasive constraint for recommending action $1$ takes the form
    \begin{align*}
    \sum_{\omega\in\Omega}\mu^\ast(\omega)x_\omega
    \Bigl(\alpha^\ast g_\omega + (1-\alpha^\ast)B(\mu^\ast)\Bigr)\le 0,
    \end{align*}
    where $x_\omega=\pi(1\mid \omega)$, $g_\omega=u_R(0,\omega)-u_R(1,\omega)$, and $B(\mu^\ast)=\sum_{\omega\in\Omega}\mu^\ast(\omega)g_\omega$.
    Thus the effective ``weight'' of state $\omega$ is
    \begin{align*}
    w_\omega(\alpha^\ast,\mu^\ast)
    =
    \alpha^\ast g_\omega + (1-\alpha^\ast)B(\mu^\ast),
    \end{align*}
    which depends jointly on $\alpha^\ast$ and on the full prior vector $\mu^\ast$. Consequently, the state ordering relevant for the greedy argument is no longer fixed ex ante, and there is no direct analogue of the universal family $\{\pi_M\}$ used in \cite{Li_Lin_2025}. Therefore the $O(\log\log T)$ binary-action argument of \cite{Li_Lin_2025} does not extend mechanically to the jointly unknown setting. 

    \item \textbf{The robustification step from the unknown-prior paper is not directly available under joint misspecification of prior and bias.}
    
    The $O(\log T)$ algorithm in \cite{Li_Lin_2025} has an exploitation phase based on robustification: once the sender obtains an estimate $\hat\mu$ of the true prior $\mu^\ast$, she computes a signaling scheme that is optimal for $\hat\mu$ and then modifies it so that it remains persuasive and near-optimal for all priors in a neighborhood of $\hat\mu$. This argument is tailored to perturbations in the prior alone. In the jointly unknown model, however, persuasion is determined by distorted beliefs of the form $
    (1-\alpha)\mu + \alpha \nu$.
    
    Therefore, if the sender only knows an approximate pair $(\hat\mu,\hat\alpha)$, then to carry out an exploitation phase one would need a \emph{joint robustification} result: namely, a theorem showing that a scheme designed for $(\hat\mu,\hat\alpha)$ can be modified so as to remain persuasive and near-optimal uniformly over all nearby pairs $(\mu,\alpha)$. 
\end{itemize}
For these reasons, the techniques of the two papers do not currently yield a general $O(\log T)$ algorithm.
    The lower bound $\Omega(\log T)$ is inherited immediately from the unknown-prior model, but the corresponding upper-bound techniques rely on tools that are specific either to prior uncertainty alone or to a one-dimensional threshold structure. In the general jointly unknown case, neither of these simplifications is currently available. Therefore, while an $O(\log T)$ algorithm may still exist, establishing such a result would require new ideas beyond the current identification and robustification arguments. For this reason, we view the general jointly unknown prior and bias problem as an open question.
\newpage

\end{document}